\documentclass[aps,prxquantum,reprint,superscriptaddress,amsmath,amssymb]{revtex4-2}
\usepackage{blindtext}
\usepackage{mathtools}
\mathtoolsset{showonlyrefs}
\usepackage{amsmath}
\allowdisplaybreaks
\usepackage[colorlinks=true, linkcolor=black, urlcolor=black, citecolor=black]{hyperref}
\usepackage{amssymb}
\usepackage{amsthm}
\usepackage{dsfont}
\usepackage{comment}
\usepackage{bm}        
\usepackage{mathrsfs}  
\usepackage[dvipsnames]{xcolor}
\usepackage{titlesec}
\usepackage{tikz}
\usetikzlibrary{arrows.meta}
\usetikzlibrary{quantikz}
\usepackage{graphicx}
\usepackage{array}      
\usepackage{float}
\usepackage{ulem}
\usepackage{braket}
\usetikzlibrary{positioning,calc}
\usepackage{tcolorbox}
\usepackage{tasks}

\usepackage{appendix}

\allowdisplaybreaks

\newcommand{\MM}{\mathsf{M}}
\newcommand{\pp}{\mathsf{P}}
\newcommand{\e}{\mathsf{E}}
\newcommand{\TT}{\mathsf{T}}

\newcommand{\rank} {\operatorname{rank}}

\newcommand{\M}{{\cal M}}
\newcommand{\T}{{\cal T}}
\newcommand{\PP}{{\cal P}}
\newcommand{\EE}{{\cal E}}

\newcommand{\St}{{\cal S}}
\newcommand{\V}{{\cal V}}
\newcommand{\inprod}[2] {\langle #1 , #2 \rangle}

\newcommand{\poly}{\operatorname{poly}}

\newtheorem{theorem}{Theorem}
\newtheorem{lemma}[theorem]{Lemma} 
\newtheorem{corollary}[theorem]{Corollary} 
\theoremstyle{definition} 

\newtheorem*{ass*}{Assumption}
\newtheorem{ass}{Assumption}
\usepackage{algcompatible}
\usepackage[ruled,vlined,linesnumbered]{algorithm2e}
\usepackage{stfloats}

\floatname{algorithm}{Example}

\theoremstyle{definition}
\newtheorem{example}{Example}
\newtcolorbox{exbox}{
    colback=black!5, 
    colframe=white, 
    breakable 
}
\tcbuselibrary{breakable}

\titlespacing*{\subsection}{0pt}{\baselineskip}{\baselineskip}

\begin{document}
\title{Linear Algebra of Generalized Contextuality in All Prepare-Transform-Measure Scenarios}
\author{Theodoros Yianni}
\affiliation{Department of Computer Science, Royal Holloway, University of London, United Kingdom}
\author{Nyan Raess}
\affiliation{Department of Computer Science, Royal Holloway, University of London, United Kingdom}
\author{Farid Shahandeh}
\affiliation{Department of Computer Science, Royal Holloway, University of London, United Kingdom}

\begin{abstract}
    Generalized contextuality is a canonical distinguishing property of 
    nonclassical generalized probabilistic theories, in particular quantum mechanics.
    Methods for certification and characterization of generalized contextuality of a given generalized probabilistic theory are well developed for prepare-measure and
    single-stage prepare-transform-measure scenarios.
    In a recent work \href{https://arxiv.org/abs/2512.10000}{[arXiv:2512.10000]}, a bottom-up, statistics-first linear-algebraic framework for contextuality in prepare-measure scenarios was introduced.
    We extend 
    this approach to operational scenarios with sequential transformations with an arbitrary number of stages. 
    We give a full decision procedure for contextuality of such scenarios within operational theories and analyze its computational complexity.  In particular, our decision procedure has a complexity linearly exponential in the minimum generalized probabilistic theory (GPT) dimension, and polynomial in the number of procedures.
    We demonstrate our framework and approach through multiple examples, including Spekkens' toy theory and the 8-state single-qubit stabilizer theory.
    In particular, we construct
    an operational theory in which contextuality manifests itself only in
    the sequential structure of the transformations.
    Our findings thus shed new light on the significant role of compositional structures in the phenomenon of generalized contextuality.
\end{abstract}
\maketitle

\section{Introduction}
A central problem of quantum foundations is explaining the separation between classical and nonclassical theories. A canonical distinguishing property of the latter is contextuality. There are several widely used notions and frameworks of contextuality, including Kochen-Specker contextuality~\cite{Koc}, the sheaf-theoretic framework~\cite{Abramsky_2011}, contextuality-by-default~\cite{DzhafarovKujala2016}, state-independent contextuality~\cite{Cabello2008}, and graph-theoretic formulations~\cite{CabelloSeveriniWinter2014}. Interestingly, Bell’s work on nonlocality~\cite{Bell1964,Bell1966}
can also be viewed through the lens of contextuality~\cite{Abramsky_2011,Budroni2022}.

In the current work, we adopt the notion of \textit{generalized contextuality}~\cite{Spekkens_2005}.
It denotes the impossibility of assigning an ontological representation to the experimental procedures in an operational theory, such that the statistical predictions of the theory are reproduced, and operationally indistinguishable procedures are represented identically~\cite{Spekkens_2005,Kunjwal2015}.
This property is known to underlie quantum advantage in various information-processing tasks, encompassing quantum computations~\cite{computation,shahandeh2021advantage,Schmid2022Stabilizer} and quantum communications~\cite{Hameedi2017, Saha_2019,Roy2024,Ambainis2016}.

Many information-processing tasks of interest are naturally described as sequential compositions of operations. In such settings, classical simulability and classicality are often not properties of any single step in isolation, but rather of how the steps compose. 
In such protocols, it may be that each individual stage admits a classical explanation, even though the overall process does not.
Conversely, one can also devise protocols in which the compositional process does not admit a classical explanation, even though such an explanation exists when all the transformations are treated as a single, combined operation.
This motivates studying contextuality in scenarios with multiple transformation stages, where one can ask how contextuality behaves under concatenation and coarse-graining. 
In particular, computation is implemented by circuits whose power is determined by the composition of stages, making multi-stage contextuality a natural candidate property for explaining circuit-level quantum advantage~\cite{shahandeh2021advantage, Howard_2014, PhysRevLett.121.230401}.

Despite its relevance to sequential scenarios, contextuality has been most extensively studied in the simplest possible setting, namely, the prepare-measure (PM) scenario, in which the theory contains a finite number of ways to prepare and measure the system. Many tools already exist to probe the contextuality of PM scenarios. Typically, these rely
on a pre-defined \textit{generalized probabilistic theory} (GPT) model of the operational theory.
One can then use linear programming to decide whether a noncontextual ontological model (NCOM) of the GPT exists~\cite{Selby_2024,Gitton_2022}. 
A contrasting approach employed in Ref.~\cite{NCOM-OPTs} also discussed noncontextuality of operational probabilistic theories.
However, in Refs.~\cite{ourpaper,shahandeh2025unifiedlinearalgebraicframework}, a bottom-up, statistics-first framework was developed to characterize contextuality of PM scenarios within operational theories.
In this framework, the statistical structure of
a theory is represented in a matrix $C$ of \textit{conditional outcome-probability of events} (COPE matrix). Then it was shown
that contextuality of a PM scenario relies on a single rank-based criterion of $C$, without reference to an underlying GPT.

In this work, we develop a rank-based characterization of PTM scenarios with an arbitrary number of transformation stages by generalizing the COPE formalism of Refs.~\cite{ourpaper,shahandeh2025unifiedlinearalgebraicframework}.
In doing so, we replace the COPE matrix with a COPE tensor of appropriate order.
We demonstrate, among other things, that ontological models of PTM scenarios specify factorizations of the COPE tensor. 
We then show that an ontological model is noncontextual if and only if the matrix factors satisfy a set of rank constraints.

We provide an algorithm for checking whether the criteria are satisfiable and perform a complexity analysis that precisely characterizes the efficiency of checking the sequential criteria.
More specifically, we show that noncontextuality of any PTM scenario with an arbitrary number of transformations in an operational theory can be decided in polynomial time in the number of states, transformations, and measurement outcomes, for fixed minimal GPT dimension.
However, the complexity is exponential in the ranks of the COPE tensor mode flattenings.

A recent study by Schmid \textit{et al.}~\cite{Schmid_2024} characterized the contextuality of GPTs in the prepare-transform-measure scenario (PTM) with a single transformation stage. Similar to typical PM scenario methods, this relied on an underlying GPT model and linear programming.  
Moreover, the composition of multi-stage transformations was not addressed.
More broadly, Ref.~\cite{Schmid2024} developed a process-theoretic framework for generalized contextuality in arbitrary compositional operational theories and proved a structure theorem for it.
Our approach differs from these works
in three key respects:
we are able to determine the contextuality of scenarios with an arbitrary number of transformations; we only require the statistics of the operational theory, together with minimal assumptions, to do so; and we give a precise account of the problem's complexity.

We illustrate our methods on a range of worked examples.
First, we recover the expected behaviour for Spekkens' toy theory, as well as an 8-state theory for single-qubit stabilizers with a single stage of transformations. For the latter, the rank criterion detects only the presence of transformation contextuality, corroborating the result of Ref.~\cite{Lillystone2019}.
We also contribute a 4-state toy theory
in which consecutive transformations induce contextuality which is not present in the single-transformation-stage setting, demonstrating that sequential structure is a source of contextual behavior. Finally, we revisit the 8-state theory to examine how the contextuality of a scenario responds to restrictions on the available transformation procedures.

Our results provide a general framework for detecting contextuality in operational scenarios that go beyond the prepare-measure setting, including multi-stage transformation structures.  We view this as a step toward a circuit-level understanding of nonclassicality, and 
connecting operational notions of contextuality with known requirements for quantum advantage in information processing.

The paper is organized as follows. In Sec.~\ref{sec:OP_formalism} we give an overview of the preliminary concepts and definitions, including operational theories and the COPE tensor formalism. In Sec.~\ref{sec:models}, we introduce the different types of models of operational theories, including GPTs and NCOMs. We then give an explicit construction of a GPT for a single-stage $\pp\TT^1\MM$, and show that no smaller GPT model exists. We subsequently devise a method of deciding if an NCOM exists in this scenario using our formalism, and constructing one if it does. In Sec.~\ref{sec:sequential}, we consider scenarios with sequential  transformations, and once again construct the smallest possible GPT. Moreover, we generalize our method for deciding contextuality to sequential PTM scenarios.
Finally, in Sec.~\ref{sec:application}, we use our rank criteria to characterize the contextuality of specific toy models for scenarios including transformations.
Discussions and conclusions are presented in Sec.~\ref{sec:conclusion}.

\section{Formalism} \label{sec:OP_formalism}
The groundwork for the fundamental concept in this paper, namely representation-free \textit{operational theories}, has been laid out in Refs.~\cite{ourpaper,shahandeh2025unifiedlinearalgebraicframework}.
In this picture, each
operational theory has three structures: an \textit{operational} one, a \textit{causal} one (including sequential and parallel compositions), and a \textit{probabilistic} one. 
In this section, we recap these structures, focusing primarily on the causal structure and, above all, on the sequential composition of transformations.

\subsection{Operational Structure}

The primitive elements of operational theories are lists of laboratory instructions.
For example, ``turn the laser on AND adjust its intensity as such'' is an instruction for preparing a photonic system.
Similarly, propositions such as ``turn the detector on AND align its aperture as such'', associated with measurements of the system, are measurement instructions.
Intermediate propositions such as ``add a polarizer at this point in the photons' path'' are instructions for transforming such a system. The composition of valid propositions in an operational theory produces a valid sentence in the theory.

The scenarios we consider thus refer to the compositional structure of the sentences in the operational theory. In particular, the PTMs consist of a preparation, a single transformation, and a measurement. These are ternary propositional schemas in the theory.
The operational structure thus consists of the prescriptions (propositions) for preparing, transforming, and measuring the system, denoted by the sets $\PP:=\{\pp_i\}$, $\T^1:=\{\TT^1_i\}$, and $\M:=\{\MM_j\}$, respectively.
Later, we will extend this to an arbitrary number of transformation stages ($n$-ary propositional schemas)
where the sets of transformation instructions are given by $\T^1:=\{\TT^1_i\}$, $\T^2:=\{\TT^2_i\}$, ..., $\T^m:=\{\TT^m_i\}$.
In all scenarios, a measurement $\MM_j$ results in an outcome event $\e^j_k$ from a list of possible outcomes indexed by $\EE = \{\e^j_k\}$. In our approach, the sets $\PP$, $\T^i$ and $\M$ are the \textit{atomic} propositions from which all sentences and probabilistic combinations are constructed.

\subsection{Probabilistic Structure}
The probabilistic structure of PTM scenarios in an operational theory is provided by a probability measure $p$ that takes every possible PTM sentence to a real number in the unit interval.
The image of $p$ is thus the collection of conditional outcome probabilities associated with each sentence in the theory.
We can organize these probabilities into a tensor $C$, the entries of which are the probabilities $p(k|i,j,l):=p(\e^l_k|\pp_i,\TT^1_j, \MM_l)$ of each outcome $\e^l_k$, given that the system was prepared, transformed and measured according to $\pp_i,\TT^1_j$ and $\MM_l$ respectively. In a $\pp\TT^1\MM$ scenario, $C$ is therefore an $n_\e^\MM n_\MM \times n_{\TT^1} \times n_\pp$ 3-tensor of \textit{conditional outcome probabilities of events} (COPE), where we assume $n_\pp$ preparations, $n_{\TT^1}$ transformations, $n_\MM$ measurements, and without loss of generality that each of the $n_\MM$ measurements has $n_\e^\MM$ outcome events. For simplicity, we denote by $n_\e$ the total number of outcome events.

A COPE 3-tensor is thus the natural extension of the COPE \textit{matrix} of a prepare-measure scenario introduced in Refs.~\cite{ourpaper,shahandeh2025unifiedlinearalgebraicframework}.
We have depicted the schematic of a COPE 3-tensor in Fig. \ref{fig:placeholder}. 
The tensor is divided into column stochastic blocks corresponding to each measurement setting, which are stacked on top of each other.
Even though each probability depends on four indices, we have introduced a joint index which runs over measurements and outcome events, which allows us to represent the COPE tensor as a 3-tensor. Accordingly, we will often label effects simply as $\e_k$, where $k$ is the joint index; we will list them as $\e_k^j$ only when we wish to clarify that an effect is associated with the $j$th measurement $\MM_j$.

Each dimension of the tensor is referred to as a \textit{mode}, and fixing any index produces a slice of the tensor in that mode. For instance, 
each choice of transformation $\TT^1_j$ corresponds to an $n_\e \times n_\pp$ COPE matrix, which is a mode-$\TT^1$ slice given by
\begin{gather}\label{eq:COPE}
C_{:j:} := 
\begin{pmatrix}
\begin{array}{c}
   \begin{matrix}
p(1|1,j,1) & \cdots & p(1|I,j,1) \\
        \vdots &   & \vdots  \\
        p(n_1|1,j,1) & \cdots & p(n_1|I,j,1)
   \end{matrix}\\ 
   \hline
    \vdots \\
   \hline
   \begin{matrix}
        p(1|1,j,K) & \cdots & p(1|I,j,K)\\
        \vdots & & \vdots \\
        p(n_K|1,j,K) & \cdots & p(n_K|I,j,K)
   \end{matrix} \\
\end{array}    
\end{pmatrix}.
\end{gather}
We can see that each of these slices can be thought of as representing a prepare-measure experiment, where the fixed transformation has been absorbed into either the preparations or measurements. Similarly, each mode-$\pp$ slice represents a fixed choice of preparation $\pp$, and each mode-$\e$ slice represents a fixed choice of outcome event $\e$. 

\begin{figure}[h]
    \centering
    \includegraphics[width=0.8\linewidth]{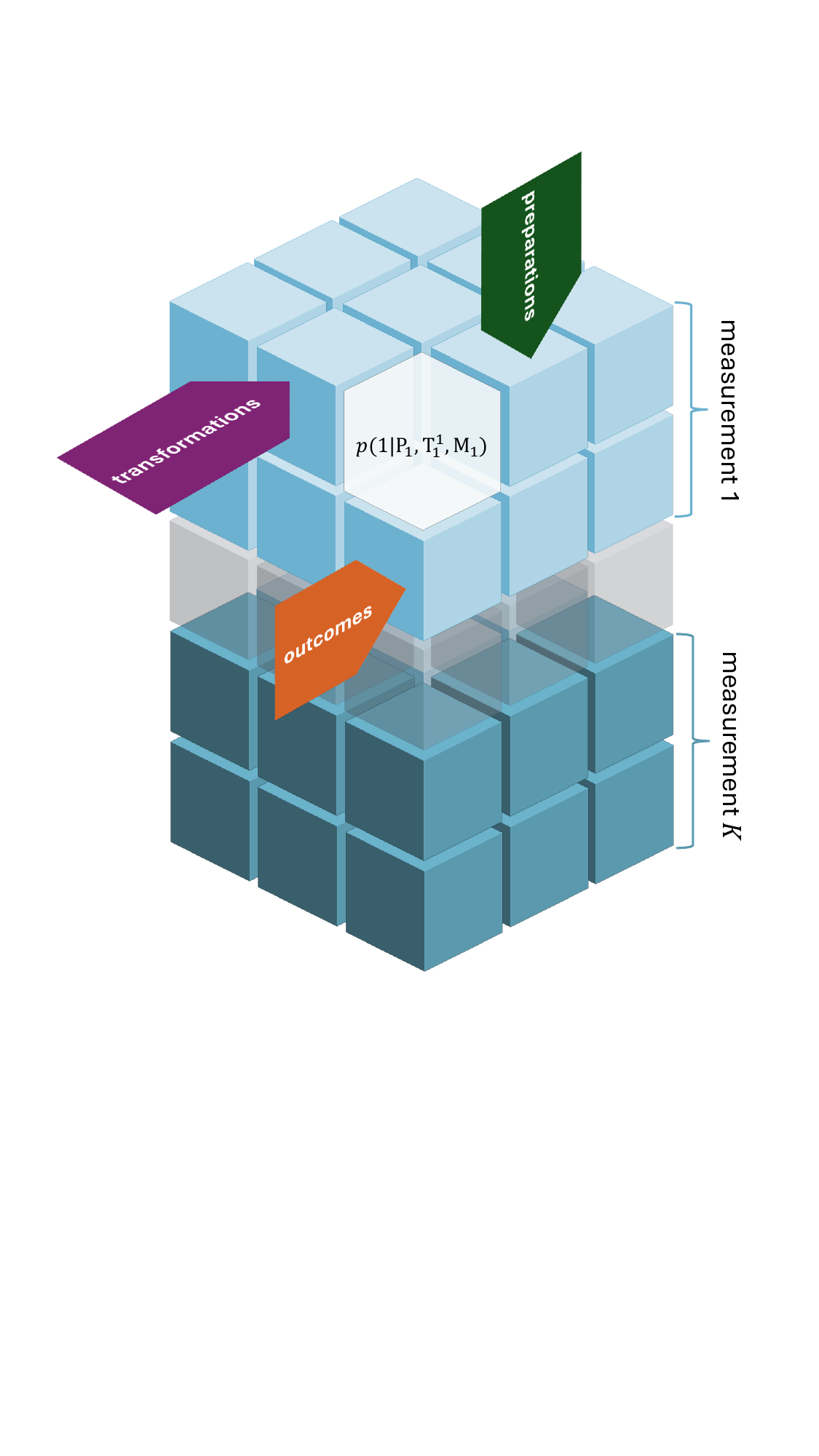}
    \caption{Diagram of a COPE 3-tensor. Each axis corresponds to a different set of atomic propositions. Each entry is the conditional-outcome probability associated with a particular choice of atomic propositions, i.e. sentence, in the operational theory. }
    \label{fig:placeholder}
\end{figure}

The assumptions used to define the probabilistic structure of the $\pp\TT^1\MM$ scenario are the natural extensions of 
those used to define the probabilistic structure of the PM scenario in Ref.~\cite{shahandeh2025unifiedlinearalgebraicframework}.
In particular, we have the following:
\begin{ass}\label{ass:COPE_comp}
    The COPE tensor contains the complete probabilistic structure of the operational theory.
\end{ass}
\noindent In the following, we employ a 2D toy theory as a running example to clarify the concepts and approach.

\begin{exbox}
\begin{example}\label{ex:Spekkens_COPE}                    
\textbf{2D toy theory:
Probabilistic structure.}                                    
Imagine a world which accommodates an \textit{elementary} system:
There are
four propositions describing how to prepare the system, four propositions describing how to transform the system after preparation, and one proposition for measuring it, which yields four outcomes.
We thus have $\PP=\{\pp_1,\pp_2,\pp_3,\pp_4\}$, $\mathcal{T}^1=\{\TT^1_1,\TT^1_2,\TT^1_3,\TT^1_4\}$ and $\mathcal{E}=\{\e_1,\e_2,\e_3,\e_4\}$.
The COPE tensor has mode-$\TT^1$ slices given by
\begin{equation}\label{eq:STT_COPE}
\begin{split}
 C^{2D}_{:1:}=\frac{1}{8}
\begin{pmatrix}
1 & 3 & 3 & 1\\
1 & 1 & 3 & 3\\
3 & 1 & 1 & 3\\
3 & 3 & 1 & 1
\end{pmatrix},
\end{split}
\end{equation}
\begin{equation}
\begin{split}
C^{2D}_{:2:}=\frac{1}{8}
\begin{pmatrix}
3 & 3 & 1 & 1\\
1 & 3 & 3 & 1\\
1 & 1 & 3 & 3\\
3 & 1 & 1 & 3
\end{pmatrix},
\end{split}
\end{equation}
\begin{equation}
\begin{split}
C^{2D}_{:3:}=\frac{1}{8}
\begin{pmatrix}
3 & 1 & 1 & 3\\
3 & 3 & 1 & 1\\
1 & 3 & 3 & 1\\
1 & 1 & 3 & 3
\end{pmatrix},
\end{split}
\end{equation}
\begin{equation}
\begin{split}
C^{2D}_{:4:}=\frac{1}{8}
\begin{pmatrix}
1 & 1 & 3 & 3\\
3 & 1 & 1 & 3\\
3 & 3 & 1 & 1\\
1 & 3 & 3 & 1
\end{pmatrix}.
\end{split}
\end{equation}
Each of these matrices encodes the statistics associated with all preparations and outcome events for a given transformation.
We note that each slice is given by cyclic shifts of the rows of the previous one.
Note also that all rows and columns of each slice are convexly independent as required by a COPE matrix~\cite{shahandeh2025unifiedlinearalgebraicframework}.
\end{example}
\end{exbox}

\subsection{Operational equivalences}

From the probabilistic structure, we define a notion of operational equivalence of procedures.
The operational equivalence $(\simeq)$ of preparations (measurement events) means that there are no combinations of transformations and measurement events (preparations) in the operational theory that can distinguish them~\cite{Spekkens_2005}, i.e.,
\begin{equation} \label{eq:preparationmeasurementeqv}
\begin{split}
\pp_i \simeq \pp_j \; \text{iff} \; p(\e| \pp_i, \TT^1, \MM) \!&=\! p(\e|\pp_j,\TT^1, \MM) \quad \forall \  \e, \MM, \TT^1,\\
\e_k \simeq \e_l \; \text{iff} \; p(\e_k| \pp,\TT^1, \MM) \!&=\! p(\e_l|\pp,\TT^1, \MM) \quad \forall \  \pp,\TT^1.
\end{split}
\end{equation}
Similarly, two transformations are operationally equivalent if and only if there are no combinations of preparations and measurements that can distinguish them, i.e.,
\begin{equation} \label{eq:transformationeqv}
\begin{split}
\TT^1_i \simeq \TT^1_j \; \text{iff} \; p(\e| \pp, \TT^1_i, \MM) = p(\e|\pp, \TT^1_j, \MM) \quad \forall \  \pp, \e, \MM.
\end{split}
\end{equation}
It follows that two operationally equivalent procedures give rise to two identical slices in the theory's COPE tensor.
Then, Assumption~\ref{ass:COPE_comp} leads to the following observation.
\begin{corollary}\label{cor:IDRC}
    For two identical slices of the COPE tensor of an operational theory, corresponding to two choices of the same procedure type, there exist no combinations of the other procedure types that can separate them.
\end{corollary}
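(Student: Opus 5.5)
The argument is short: we combine Assumption~\ref{ass:COPE_comp} with the definitions of operational equivalence in Eqs.~\eqref{eq:preparationmeasurementeqv} and~\eqref{eq:transformationeqv}. We treat the three procedure types one at a time, and each case has the same form. Take two identical mode-$\TT^1$ slices, $C_{:i:}=C_{:j:}$. Entrywise, this says $p(\e|\pp,\TT^1_i,\MM)=p(\e|\pp,\TT^1_j,\MM)$ for every preparation $\pp$ and every joint measurement--outcome index $(\e,\MM)$ that appears in the tensor. The preparation and measurement-event cases follow in the same way from identical mode-$\pp$ slices and identical mode-$\e$ slices, respectively.

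The main step is to extend "no separation by the atomic procedures indexed in $C$" to "no separation by any combination of the other procedure types". We would argue that, by Assumption~\ref{ass:COPE_comp}, every combination available in the theory has its statistics fixed by the entries of $C$. This covers sentences built from the atomic propositions $\PP$, $\T^1$, $\M$, and also probabilistic mixtures and coarse-grainings of them. Concretely, a mixture $\sum_a w_a \pp_a$ of preparations, or a coarse-grained effect $\sum_b \e_b$, produces a probability that is a fixed (multi)linear function of tensor entries, taken along the modes other than the one being compared. So if two slices agree entrywise, every such function takes the same value on them. Hence no combination of the remaining procedures can separate them, which is exactly the definition of operational equivalence in Eqs.~\eqref{eq:preparationmeasurementeqv}--\eqref{eq:transformationeqv}.

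The main obstacle is making precise what "combinations" means. The paper has not yet formalized the probabilistic structure of mixtures or coarse-grainings. We would therefore read Assumption~\ref{ass:COPE_comp} as saying exactly this: the statistics of every admissible sentence in the theory are determined linearly by $C$. Under that reading, linearity of the probability rule in each argument gives the claim directly.
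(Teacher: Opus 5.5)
Your proposal is correct and follows essentially the paper's route. The paper treats the corollary as an immediate consequence of Assumption~\ref{ass:COPE_comp} together with the definitions of operational equivalence in Eqs.~\eqref{eq:preparationmeasurementeqv}--\eqref{eq:transformationeqv}: identical slices agree on every entry indexed by the other procedure types, and completeness of $C$ rules out any further separating combination. Your reading of that assumption, namely that statistics of mixtures and coarse-grainings are fixed (multi)linear functions of the relevant slice, only makes explicit a step the paper leaves implicit.
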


\begin{exbox}
\begin{example}\label{ex:equive_ops}
\textbf{2D toy theory:
Operational equivalences.}          
    Consider the slices in Example~\ref{ex:Spekkens_COPE}, assuming that the set of allowed transformations is closed under convex combinations. 
    We have,
\begin{equation}
    \frac{1}{2}( C^{2D}_{:1:}+ C^{2D}_{:3:})=\frac{1}{2}( C^{2D}_{:2:}+ C^{2D}_{:4:}),
\end{equation}
implying that,
\begin{equation}
    \frac{1}{2}(\TT^1_1+\TT^1_3)\simeq\frac{1}{2}(\TT^1_2+\TT^1_4).
    \label{eq:ex2equiv}
\end{equation}
The procedures on the LHS and RHS of~\eqref{eq:ex2equiv} amount to randomly selecting and implementing $\TT^1_1$ or $\TT^1_3$, and $\TT^1_2$ or $\TT^1_4$ respectively.
\end{example}
\end{exbox}

\section{Models of COPE 3-Tensors} \label{sec:models}

Throughout this work, following the approach introduced in Ref.~\cite{shahandeh2025unifiedlinearalgebraicframework}, we use the term model to denote any mathematical framework that reproduces the statistics of an operational theory. Accordingly, we assign mathematical objects to propositions of the operational theory, together with composition rules mirroring those of the \textit{scenarios}--theory’s schemas.
In this section, we introduce three different model types and analyze their relationships. These are then used to study the nonclassicality of the operational theory.

\subsection{preGPT and GPT Models of $\pp\TT^1\MM$ Scenario}

\label{sec:preGPTPTM}

The first two models we consider are preGPTs and GPTs, which we extend from Ref.~\cite{ourpaper}. 
A preGPT for a $\pp\TT^1\MM$ scenario is the tuple $(\EE_{\rm pGPT},\T^1_{\rm pGPT},\St_{\rm pGPT},\V,\inprod{\cdot}{\cdot})$ where $\V$ is an ordered inner-product vector space with inner product $\inprod{\cdot}{\cdot}$, and $\EE_{\rm pGPT}$, $\T^1_{\rm pGPT}$ and $\St_{\rm pGPT}$ are its sets of effects, transformations, and states, respectively, constructed as follows.

We define the function $\mathcal{K}_\e$ which maps each event $\e_k$ to a vector $\mathcal{K}_\e(\e_k)$ in $\mathcal{V}$. A collection of these vectors belonging to a fixed measurement form a {\it probability vector-valued measure} (PVVM) satisfying
\begin{equation} \label{eq:PVVM}
\begin{split}
    & \forall \  \e^j_k\in\mathcal{E} \quad \mathcal{K}_{\e}(\e^j_k){\geqslant} 0,\\
    & \exists ! u \in \V \text{~s.t.~} \sum_{k} \mathcal{K}_{\e}(\e^j_k){=}u,\\
    &  \forall \  j \quad \mathcal{K}_{\e}({\cup}_k \e^j_k){=}\sum_k \mathcal{K}_{\e}(\e^j_k).
\end{split}    
\end{equation}
Each vector $\mathcal{K}_{\e}(\e^j_k)$ is called an effect, $u$ is called the unit effect of the preGPT, and the collection of all operationally legitimate effects is denoted by $\EE_{\rm pGPT}$.
Further, by the convexity of the set of all events $\mathcal{E}$ and the linearity of the map $\mathcal{K}_{\e}$, it is assumed that $\EE_{\rm pGPT}$ is convex. 
The set of extremal points of $\EE_{\rm pGPT}$ is given by ${\rm ex}\EE_{\rm pGPT}$, and similarly for $\T^1_{\rm pGPT}$ and $\St_{\rm pGPT}$.

We correspondingly define $\mathcal{K}_{\pp}$ as the function that maps preparations of the operational theory to linear functionals in $\V^*$, the dual of $\V$.
These vectors are called the preGPT states.
It is assumed that the states are normalized in the sense that $\inprod{u}{\mathcal{K}_{\pp}(\pp)}=1$.
Finally, denote by $\mathcal{K}_{\TT^1}$ the function that maps transformations of the operational theory to linear transformations over $\V^*$.
We assume that
${\mathcal{K}_{\TT^1}(\TT^1_j)[\mathcal{K}_{\pp}(\pp_i)]}$ is a valid state for every transformation $\TT^1_j$ and every preparation $\pp_i$.

The preGPT states, effects, and transformations are denoted by,
\begin{equation}
    \begin{split}
        &\mathcal{K}_{\pp}(\pp_i)=s^i,\\
        &\mathcal{K}_{\e}(\e^l_k)=e^{lk},\\
        &\mathcal{K}_{\TT^1}(\TT^1_j)=T^{1j},
    \end{split}
\end{equation}
respectively.
To clarify, we use upper indices to label elements of the operational theory, while lower indices act as matrix indices. For instance, $T^{1j}_{ik}$ is the element at position $(i,k)$ in the matrix $T^{1j}$, which represents the $j$th transformation in the first stage.

The probability of an outcome event is given by the inner product
\begin{equation} \label{eq:GPT_pob_rule}
\begin{split}
    p(k|\pp_i,\TT^1_j,\MM_l) &= \inprod{\mathcal{K}_{\e}(\e^l_k)}{\mathcal{K}_{\TT^1}(\TT^1_j)[\mathcal{K}_{\pp}(\pp_i)]}\\
    &:=\inprod{e^{lk}}{T^{1j}[s^i]}.
\end{split}
\end{equation}
By choosing an orthonormal basis for $\mathcal{V}$
we write the probability rule as,
\begin{equation}
    p(k|\pp_i,\TT^1_j,\MM_l) = e^{lk\top} T^{1j} s^{i}.
\end{equation}
Furthermore, by equivalently considering transformations as acting on effects rather than states, we require that transformations preserve the unit effect:
\begin{equation}
    u^\top T^{1j}=u^\top.
\end{equation}
Importantly, the probability is linear in the state vectors, transformations, and effect vectors.
The maps from procedures to preGPT elements are necessarily linear, although they can be set-valued~\cite{shahandeh2025unifiedlinearalgebraicframework}. That is, rather than mapping procedures to unique preGPT elements, they may also map them to subsets of these elements.  For a procedure $\mathsf{O} \in\{\mathcal{P},\mathcal{T},\mathcal{E}\}$ which is probabilistically dependent on other procedures of the same type, for instance, $\mathsf{O}=a\mathsf{O}_j+(1-a)\mathsf{O}_k$, it is always possible to construct a preGPT model that preserves convexity as $\mathcal{K}_{\mathsf{O}}(\mathsf{O}_i)=a\mathcal{K}_{\mathsf{O}}(\mathsf{O}_j)+(1-a)\mathcal{K}_{\mathsf{O}}(\mathsf{O}_k)$. 

We are now in a position to show how a preGPT model of a PTM scenario is constructed from the theory's COPE tensor. For simplicity denote the effects by $e^i$, where it is understood that the single index $i$ runs over outcomes of all measurements. Defining the matrices of effects, $A_{i:}=e^{i\top}$, and states, $B_{:i}=s^i$, a preGPT of the scenario is given by a decomposition of the COPE tensor $C$ as
\begin{equation}\label{eq:C_dec_PTM}
    C_{ijl}=A_{i:}T^{1j}B_{:l}.
\end{equation}

\begin{exbox}
\begin{example}\label{ex:rebit_preGPT}    
\textbf{2D toy theory: PreGPT}          
Consider the COPE tensor $C^{2D}$ of Example~\ref{ex:Spekkens_COPE}.
A possible preGPT model is given by,
\begin{equation*}
    A =
\begin{pmatrix}
1 & 0 & 0 & 0 & \tfrac{1}{2}\\[2pt]
0 & 1 & 0 & 0 & -\tfrac{1}{2}\\[2pt]
0 & 0 & 1 & 0 & \tfrac{1}{2}\\[2pt]
0 & 0 & 0 & 1 & -\tfrac{1}{2}
\end{pmatrix},
\end{equation*}
\begin{equation*}
    B =
\begin{pmatrix}
1 & 0 & 0 & 0\\[2pt]
0 & 1 & 0 & 0\\[2pt]
0 & 0 & 1 & 0\\[2pt]
0 & 0 & 0 & 1\\[2pt]
\tfrac{1}{2} & \tfrac{1}{2} & -\tfrac{1}{2} & -\tfrac{1}{2}
\end{pmatrix},
\end{equation*}
\begin{equation*}
    T^{11} =
\begin{pmatrix}
\frac{5}{32} & \frac{13}{32} & \frac{11}{32} & \frac{3}{32} & -\frac{1}{16}\\[2pt]
\frac{3}{32} & \frac{3}{32} & \frac{13}{32} & \frac{13}{32} & \frac{1}{16}\\[2pt]
\frac{13}{32} & \frac{5}{32} & \frac{3}{32} & \frac{11}{32} & -\frac{1}{16}\\[2pt]
\frac{11}{32} & \frac{11}{32} & \frac{5}{32} & \frac{5}{32} & \frac{1}{16}\\[2pt]
-\frac{1}{16} & -\frac{1}{16} & \frac{1}{16} & \frac{1}{16} & \frac{1}{8}
\end{pmatrix},
\end{equation*}
\begin{equation*}
    T^{12} =
\begin{pmatrix}
\frac{5}{12} & \frac{5}{12} & \frac{1}{12} & \frac{1}{12} & -\frac{1}{12}\\[2pt]
\frac{1}{12} & \frac{1}{3} & \frac{5}{12} & \frac{1}{6} & \frac{1}{12}\\[2pt]
\frac{1}{6} & \frac{1}{6} & \frac{1}{3} & \frac{1}{3} & -\frac{1}{12}\\[2pt]
\frac{1}{3} & \frac{1}{12} & \frac{1}{6} & \frac{5}{12} & \frac{1}{12}\\[2pt]
-\frac{1}{12} & -\frac{1}{12} & \frac{1}{12} & \frac{1}{12} & \frac{1}{6}
\end{pmatrix},
\end{equation*}
\begin{equation*}
    T^{13} =
\begin{pmatrix}
\frac{7}{16} & \frac{3}{16} & \frac{1}{16} & \frac{5}{16} & -\frac{1}{8}\\[2pt]
\frac{5}{16} & \frac{5}{16} & \frac{3}{16} & \frac{3}{16} & \frac{1}{8}\\[2pt]
\frac{3}{16} & \frac{7}{16} & \frac{5}{16} & \frac{1}{16} & -\frac{1}{8}\\[2pt]
\frac{1}{16} & \frac{1}{16} & \frac{7}{16} & \frac{7}{16} & \frac{1}{8}\\[2pt]
-\frac{1}{8} & -\frac{1}{8} & \frac{1}{8} & \frac{1}{8} & \frac{1}{4}
\end{pmatrix},
\end{equation*}
\begin{equation}
    T^{14} =
\begin{pmatrix}
\frac{1}{4} & \frac{1}{4} & \frac{1}{4} & \frac{1}{4} & -\frac{1}{4}\\[2pt]
\frac{1}{4} & 0 & \frac{1}{4} & \frac{1}{2} & \frac{1}{4}\\[2pt]
\frac{1}{2} & \frac{1}{2} & 0 & 0 & -\frac{1}{4}\\[2pt]
0 & \frac{1}{4} & \frac{1}{2} & \frac{1}{4} & \frac{1}{4}\\[2pt]
-\frac{1}{4} & -\frac{1}{4} & \frac{1}{4} & \frac{1}{4} & \frac{1}{2}
\end{pmatrix}.
\end{equation}

The unit effect of this preGPT is given by the sum of the rows of $A$, which is $u=\begin{pmatrix}
    1&1&1&1&0
\end{pmatrix}$.
\end{example}
\end{exbox}

A common formalism in the multilinear algebra and tensor decomposition literature is to factorize tensors in Tucker form.
A Tucker decomposition of a 3-tensor $C \in \mathbb{R}^{n_\e \times n_{\TT^1} \times n_\pp}$ is
\begin{equation}
C = \mathcal{G}\times_1 U \times_2 V \times_3 W,
\end{equation}
where $\mathcal{G}\in\mathbb{R}^{d_1\times d_2\times d_3}$ is called the \textit{core tensor} and
$U\in\mathbb{R}^{n_\e\times d_1}$, $V\in\mathbb{R}^{n_{\TT^1}\times d_2}$, and $W\in\mathbb{R}^{n_\pp\times d_3}$ are the factor matrices.
Equivalently,
\begin{equation} \label{eq:tucker}
C_{ijl}=\sum_{\alpha=1}^{d_1}\sum_{\beta=1}^{d_2}\sum_{\gamma=1}^{d_3}
g_{\alpha\beta\gamma}\,U_{i\alpha}\,V_{j\beta}\,W_{l\gamma}.
\end{equation}
The preGPT in Eq.~\eqref{eq:C_dec_PTM} is thus 
a Tucker decomposition of the form of Eq.~\eqref{eq:tucker} wherein
\begin{equation}
\begin{split}
  A&=U, \quad  T^{1j}_{\alpha\gamma} := \sum_\beta  g_{\alpha\beta\gamma}V_{j\beta}, \quad  B=W^\top.
  \label{eq:preGPTCOPEfactor}
  \end{split}
\end{equation}
The rows of $A$ are the effects, each $T^{1j}$ is a transformation matrix, and the columns of $B$ are the states. Conversely, any Tucker decomposition of the form given in Eq.~\eqref{eq:tucker} can be converted into a preGPT model as demonstrated below.
\begin{lemma}
\label{lemma:preGPT_Tucker}
Consider a COPE $3$-tensor $C$ with a Tucker decomposition of the form given in Eq.~\eqref{eq:tucker}. This decomposition can always be converted into a preGPT model for the $\pp\TT^1\MM$ scenario.
\end{lemma}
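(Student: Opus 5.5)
Given a Tucker decomposition $C=\mathcal{G}\times_1 U\times_2 V\times_3 W$, I will read off candidate preGPT objects exactly as in Eq.~\eqref{eq:preGPTCOPEfactor}. Set $A:=U$, $B:=W^\top$ and $T^{1j}_{\alpha\gamma}:=\sum_\beta g_{\alpha\beta\gamma}V_{j\beta}$, and take $\V=\mathbb{R}^{d}$ with the standard inner product. If $d_1\neq d_3$, first pad the smaller of $U,W$ with zero columns and $\mathcal{G}$ with zero slices so that both are $d:=\max(d_1,d_3)$. Substituting into Eq.~\eqref{eq:tucker} gives $C_{ijl}=A_{i:}T^{1j}B_{:l}$, so the probability rule \eqref{eq:GPT_pob_rule} reproduces the COPE tensor by construction. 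Effects are the rows of $A$, states the columns of $B$, and transformations the matrices $T^{1j}$. Extending these maps linearly, or convexly where procedures are mixtures as discussed before Eq.~\eqref{eq:C_dec_PTM}, gives the convexity requirements.

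The remaining work is the structural conditions: a unit effect $u$ with $\sum_k e^{jk}=u$ for every measurement $\MM_j$, normalization $u^\top s^i=1$, and $u^\top T^{1j}=u^\top$. These are not automatic for an arbitrary factorization, so I will change the factorization without changing $C$.

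\textbf{Step 1: make the factors full rank.} I replace $U$ by a full-column-rank factor and $W$ by a full-row-rank $B$. To do this, factor $U=U'R$ with $U'$ of full column rank and absorb $R$ into the core; do the same for $W$. After this step the column space of $A$ equals the column space of the mode-$\e$ flattening $C_{(1)}$.

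\textbf{Step 2: show a common unit effect exists.} Column stochasticity of each measurement block gives $\sum_k C_{(jk),\cdot}=\mathbf{1}^\top$, and this holds for every measurement $j$. Writing $\sum_k A_{(jk),:}=:u_j^\top$, we get $u_j^\top T^{1j'}B=\mathbf{1}^\top$ for all $j'$.

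\textbf{Step 3: fix the unit-preservation condition.} If $u_j$ differ, or $u_j^\top T\ne u_j^\top$ off the relevant subspace, I will choose a basis change $A\mapsto AM$, $T\mapsto M^{-1}TM$, $B\mapsto M^{-1}B$ within the relevant subspaces. Alternatively, I will enlarge $\V$ by one dimension and add a unit component to the states, with $T^{1j}$ acting trivially on it. This mirrors the extra fifth coordinate in Example~\ref{ex:rebit_preGPT}.

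I expect the main obstacle to be unit preservation, $u^\top T^{1j}=u^\top$. The data only constrain $u^\top T^{1j}$ on the span of the states $B$, not on all of $\V$. The plan is to use the freedom in $T^{1j}$ outside the image and span of $B$ and $A$. Concretely, decompose $\V=\operatorname{span}(B)\oplus K$ and redefine $T^{1j}$ on $K$ so that $u^\top T^{1j}=u^\top$ there. This changes no entry of $C$, because $T^{1j}$ is only ever applied to states in $\operatorname{span}(B)$.

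On $\operatorname{span}(B)$ itself, unit preservation follows from Step 2. There $u^\top T^{1j}s=1=u^\top s$ on every state. Since normalization is affine and the states span $\operatorname{span}(B)$, this equality extends to all of $\operatorname{span}(B)$, provided normalization holds for each generating state.

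Positivity of effects is then inherited from the ordering induced on $\V$ by the cone generated by the states, as in Ref.~\cite{ourpaper}. With that, the tuple is a preGPT.
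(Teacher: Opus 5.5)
There is a genuine gap at the step you flag as the main obstacle. The COPE data constrain only $u^\top T^{1j}s^i=1$. They say nothing about $u^\top s^i$, because the tensor only ever pairs a state $B_{:i}$ with some $T^{1j}$. In a Tucker decomposition, the effect-side bond ($A$--$T^{1j}$, dimension $d_1$) and the state-side bond ($T^{1j}$--$B$, dimension $d_3$) carry independent gauge freedoms. Your equality ``$u^\top T^{1j}s=1=u^\top s$'' therefore presupposes the normalization $u^\top s^i=1$, which is one of the conditions you listed as still to be established.

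It already fails for $C=[1]$ with $U=2$, $g=1/2$, $V=W=1$. Then $A=2$, $T^{11}=1/2$ and $B=1$, so $u=2$, $u^\top T^{11}=1\neq u^\top$ and $u^\top B=2$. Here $\operatorname{span}(B)$ is the whole space, so there is no complement $K$ to exploit. The similarity in your Step~3 cannot repair this. Under $A\mapsto AM$, $T\mapsto M^{-1}TM$, $B\mapsto M^{-1}B$ we get $u^\top\mapsto u^\top M$ and $u^\top T\mapsto u^\top TM$, so $u^\top T=u^\top$ holds afterwards iff it held before.

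The missing idea is a \emph{one-sided} change of basis on the state bond, which is what the paper uses. First make $v^\top:=u^\top T^{1j}$ independent of $j$ on all of $\V$. Your redefinition of $T^{1j}$ on $K$ does this, as does the paper's shift $T^{1j}\mapsto T^{1j}+w(v^\top-u^\top T^{1j})$ with $u^\top w=1$. Then, with the $T^{1j}$ square, choose $X\in GL(D)$ with $v^\top X=u^\top$ and set $T^{1j}\mapsto T^{1j}X$, $B\mapsto X^{-1}B$. This leaves $C$ unchanged and gives both $u^\top T^{1j}=u^\top$ and $u^\top B=v^\top B=\mathbf 1^\top$.

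The common unit effect is not established either. Step~2 only gives $u_j^\top T^{1j'}B=\mathbf 1^\top$ for each measurement $\MM_j$. The $u_j$ coincide only if the matrix $B'$ of transformed states has trivial left kernel, and Step~1 does not secure this. Full column rank of $U$ does not give $\operatorname{col}(A)=\operatorname{col}(C_{[\e]})$: the core may have a vanishing mode-1 slice, which leaves a component of each $u_j$ unconstrained. You would have to truncate to mode-1 dimension $r_\e$. A basis change cannot equalize distinct $u_j$ either, since $u_j^\top M=u_{j'}^\top M$ forces $u_j=u_{j'}$.

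The paper instead translates the rows of each measurement block, $A_{(l,l_k):}\mapsto A_{(l,l_k):}+\frac{1}{|O_l|}(u-u^l)^\top$. This leaves $AB'$ unchanged because $(u-u^l)^\top B'=0$. Your extra-dimension alternative could settle both issues at once, but only if the effects are modified too. Take states $(s^i;1)$, transformations $T^{1j'}\oplus 1$ and effects $(e^{jk}-u_j/|O_j|;\,1/|O_j|)$, where $O_j$ is the outcome set of $\MM_j$. Then the unit is $(0,\dots,0,1)$ and every probability is preserved, because $u_j^\top T^{1j'}s^i=1$. As written, with only the states gaining a coordinate, it does not work. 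It is also not what Example~\ref{ex:rebit_preGPT} does, since that example's unit effect is $(1,1,1,1,0)$.

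Finally, for positivity, the ordering should be induced by the cone of \emph{transformed} states $T^{1j'}s^i$. The pairing $e^{jk\top}s^i$ with an untransformed state need not be a probability.
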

\begin{proof}
We begin by defining the matrices $A$, $B$, and $T^{1j}$ as in Eq.~\eqref{eq:preGPTCOPEfactor} and let the outcomes be indexed by
$(l,l_k)$, where $l$ denotes a measurement and $l_k$ its $k$-th outcome.
We denote the set of outcomes of measurement $l$ as $O_l$.
Following Eq.~\eqref{eq:preGPTCOPEfactor}, the COPE tensor's Tucker decomposition can be written as
\begin{equation}
    \begin{split}
        C_{(l,l_k)ji}&=A_{(l,l_k):}T^{1j}B_{:i},\\
        &=A_{(l,l_k):}B'_{:(j,i)},
    \end{split}
\end{equation}
wherein,
\begin{equation} \label{eq:B'}
    B'_{:(j,i)}:=T^{1j}B_{:i}.
\end{equation}

The sum of effects for measurement $l$ is given by
\begin{equation}
u^{l\top}:=\sum_{l_k\in O_l}A_{(l,l_k):}.
\end{equation}
For any preparation and transformation, the sum of all measurement outcomes associated with measurement $l$ produces a row of ones in the COPE, that is,
$u^{l \top} B'=\mathbf{1}^\top$ for all $l$.
We now choose a reference measurement $l_0$ and set $u:=u^{l_0}$ and translate the rows of $l$th block of $A$ as \begin{equation}A_{(l,l_k):}\mapsto A_{(l,l_k):}+\frac{1}{|O_l|}(u-u^l)^\top.\end{equation}
Since
$(u-u^l)^\top B'=0$
for all $l$, the product
$A  B'$
and as a result, the COPE tensor, remains unchanged.
This establishes $u$ as the common unit vector of all measurements.

We now modify the transformation matrices so that they all map the unit vector $u$ to a common output vector. To do so, we first consider the vector $v^\top=u^\top T^{1j_0}$, for some choice of reference transformation $T^{1j_0}$, which satisfies $v^\top B=\mathbf 1^{\top}$. We then choose a column vector $w$ such that $u^\top w=1$ and introduce shifted transformation matrices $\widetilde T^{1j}:=T^{1j}+w(v^\top-u^\top T^{1j})$.
Clearly, this has no effect on the COPE tensor since $(v^\top-u^\top T^{1j})B=0$.
The new transformation matrices $\widetilde{T}^j$ all act identically on the unit $u$, as clearly we have $u^\top \widetilde T^{1j}=v^\top$
for all $j$.

To ensure that the transformation matrices preserve the unit effect $u$ rather than mapping it to $v$, they must be made square. As the vectors $u$ and $v$ are of length $d_1$ and $d_3$ respectively, we define
$D:=\max\{d_1,d_3\}$, and subsequently duplicate columns of  $A$, and $T^1$s, as well as append zero rows to the $T^1$s and $B$, to get $\widehat A$ with $D$ columns, $\widehat B$ with $D$ rows, and $\widehat T^{1j}$s that are square $D \times D$ matrices.
Similarly, duplicating entries in the vectors $u$ and $v$ if necessary, we get $D$-dimensional vectors $\widehat u$ and $\widehat v$, respectively.
Now we have
\begin{equation}
\widehat u^\top\,\widehat T^{1j}=\widehat v^\top,
\qquad
\widehat A\,\widehat T^{1j}\widehat B=A\widetilde T^{1j}B.
\end{equation}

Finally, there exists $X\in GL(D)$ such that $\widehat v^\top X=\widehat u^\top$. We define
\begin{equation}
T^{\ast 1j}=\widehat T^{1j}X,\qquad
B^\ast=X^{-1}\widehat B.
\end{equation}
These result in,
\begin{equation}
    C_{(l,l_k)ji}=\widehat A_{(l,l_k):}T^{\ast 1j}B^\ast_{:i}
\end{equation}
with
\begin{equation}
    \begin{split}
        \sum_{l_k\in O_l}\widehat A_{(l,l_k):}&=\widehat u^\top \quad \forall \  \MM_l,\\
        \widehat u^\top T^{\ast 1j}&=\widehat u^\top \quad \forall \  j.
    \end{split}
\end{equation}
The latter COPE tensor decomposition is therefore a valid preGPT model for the $\pp\TT^{1}\MM$ scenario.
\end{proof}

A GPT model of an operational theory is a preGPT that is quotiented under the operational equivalence relations for preparations, transformations and outcomes~\cite{shahandeh2025unifiedlinearalgebraicframework}.
That is, a GPT does not allow for two indistinguishable procedures to have distinct representations, i.e.,
\begin{equation}\label{eq:quot_def}
\begin{split}
    & \pp_i\simeq\pp_j \Leftrightarrow s^{i} = s^{j},\\
    & \e_i \simeq \e_j \Leftrightarrow e^i = e^j,\\
    &\TT^{1}_i\simeq \TT^{1}_j \Leftrightarrow T^{1i}=T^{1j}.
\end{split}
\end{equation}
For a preGPT model given by the COPE factorizations described above, the following criteria must be met for each type of procedure to be quotiented. 

\begin{lemma} \label{lemma:quotiented1}
    Consider a preGPT model of the operational theory. Compose the transformations with the preparations (measurements) to obtain a PM scenario with a new set of preparations $\{\pp'_i\}$ (measurements $\{\MM'_i\}$), which corresponds to a new set of states $\{s'^i\}$ (effects $\{e'^i\}$). The original effects $\{e^i\}$ (states $\{s^i\}$) are quotiented if and only if the set $\{s'^i\}$ ($\{e'^i\}$) separates them.
\end{lemma}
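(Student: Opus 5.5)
We prove the statement for effects; the statement for states follows by the same argument with the roles of $A$ and $B$ exchanged. Form the composite preparations $\pp'_{(j,i)}$, meaning ``prepare $\pp_i$, then apply $\TT^1_j$'', whose preGPT states are $s'^{(j,i)}=T^{1j}s^i=B'_{:(j,i)}$, exactly as in Eq.~\eqref{eq:B'}. By the probability rule Eq.~\eqref{eq:GPT_pob_rule}, every entry of the COPE tensor is
\begin{equation}
C_{kji}=\inprod{e^k}{s'^{(j,i)}}.
\end{equation}
This is the only place transformations enter the comparison of events, since measurement-event equivalence in Eq.~\eqref{eq:preparationmeasurementeqv} quantifies over all $\pp$ and $\TT^1$. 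By Assumption~\ref{ass:COPE_comp} and Corollary~\ref{cor:IDRC}, $\e_k\simeq\e_l$ holds exactly when the mode-$\e$ slices $C_{k::}$ and $C_{l::}$ coincide. That is,
\begin{equation}
\e_k\simeq\e_l \iff \inprod{e^k-e^l}{s'^{(j,i)}}=0 \quad \forall\, j,i .
\end{equation}
We read ``$\{s'^i\}$ separates the effects'' in the standard sense: any two distinct effect vectors $e\neq e'$ in $\EE_{\rm pGPT}$ give different inner products with some $s'$.

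\textbf{($\Leftarrow$)} Assume $\{s'\}$ separates the effects. If $e^k=e^l$, the two rows of $A$ coincide, so $C_{k::}=C_{l::}$ and hence $\e_k\simeq\e_l$. Conversely, if $\e_k\simeq\e_l$, then $e^k-e^l$ is orthogonal to every $s'^{(j,i)}$, and separation forces $e^k=e^l$. Together these give the effect line of Eq.~\eqref{eq:quot_def}.

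\textbf{($\Rightarrow$)} Assume the effects are quotiented, and suppose $e\neq e'$ are effects with $\inprod{e}{s'}=\inprod{e'}{s'}$ for all $s'$.
\begin{itemize}
\item If $e$ and $e'$ are images of atomic events $\e_k,\e_l$, then $C_{k::}=C_{l::}$, so $\e_k\simeq\e_l$, and quotienting gives $e^k=e^l$, a contradiction.
\item Otherwise $e$ and $e'$ come from convex combinations of events. Here we use that the maps $\mathcal{K}_\e$ are linear and convexity-preserving, so these effects are the corresponding convex combinations of rows of $A$. Their pairing with the $s'$ is then the same combination of COPE rows, and the operational equivalence of the combined events is read off the COPE tensor in the same way as for atomic events. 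Quotienting again forces $e=e'$.
\end{itemize}
Hence $\{s'\}$ separates the effects.

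The main obstacle is this interplay between separation on all of $\EE_{\rm pGPT}$ and quotienting, which Eq.~\eqref{eq:quot_def} phrases on the procedures. The bridge is linearity of $\mathcal{K}_\e$, together with the fact that operational equivalence of mixtures is defined through the same COPE data. The mirrored statement for states uses the composite effects $e'^{(k,j)\top}=e^{k\top}T^{1j}$, that is, the rows of $AT^{1j}$, and the preparation clause of Eq.~\eqref{eq:preparationmeasurementeqv}. The argument is otherwise verbatim.
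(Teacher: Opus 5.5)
Your proof is correct and uses the same idea as the paper's: the pairings $\langle e^k, T^{1j}s^i\rangle$ are exactly the COPE entries, so if the transformed states fail to separate two effects, the corresponding events are operationally equivalent, which contradicts quotienting. The paper argues only the ``quotiented $\Rightarrow$ separated'' direction, and only for atomic effects, so your explicit converse and your extension to mixed effects via linearity of $\mathcal{K}_\e$ (which works the same way for coarse-grained sums of effects) fill in what the paper leaves implicit.
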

\begin{proof}
    Assume for contradiction that the effects are quotiented, and yet there exist two distinct effects $e^k$ and $e^l$ which are not separated by elements of $\{s'^i\}$. That is, using Eq.~\eqref{eq:GPT_pob_rule},
    \begin{equation}
        \langle e^k, s'^i\rangle=\langle e^l, s'^i\rangle \; \forall \  i.
        \label{eq:sep_new_states}
    \end{equation}
    By definition, the set $\{s'^i\}$ is equivalent to the set of elements $T^{1m}s^n$ for all choices of $m$ and $n$. Therefore, Eq.~\eqref{eq:sep_new_states} implies that the effects $e^k$ and $e^l$ represent operationally equivalent measurement outcomes $\e_k$ and $\e_l$ as in Eq.~\eqref{eq:preparationmeasurementeqv}. This is a contradiction, since operationally equivalent outcomes should have the same representation in a model with quotiented effects.

    A similar argument can be applied to the states.
\end{proof}
\begin{lemma}\label{lemma:quotiented2}
    In a preGPT of the operational theory, the representation of the transformations in the model is quotiented if and only if the states and effects together separate the transformation matrices.
\end{lemma}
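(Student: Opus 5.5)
The plan mirrors the argument for Lemma~\ref{lemma:quotiented1}, with the roles reshuffled: transformations are now the object to be separated, and effects and states jointly act as the separating functionals. The first step is to make precise what ``separate'' means here. Each pair $(e^k,s^i)$ defines a linear functional on $D\times D$ matrices,
\begin{equation}
    T\mapsto \inprod{e^k}{T[s^i]}=e^{k\top}Ts^i=\operatorname{tr}\bigl(T\,s^ie^{k\top}\bigr).
\end{equation}
So $\{e^k\}$ and $\{s^i\}$ together separate the transformation matrices exactly when, for any two distinct $T^{1a}\neq T^{1b}$ in the model, some pair $(k,i)$ gives $e^{k\top}T^{1a}s^i\neq e^{k\top}T^{1b}s^i$. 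Equivalently, the difference $T^{1a}-T^{1b}$ is not annihilated by every rank-one operator $s^ie^{k\top}$.

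The second step is to prove ``quotiented $\Rightarrow$ separated''. Suppose the transformations are quotiented, but two distinct matrices $T^{1a}\neq T^{1b}$ satisfy $e^{k\top}T^{1a}s^i=e^{k\top}T^{1b}s^i$ for all $k,i$. By the probability rule Eq.~\eqref{eq:GPT_pob_rule}, these numbers are exactly $p(\e_k|\pp_i,\TT^1_a,\MM)$ and $p(\e_k|\pp_i,\TT^1_b,\MM)$. Hence the mode-$\TT^1$ slices $C_{:a:}$ and $C_{:b:}$ coincide, and by Eq.~\eqref{eq:transformationeqv} we get $\TT^1_a\simeq\TT^1_b$. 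Corollary~\ref{cor:IDRC}, which rests on Assumption~\ref{ass:COPE_comp}, confirms that no other combination of procedures can distinguish them. The quotient condition Eq.~\eqref{eq:quot_def} then forces $T^{1a}=T^{1b}$, which is a contradiction. Convex mixtures of transformations should be handled by the same argument: because the set-valued maps are linear, a mixture is represented by the corresponding convex combination of matrices, and its probabilities are the same convex combinations.

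The third step is to prove ``separated $\Rightarrow$ quotiented''. Assume the effects and states separate the transformation matrices, and suppose $\TT^1_a\simeq\TT^1_b$. Then all probabilities agree, so $e^{k\top}T^{1a}s^i=e^{k\top}T^{1b}s^i$ for every $k,i$. Separation then gives $T^{1a}=T^{1b}$. Conversely, equal matrices obviously produce equal statistics, so the equivalence in Eq.~\eqref{eq:quot_def} holds in both directions.

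The main obstacle is the bookkeeping around which states and effects count as separators. If ``separate'' were read as the rank-one operators $s^ie^{k\top}$ spanning the whole matrix space, the first direction could fail: two matrices can differ only outside $\operatorname{span}\{s^i\}$ or outside $\operatorname{span}\{e^k\}$ while still being equivalent. I would therefore state separation pairwise on the realized matrices $\{T^{1j}\}$, as in the first step, and not as a spanning condition. I would then remark that in a minimal model, where $A$ has full column rank and $B$ full row rank, the two readings coincide.
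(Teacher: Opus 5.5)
Your proposal is correct and matches the paper's argument. The paper proves only the direction quotiented $\Rightarrow$ separated, by the same contradiction you use: equal values $e^{k\top}T^{1a}s^i=e^{k\top}T^{1b}s^i$ for all $k,i$ make the two transformations operationally equivalent, so quotienting forces $T^{1a}=T^{1b}$. Your explicit converse and your point that ``separate'' must be read pairwise on the realized matrices (not as a spanning condition) are sound additions that the paper leaves implicit.
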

\begin{proof}
    Similarly to the proof of Lemma~\ref{lemma:quotiented1}, assume for contradiction that the transformations are quotiented in the model, and yet there exist two different transformations $T^{1p}$ and $T^{1q}$ which are not separated by the states and effects. Then, using Eq.~\eqref{eq:GPT_pob_rule},
    \begin{equation}
        \langle e^i, T^{1p}[s^j]\rangle=\langle e^i, T^{1q}[s^j]\rangle \; \forall \  i,j.
    \end{equation}
    It follows from Eq.~\eqref{eq:transformationeqv} that $T^{1p}$ and $T^{1q}$ represent operationally equivalent transformation procedures. This is a contradiction, since operationally equivalent transformation procedures should have the same representation in a model with quotiented transformations.
\end{proof}
\noindent Lemmas~\ref{lemma:quotiented1} and~\ref{lemma:quotiented2} amount to the following statement: the operational equivalence of two procedures is oblivious to whether all other procedures occur in multiple stages, or in a single stage. These lemmas yield a necessary and sufficient condition for the preGPT to be fully quotiented, i.e., to form a GPT, as follows. 
We first note that a decomposition of a COPE 3-tensor $C$ via Eq.~\eqref{eq:C_dec_PTM} gives rise to the following factorizations of each unfolding of $C$:
    \begin{enumerate}
    \item $C_{[\e]} = A B'$, where each possible composition of a preparation and transformation as a column of $C_{[\e]}$ is represented by a column of $B'$, i.e. $B'_{:(ij)}=T^{1j}B_{:i}$.
    \item $C_{[\pp]} = A'B$ where each possible composition of a transformation and a measurement outcome as a row of $C_{[\pp]}$ is represented by a row of $A'$, i.e. $A'_{(ij):}=A_{i:}T^{1j}$.
    \item $C_{[\TT^1]} = Y^1 \Theta^1$,
    where $\Theta^1=A^\top\otimes B$ is the Kronecker product of $A^\top$ and $B$, and $Y^1$ is such that $Y^1_{j:}$ is the row-flattening of $T^{1j}$.
\end{enumerate}
The GPT conditions of Eq.~\eqref{eq:quot_def} can be formulated as rank constraints on the matrix factors above, as follows.
\begin{theorem} \label{thm:GPT model}
Suppose a COPE 3-tensor is modelled by a preGPT specified by a decomposition as in Eq.~\eqref{eq:C_dec_PTM}. Let $r_\e=\rank(C_{[\e]})$, $r_\pp=\rank(C_{[\pp]})$, and $r_{\TT^1}=\rank(C_{[\TT^1]})$. Define  $Y^1$ such that the $j$th row of $Y^1$ is the row-flattening of $T^{1j}$. Then the preGPT is a GPT if and only if the following conditions are satisfied.
\begin{align}
\rank(A)&=r_\e \notag , \\
    \rank(B)&=r_\pp,\label{eq:proceduresquotiented} \\
    \rank(Y^1)&=r_{\TT^1}. \notag
\end{align}
\end{theorem}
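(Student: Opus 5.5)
The plan is to treat each of the three conditions separately, using the factorizations of the unfoldings $C_{[\e]}=AB'$, $C_{[\pp]}=A'B$ and $C_{[\TT^1]}=Y^1\Theta^1$ together with Lemmas~\ref{lemma:quotiented1} and~\ref{lemma:quotiented2}. The underlying principle, inherited from the PM case of Ref.~\cite{shahandeh2025unifiedlinearalgebraicframework}, is the following. Suppose a matrix factorizes as $C=XZ$ and the rows of $X$ (the model objects) are compared only through their pairings with the columns of $Z$. Then the representation is quotiented exactly when the map $x\mapsto xZ$ is injective on the span of the representing vectors. This can be phrased as a rank equality.

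For effects, I will proceed as follows. From $C_{[\e]}=AB'$ we always have $r_\e\le\rank(A)$. By Lemma~\ref{lemma:quotiented1}, the effects are quotiented iff the composed states $\{s'^{(ij)}\}=\{T^{1j}s^i\}$, i.e. the columns of $B'$, separate the effects.

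The key step is to interpret ``separates'' at the level of the linear span, since the map from procedures to model elements is linear on convex combinations. Separation then means that no nonzero vector in the row space of $A$ is annihilated by $B'$; that is, the row space of $A$ meets $\ker B'^\top$ trivially. This holds iff $\rank(AB')=\rank(A)$, i.e. $\rank(A)=r_\e$.

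For the converse direction, I will use an explicit construction. If $\rank(A)>r_\e$, pick a nonzero $v^\top=x^\top A$ with $v^\top B'=0$. Split $x$ into positive and negative parts, normalize, and shift by multiples of the unit effect as in the proof of Lemma~\ref{lemma:preGPT_Tucker}. This produces two convex combinations of outcome events that give identical rows in $C_{[\e]}$, hence are operationally equivalent by Corollary~\ref{cor:IDRC}, yet have distinct representatives. The states follow by the identical argument with $C_{[\pp]}=A'B$ and the second half of Lemma~\ref{lemma:quotiented1}.

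For transformations, I will write $C_{[\TT^1]}=Y^1\Theta^1$ with $\Theta^1=A^\top\otimes B$, so that row $j$ of $C_{[\TT^1]}$ is the vector of pairings $A_{i:}T^{1j}B_{:l}$ over all pairs $(i,l)$. Lemma~\ref{lemma:quotiented2} says transformations are quotiented iff the states and effects jointly separate the matrices $T^{1j}$. In linear terms, this means no nonzero element of the row space of $Y^1$ lies in the left kernel of $\Theta^1$, which is equivalent to $\rank(Y^1\Theta^1)=\rank(Y^1)$, i.e. $\rank(Y^1)=r_{\TT^1}$. The converse uses the same signed-combination construction. An affine combination of transformations with zero image under $\Theta^1$ yields two convex mixtures, such as $\tfrac12(\TT^1_1+\TT^1_3)$ and $\tfrac12(\TT^1_2+\TT^1_4)$ in Example~\ref{ex:equive_ops}, with equal COPE slices but different matrices.

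The main obstacle I expect is the passage from pointwise separation of the given finite set of procedures to separation of the whole linear span. A kernel vector of $B'^\top$ inside the row space of $A$ is an arbitrary signed combination of effects, not a difference of two atomic effects. The argument therefore needs the convex closure of procedures, as assumed in Example~\ref{ex:equive_ops}, together with the normalization of the preGPT (common unit $u$, and $u^\top T^{1j}=u^\top$). These ensure that the coefficients of the null combination can be rebalanced to equal total weight, turning it into a genuine pair of operationally equivalent mixed procedures. I would handle this as in Lemma~\ref{lemma:preGPT_Tucker}: pair the kernel vector with $u$ to show that its positive and negative weights coincide.
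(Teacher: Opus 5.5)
Your route is the paper's: apply Lemmas~\ref{lemma:quotiented1} and~\ref{lemma:quotiented2} to $C_{[\e]}=AB'$, $C_{[\pp]}=A'B$ and $C_{[\TT^1]}=Y^1\Theta^1$, read separation on the linear span, and turn quotienting into $\rank(XZ)=\rank(X)$. Your ``if'' direction, via the row space of $A$ meeting $\ker B'^\top$ trivially, is sound and cleaner than the paper's counting argument. You go beyond the paper by realizing a kernel vector as two genuine mixed procedures, and there one step fails as stated.

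Pairing with $u$ does balance the weights for states and for transformations. For states, summing the rows of $A'$ over a measurement block gives $u^\top T^{1j}=u^\top$, so $A'By=0$ forces $\mathbf{1}^\top y=u^\top By=0$. For transformations, with $Z=\sum_j z_jT^{1j}$, one has $u^\top ZB=(\sum_j z_j)\mathbf{1}^\top$. On the effect side, however, $u$ sits on the same side as the kernel vector. You would need a ``unit state'': a combination of columns of $B'$ on which every row of $A$ evaluates to $1$. Such a combination need not exist.

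For example, take one binary measurement, one preparation, $T^{11}=\mathds{1}_{2\times2}$, $A=\mathds{1}_{2\times2}$, $B=(1,0)^\top$ and $u=(1,1)^\top$. Then $r_\e=1<\rank(A)=2$, while the other two conditions hold. The only null combination is $x=(0,1)$, which has unbalanced weights. No two distinct convex mixtures of $\e_1,\e_2$ are operationally equivalent, so splitting and normalizing $x$ produces nothing. The shift from Lemma~\ref{lemma:preGPT_Tucker} does not help either, because it alters the representation, not the procedures.

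The failure of quotienting here is visible only through coarse-grained or null events. Indeed $\e_1\simeq\e_1\cup\e_2$ although $(1,0)\neq(1,1)$, and $\e_2\simeq\emptyset$ although $(0,1)\neq 0$. So for effects you should set $s^\pm:=\mathbf{1}^\top x^\pm$ and $s:=\max\{s^+,s^-\}$. Then compare the mixture giving weight $x^+/s$ to the atomic events and the remaining weight $1-s^+/s$ to the null event (the empty coarse-graining, represented by $0$) with the analogous mixture built from $x^-$. You should also state explicitly that the effect set contains such events. Without them the ``only if'' part for effects is actually false. The paper's proof sidesteps this by calling every vector of $\operatorname{row}(A)$ annihilated by $B'$ an effect.
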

\noindent  \begin{proof} Recalling the definition of the matrix $B'$, whose columns consist of representations of all transformed preparations, i.e. $B'_{:(ij)}=T^{1j}B_{:i}$, we note that $\rank(A)>r_\e$ implies that there are effects $e^k$ and $e^l$ in $\text{row}(A)$ which are not separated by the columns of $B'$. These columns are exactly the states $s^{i\prime}$ referred to in Lemma~\ref{lemma:quotiented1}. Therefore $\rank(A)>r_\e$ certifies that the effects are not quotiented via Lemma~\ref{lemma:quotiented1}; the same argument applies to the states, and to the transformations instead via Lemma~\ref{lemma:quotiented2}. Therefore the preGPT is a GPT only if all the conditions in Eq.~\eqref{eq:proceduresquotiented} are satisfied. For the if direction, suppose that $\rank(A)=r_\e$. Then the number of measurement outcomes with independent statistics $r_\e$ is equal to the number of independent effects i.e. rows of $A$. This implies that all outcomes are uniquely represented by effects. An analogous argument applies to states and transformations.
\end{proof}
The values $r_\e,r_{\TT^1}$ and $r_\pp$ can be interpreted as the capacity of each of the sets $\EE,\PP$ and $\T$ to distinguish pairs of elements composed from the other two sets. For instance, $r_\e$ is the number of preparation-transformation pairs whose statistics over the set of effects are linearly independent. The rank equalities in Eq.~\eqref{eq:proceduresquotiented} ensure that the GPT representation of the elements in any set uniquely separates any pair of elements drawn from the other two sets; equivalently, the GPT does not specify any more elements than the statistical structure demands.

Following the Tucker decomposition introduced in Eq.~\eqref{eq:tucker}, the Tucker rank of a $3$-tensor $C$ is the triple $(r_1,r_2,r_3)$, where $r_i$ ($i=1,2,3$) denotes the rank of the mode-$i$ unfolding of $C$.
A tensor $C$ can always be decomposed in Tucker form such that $(d_1,d_2,d_3)=(r_1,r_2,r_3)$, which is known as its \textit{minimal} Tucker decomposition.
It follows that the minimal Tucker decomposition of a COPE 3-tensor has the Tucker rank $(r_1,r_2,r_3)=(r_\e,r_{\TT^1},r_\pp)$.
It is then immediate that the preGPT constructed from the minimal Tucker decomposition via Lemma~\ref{lemma:preGPT_Tucker} is the minimal GPT of size $\max\{r_\e,r_\pp\}$. In fact, the construction is simplified, as in the minimal decomposition the left-kernel of the matrix $B'$ in Eq.~\eqref{eq:B'} would be empty, guaranteeing a unique unit effect.
Conversely, a minimal GPT provides a minimal Tucker decomposition of the COPE tensor after removing all-zero rows and columns from $A$, $T^1$, and $B$, reducing their sizes as necessary, followed by a further factorization of the $T^{1j}$ matrices according to Eq.~\eqref{eq:preGPTCOPEfactor}. 

Finally, we note that the most well-known algorithm for minimal Tucker decomposition is higher-order singular-value-decomposition~\cite{HOSVD1,HOSVD2}, which can easily be modified to form a GPT with transformations preserving the unit effect, following the prescription in Lemma~\ref{lemma:preGPT_Tucker}.

\begin{exbox}
\begin{example}
\textbf{2D toy theory: GPT}          
\label{ex:rebit_GPT}    
A possible GPT model for $C^{2D}$ is given by,
\begin{equation}\label{eq:A}
A =
\frac{1}{2}\begin{pmatrix}
1 & 0 & 1\\
1 & -1 & 0\\
1 & 0 & -1\\
1 & 1 & 0
\end{pmatrix},
\end{equation}
\begin{equation}\label{eq:T11}
T^{11} =
\begin{pmatrix}
1 & 0 & 0\\
0 & 0 & 1\\
0 & -1 & 0
\end{pmatrix},
\end{equation}
\begin{equation}\label{eq:T12}
T^{12} =
\begin{pmatrix}
1 & 0 & 0\\
0 & -1 & 0\\
0 & 0 & -1
\end{pmatrix},
\end{equation}
\begin{equation}\label{eq:T13}
T^{13} =
\begin{pmatrix}
1 & 0 & 0\\
0 & 0 & -1\\
0 & 1 & 0
\end{pmatrix},
\end{equation}
\begin{equation}\label{eq:T14}
T^{14} =
\begin{pmatrix}
1 & 0 & 0\\
0 & 1 & 0\\
0 & 0 & 1
\end{pmatrix},
\end{equation}
\begin{equation}\label{eq:B}
B =
\begin{pmatrix}
\frac{1}{2} & \frac{1}{2} & \frac{1}{2} & \frac{1}{2}\\[4pt]
-\frac{1}{4} & \;\;\frac{1}{4} & \;\;\frac{1}{4} & -\frac{1}{4}\\[4pt]
-\frac{1}{4} & -\frac{1}{4} & \;\;\frac{1}{4} & \;\;\frac{1}{4}
\end{pmatrix}.
\end{equation}
The unit effect of this GPT is given by the sum of the rows of $A$, which is $u=\begin{pmatrix}
    2 & 0 & 0
\end{pmatrix}$.
The Tucker rank of $C^{2D}$ is $(r_\e,r_{\TT^1},r_\pp)=(3,3,3)$. Clearly, $\rank(A)=\rank(B)=3$. To confirm that the matrix $Y^1$, whose rows are row-flattenings of the $T^{1j}$ matrices, satisfies $\rank(Y^1)=3$, note that \begin{equation}
T^{11}-T^{12}+T^{13}-T^{14}=0.
\end{equation}
It follows that each condition in  Eq.~\eqref{eq:proceduresquotiented} is satisfied, confirming that all representations of procedures are quotiented.
\end{example}
\end{exbox}

\subsection{Ontological Models of $\pp\TT^1\MM$ Scenario}

\subsubsection{General Ontological Models}
An ontological model provides a realistic interpretation of the operational theory through its probabilistic structure. It is described by an underlying ontic variable space $\Lambda$ which we assume to be finite-dimensional, and spanned by a finite set of discrete ontic points $\{\lambda_i\}$. Here, each $\lambda_i$ represents a definite state of the system, in which all of its properties are predetermined.

In this model, a preparation $\PP_i$ prepares the system in a random ontic point according to some probability distribution $\mu^i(\lambda_j)$. As the function $\mu^i$ represents an agent's limited knowledge of the true state of the system, it is referred to as an \textit{epistemic state}.
Being a probability distribution, $\mu^i(\lambda_j)$ satisfies $\sum_j \mu^i(\lambda_j)=1$.
Each transformation $\TT^1_i$ in the ontological model is represented by a stochastic map $\Gamma^{1i}:\Lambda\rightarrow \Lambda$.
In the linear-algebraic formalism, these are represented by column-stochastic matrices encoding transition probabilities between ontic points, $\Gamma^{1i}(\lambda_k|\lambda_j)$, with $\sum_k \Gamma^{1i}(\lambda_k|\lambda_j)=1$.
Finally, if the system is in the state $\lambda_i$ just before the measurement $\MM_j$ with possible outcomes $\{\e_k^j\}$ is performed, the outcome $k$ occurs according to a probability distribution over outcomes $\xi^{jk} (\lambda_i)$.
The function $\xi^{jk}$ is known as a \textit{response function} and satisfies $\sum_k \xi^{jk}(\lambda_i)=1$ for all $\lambda_i$. 

It follows that an ontological model of the COPE 3-tensor $C$ specifies a nonnegative factorization
\begin{equation}
\label{eq:PTM_ont_decomp}
    C_{ijl}=R_{i:}\Gamma^{1j}E_{:l}.
\end{equation}
We now show that every such model amounts to a nonnegative preGPT.

\begin{lemma} \label{lemma:ont_nonnegpreGPT}
    An ontological model and a nonnegative preGPT model  are equivalent up to a nonnegative diagonal rescaling.
\end{lemma}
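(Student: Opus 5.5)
The plan is to prove the two directions separately, comparing the ontological factorization of Eq.~\eqref{eq:PTM_ont_decomp} with the preGPT factorization of Eq.~\eqref{eq:C_dec_PTM}.

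\emph{Ontological model $\Rightarrow$ nonnegative preGPT.} Given $C_{ijl}=R_{i:}\Gamma^{1j}E_{:l}$, set $A=R$, $T^{1j}=\Gamma^{1j}$ and $B=E$, taking $\V=\mathbb{R}^{|\Lambda|}$ with the standard inner product. All entries are nonnegative. The response functions satisfy $\sum_k \xi^{jk}(\lambda_i)=1$, so every measurement block of $R$ sums to the all-ones vector $u=\mathbf 1$. This gives a common unit effect, so the PVVM conditions of Eq.~\eqref{eq:PVVM} hold. Column-stochasticity of $\Gamma^{1j}$ is exactly $\mathbf 1^\top\Gamma^{1j}=\mathbf 1^\top$, which is unit preservation. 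Normalization of the epistemic states gives $\mathbf 1^\top E_{:l}=1$. Hence no rescaling is needed in this direction.

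\emph{Nonnegative preGPT $\Rightarrow$ ontological model.} Start from a nonnegative decomposition $C_{ijl}=A_{i:}T^{1j}B_{:l}$ with unit effect $u\ge 0$, $u^\top T^{1j}=u^\top$ and $u^\top B_{:l}=1$. First I would discard the coordinates $\alpha$ with $u_\alpha=0$. Every effect row of $A$ is nonnegative and sums (within a measurement) to $u$, so its $\alpha$ entry vanishes whenever $u_\alpha=0$. Dropping those coordinates therefore leaves the product unchanged, provided the $T^{1j}$ are also restricted to the remaining rows and columns.

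This restriction is where I expect the main obstacle. One must check that $T^{1j}$ does not leak weight from the support of $u$ into its complement in a way that matters. This follows from $u^\top T^{1j}=u^\top$ together with nonnegativity. For $\alpha$ in the support, column $\alpha$ of $T^{1j}$ has $u$-weighted sum $u_\alpha$. Entries landing outside the support contribute to neither $A$ nor the $u$-weighted sum. So the restricted matrix still satisfies $u^\top T=u^\top$ on the support. Moreover, states and transformed states supported off the support of $u$ never affect the statistics, since $A$ vanishes there.

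Having done this, let $D=\mathrm{diag}(u)$, now invertible, and define
\begin{equation}
R=AD^{-1},\qquad \Gamma^{1j}=DT^{1j}D^{-1},\qquad E=DB.
\end{equation}
Then $R\Gamma^{1j}E=AT^{1j}B$, so the statistics are reproduced. Each measurement block of $R$ sums to $u^\top D^{-1}=\mathbf 1^\top$. We also have $\mathbf 1^\top\Gamma^{1j}=u^\top T^{1j}D^{-1}=\mathbf 1^\top$ and $\mathbf 1^\top E_{:l}=u^\top B_{:l}=1$. All these matrices are nonnegative, since $D$ is a positive diagonal matrix, so they define response functions, stochastic maps and epistemic states over $|\Lambda|=|\mathrm{supp}(u)|$ ontic points.

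Combining both directions, the two model classes coincide up to the nonnegative diagonal rescaling $D$, together with the deletion of the null coordinates of $u$. I would remark that this deletion is a degenerate instance of the same rescaling.
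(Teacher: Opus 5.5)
Your forward direction is the same as the paper's. In the reverse direction you take a slightly different route.

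The paper keeps all coordinates and rescales with the pseudoinverse: $R=AD^{\dagger}$, $\Gamma^{1j}=DT^{1j}D^{\dagger}$, $E=DB$. It then repairs the zero columns of $R$ and $\Gamma^{1j}$ at indices with $u_i=0$ by overwriting them with copies of a column with $u_{i'}>0$. This is harmless because the matching rows of $\Gamma^{1j}$ and $E$ vanish.

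You delete those coordinates first, so $D$ is invertible and no patching is needed. The cost is that the ontic space shrinks to $|\mathrm{supp}(u)|$ points instead of living on the preGPT's coordinates. Both are linear operations on the factors, so neither can raise the ranks that later enter the noncontextuality conditions.

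One justification needs fixing. Write $S:=\mathrm{supp}(u)$. In the paragraph you flag as the main obstacle, you treat leakage of weight from $S$ into $\bar S$. That direction is trivially harmless, since $A$ vanishes on $\bar S$.

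The direction that needs an argument is the reverse. The rows of $B$ indexed by $\bar S$ need not vanish. Your sentence ``$A$ vanishes there'' does not show they are irrelevant, because $T^{1j}$ acts before $A$ and could a priori carry that weight into $S$.

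The missing fact is $T^{1j}_{\beta\alpha}=0$ for all $\beta\in S$ and $\alpha\in\bar S$. It follows from $\sum_{\beta\in S}u_\beta T^{1j}_{\beta\alpha}=u_\alpha=0$, together with $u_\beta>0$ on $S$ and $T^{1j}\ge 0$. With it, $A_{:S}T^{1j}_{SS}B_{S:}=AT^{1j}B$ holds exactly. The same fact also makes the restriction compatible with products of several transformation stages.

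The paper's identity $R\Gamma^{1j}E=AT^{1j}B$ rests on the same fact without stating it. Indeed, $AD^{\dagger}D\,T^{1j}D^{\dagger}DB=AT^{1j}PB$, where $P=D^{\dagger}D$ is the projector onto $S$.
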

\begin{proof}
    We can represent each ontic point $\lambda_i$ as the $ith$ cartesian unit vector of a $|\Lambda|$-dimensional vector space. Therefore, we can represent the probability distribution $\mu^i$ as a nonnegative stochastic vector.
    Each $\Gamma^{1i}$ is a stochastic matrix, such that $\Gamma^{1i}_{lm}$ is the transition probability from $\lambda_m$ to $\lambda_l$. Finally, we can represent $\xi^{jk}$ as a nonnegative vector. 
    This has the same structure as a preGPT wherein the unit effect is the vector of all ones $\boldsymbol{1}$. The latter follows from the condition $\sum_{k}\xi^{jk}(\lambda_i)=1$ for each measurement outcome $j$ and all $\lambda_i$ in the ontological model. 

    For the reverse direction, given that the nonnegative preGPT has a nonnegative unit effect $u$ which is preserved by all transformations, we have that
\begin{equation}
\begin{split}
    \sum_{l_k\in O_l} A_{(l,l_k):} &= u^\top,\\
    u^\top B &= \boldsymbol{1}^\top,\\
    u^\top T^{1j} &= u^\top,
\end{split}
\end{equation}
for every measurement $\MM_l$, in which we denote its set of outcomes by $O_l$, and every transformation $T^{1j}$. 

Defining the matrix $D=\operatorname{diag}(u)$, we rescale the matrices $A,T^{1j}$ and $B$ to form
\begin{equation}
\begin{split}
    R &:= AD^{\dagger},\\
    \Gamma^{1j} &:= DT^{1j}D^{\dagger},\\
    E &:= DB,
\end{split}
\end{equation}
where $\dagger$ denotes the pseudoinverse.
Then
\begin{equation}
    R\Gamma^{1j}E=AT^{1j}B=C_{:j:}.
\end{equation}
Now, $R$ and $\Gamma^{1j}$ may have some zero columns. To obtain an ontological model, for any $u_i=0$ and $u_{i'}>0$, we overwrite the zero column $R_{:i}$ with $R_{:i'}$, and overwrite the zero column $\Gamma^{1j}_{:i}$ with $\Gamma^{1j}_{:i'}$ for all $j$. We then repeat this process for all $u_i=0$.
Thus $R$ is a normalized response function matrix, $E$ is a matrix of epistemic states, and each $\Gamma^{1j}$ is column-stochastic. Hence the nonnegative preGPT is equivalent to an ontological model up to the nonnegative diagonal rescaling above.
\end{proof}

\begin{exbox}
\begin{example}
\textbf{2D toy theory: Ontological Model}          
\label{ex:rebit_ont}    
A possible ontological model for $C^{2D}$ is
\begin{equation}
\begin{aligned}
R &=\mathds{1}_{4\times4},\\
\Gamma^{1j}_{ik} &= \delta^{[4]}_{\,i-k,\;j-1}\;,\\
E &=C^{2D}_{:1:},
\end{aligned}
\end{equation}
where $\delta^{[4]}_{a,b}=1$ if $a\equiv b \pmod 4$, and 0 otherwise. 
The unit response function is the sum of rows of $R$, which is the vector of all ones.
\end{example}
\end{exbox}

\subsubsection{NCOMs of $\pp\TT^1\MM$ Scenario} \label{sec:NCOM}

We now demonstrate how our formalism captures the notion of generalized noncontextuality~\cite{Spekkens_2005} in $\pp\TT^1\MM$ scenarios. For clarity, as in the previous sections, we label outcome events $\e_j$ and the corresponding response functions $\xi^j$ with a single joint index $j$, where $j$ runs over all measurements and their outcomes. 
By definition, an ontological model is noncontextual precisely when it assigns identical representations to all operationally equivalent procedures, i.e.,
\begin{equation} \label{eq:PTMNCOM}
\begin{split}
    & \pp_i\simeq\pp_j \Leftrightarrow \mu^i = \mu^j,\\
    & \e_i \simeq \e_j \Leftrightarrow \xi^i = \xi^j,\\
    &\TT^{1}_i\simeq \TT^{1}_j \Leftrightarrow \Gamma^{1i}=\Gamma^{1j}.
\end{split}
\end{equation}
It may be useful to note that, in the above definition of ontological noncontextuality, the sufficiency direction is an assumption motivated by Leibniz's principle of ontological identity of empirical indiscernibles~\cite{spekkens2019ontologicalidentityempiricalindiscernibles} as presented in Spekkens' seminal work~\cite{Spekkens_2005}.
The converse direction, however, is a mathematical necessity baked into the very framework of ontological models\footnote{To violate the necessity direction, the measurement response functions would need to depend directly on the preparation recipes, thereby violating the $\lambda$-mediation assumption. However, one can trivially restore $\lambda$-mediation by enlarging the ontic space to include the space of recipes, in which case the epistemic states are no longer identical.}. Although including the necessity direction is a tautology, it allows us to draw a parallel between noncontextual ontological models and GPTs in Eq.~\eqref{eq:quot_def} as follows.
\begin{lemma}
    \label{lemma:NCOM_nonnegGPT}
    An NCOM and a nonnegative GPT are equivalent up to a nonnegative rescaling.
\end{lemma}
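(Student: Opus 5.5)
The plan is to build on Lemma~\ref{lemma:ont_nonnegpreGPT}, which already identifies ontological models with nonnegative preGPTs. The proof then only has to check that the extra conditions on each side correspond: noncontextuality in Eq.~\eqref{eq:PTMNCOM} versus quotienting in Eq.~\eqref{eq:quot_def}. Both directions use the explicit maps from that lemma. The forward map sends $\mu^i$ to $s^i$, $\Gamma^{1j}$ to $T^{1j}$, and $\xi^j$ to $e^j$ with unit effect $\mathbf{1}$. The reverse map is the rescaling $R=AD^\dagger$, $\Gamma^{1j}=DT^{1j}D^\dagger$, $E=DB$ with $D=\operatorname{diag}(u)$, followed by overwriting the zero columns.

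\textbf{NCOM to nonnegative GPT.} The forward map is the identity on the representing vectors and matrices. So $\pp_i\simeq\pp_j\Leftrightarrow\mu^i=\mu^j$ gives $\pp_i\simeq\pp_j\Leftrightarrow s^i=s^j$ directly, and likewise for effects and transformations. The result is a nonnegative preGPT satisfying Eq.~\eqref{eq:quot_def}, i.e.\ a nonnegative GPT.

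\textbf{Nonnegative GPT to NCOM.} Here we start from Eq.~\eqref{eq:quot_def} and apply the rescaling. The necessity direction of each condition in Eq.~\eqref{eq:PTMNCOM} is automatic, since identical ontic representations produce identical statistics. For sufficiency, the key step is that each rescaling is a fixed map applied uniformly to every procedure of a given type:
\begin{itemize}
\item states are sent by $s\mapsto Ds$;
\item effects are sent by $e^\top\mapsto e^\top D^\dagger$, followed by the column overwriting;
\item transformations are sent by $T\mapsto DTD^\dagger$, followed by the column overwriting.
\end{itemize}
The overwriting is also uniform, since it copies column $i'$ into column $i$ by a rule that depends only on $u$ and not on the procedure. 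Each of these is therefore a well-defined function of the GPT element. Hence if $\pp_i\simeq\pp_j$, then $s^i=s^j$, so $\mu^i=Ds^i=Ds^j=\mu^j$; the same argument works for effects and transformations. Lemma~\ref{lemma:ont_nonnegpreGPT} already guarantees stochasticity and reproduction of $C$, so the resulting ontological model is noncontextual.

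The main obstacle I expect is making the uniformity precise, and choosing the right reading of ``equivalent''. Strictly, $D^\dagger$ is not injective when some $u_i=0$, so the reverse map can identify GPT elements that were distinct. This does not harm noncontextuality, because only the sufficiency direction needs checking there. But it means ``equivalent'' should mean mutual constructibility up to the rescaling, not a bijection. I would address this by noting that coordinates with $u_i=0$ carry zero weight in every state, since $u^\top B=\mathbf{1}^\top$ together with nonnegativity forces the corresponding rows of $B$ to vanish. Those coordinates can therefore be deleted beforehand, making $D$ invertible and the correspondence exact.
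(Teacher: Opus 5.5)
Your argument is correct and takes the paper's route. The paper's proof just observes that NCOMs and GPTs are carved out of ontological models and preGPTs by the same $\simeq$-quotient conditions, and then invokes Lemma~\ref{lemma:ont_nonnegpreGPT}. Your observation that the rescaling and column-overwriting are fixed linear maps, applied uniformly to every procedure of a given type, is exactly the detail that justifies this.

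One correction to your closing paragraph. The constraint $u^\top B=\mathbf{1}^\top$ says nothing about the rows of $B$ with $u_i=0$, so states can carry weight there; it is the corresponding columns of $A$ that must vanish. No such fix is needed, however. Two distinct GPT elements sent to the same ontic representation would give identical statistics in the reconstructed model. They would therefore be operationally equivalent, and so already equal by Eq.~\eqref{eq:quot_def}.
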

\begin{proof}
    The equivalence relation $\simeq$ certifies that an ontological model is an NCOM in Eq.~\eqref{eq:PTMNCOM}, and that a preGPT is a GPT in Eq.~\eqref{eq:quot_def}. Combining this with Lemma~\ref{lemma:ont_nonnegpreGPT} produces the result.
\end{proof}
We formulate the noncontextuality conditions of Eq.~\eqref{eq:PTMNCOM} as rank constraints on the COPE tensor as follows.
\begin{theorem}
\label{thm:PTM-NC-rank-cnstrnts}
Suppose a COPE 3-tensor $C$ is described by an ontological model as in Eq.~\eqref{eq:PTM_ont_decomp}. Let $r_\e=\rank(C_{[\e]})$, $r_\pp=\rank(C_{[\pp]})$, and $r_{\TT^1}=\rank(C_{[\TT^1]})$. We denote by $G^1$ the matrix such that its $j$th row $G^1_{j:}$ is the row-flattening of $\Gamma^{1j}$.
Then an NCOM of a $\pp\TT^1\MM$ scenario defined by the matrices of epistemic states $E$, response functions $R$, and row-flattened transformations $G^1$, must satisfy
\begin{equation}\label{eq:NCOM_cond}
\begin{split}
    \rank(R)&=r_\e,\\
    \rank(E)&=r_\pp,\\
    \rank(G^1)&=r_{\TT^1}.
\end{split}
\end{equation}
\end{theorem}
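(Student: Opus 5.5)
The plan is to reduce the statement to Theorem~\ref{thm:GPT model} through the correspondence of Lemmas~\ref{lemma:ont_nonnegpreGPT} and~\ref{lemma:NCOM_nonnegGPT}, and then check that the rank conditions survive the passage from a nonnegative GPT to an ontological model.

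First, start from an NCOM $(R,\{\Gamma^{1j}\},E)$ with $C_{ijl}=R_{i:}\Gamma^{1j}E_{:l}$. By Lemma~\ref{lemma:ont_nonnegpreGPT}, this is itself a nonnegative preGPT with unit effect $\boldsymbol{1}$ (take $A=R$, $T^{1j}=\Gamma^{1j}$, $B=E$), with no rescaling needed in this direction. By Eq.~\eqref{eq:PTMNCOM}, operationally equivalent procedures receive identical representations. This is exactly the quotient condition of Eq.~\eqref{eq:quot_def}, so the preGPT is a GPT. Theorem~\ref{thm:GPT model} then gives $\rank(R)=r_\e$, $\rank(E)=r_\pp$ and $\rank(G^1)=r_{\TT^1}$, since $G^1$ plays the role of $Y^1$.

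Second, I would not simply cite Theorem~\ref{thm:GPT model} but make the rank argument explicit, because its proof reasons about pairs of procedures while the claim concerns ranks. The lower bounds are immediate from the unfoldings. We have $C_{[\e]}=RE'$ with $E'_{:(ij)}=\Gamma^{1j}E_{:i}$, so $\rank(R)\geq r_\e$. Similarly $C_{[\pp]}=R'E$ gives $\rank(E)\geq r_\pp$, and $C_{[\TT^1]}=G^1(R^\top\otimes E)$ gives $\rank(G^1)\geq r_{\TT^1}$.

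For the upper bounds, suppose $\rank(R)>r_\e$. Then some nonzero linear combination of rows, $\sum_k c_k R_{k:}$, lies in the left kernel of $E'$ and is nonzero. Split $c$ into positive and negative parts and, if needed, pad with a common auxiliary combination to equalize the total weights. This yields two convex mixtures of outcome events (within the convex closure of $\EE$, as in Example~\ref{ex:equive_ops}) that are operationally equivalent by Eq.~\eqref{eq:preparationmeasurementeqv} and Assumption~\ref{ass:COPE_comp}, yet have distinct response functions. That contradicts Eq.~\eqref{eq:PTMNCOM}. The same argument applies to $E$ against $R'$, and to $G^1$ against $R^\top\otimes E$, using Eq.~\eqref{eq:transformationeqv}.

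The main obstacle is this upper-bound step. A linear dependency with arbitrary real coefficients must be converted into an operational equivalence of genuine convex mixtures, which is the content of Lemmas~\ref{lemma:quotiented1} and~\ref{lemma:quotiented2}. The normalization has to be handled per measurement block. For effects, I would use the fact that the sums over each measurement block all equal $\boldsymbol{1}^\top$, so both sides can be completed to equal total weight. For transformations, stochasticity fixes $\boldsymbol{1}^\top\Gamma^{1j}=\boldsymbol{1}^\top$; pairing this with $E$ shows that the coefficients of any kernel vector must sum to zero. After rescaling, this gives equal-weight mixtures and therefore valid operational equivalences.
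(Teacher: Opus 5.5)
Your first paragraph is the paper's proof. An NCOM is a nonnegative preGPT by Lemma~\ref{lemma:ont_nonnegpreGPT}, noncontextuality is the quotient condition by Lemma~\ref{lemma:NCOM_nonnegGPT}, and Theorem~\ref{thm:GPT model} then gives the three equalities, with $G^1$ in the role of $Y^1$. Your explicit rank argument goes further than the paper. The paper's proof of Theorem~\ref{thm:GPT model} only asserts that excess rank produces unseparated elements of $\operatorname{row}(A)$; it never checks that these correspond to genuine procedures or mixtures of them. For $E$ and $G^1$ your repair is sound, and simpler than you suggest. Sum any kernel relation of $C_{[\pp]}$ or $C_{[\TT^1]}$ over one measurement block and use $\sum_{k\in O_m}C_{kjl}=1$. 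This forces the coefficients to sum to zero, so the positive and negative parts normalize to operationally equivalent convex mixtures that have different representations.

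For $R$, however, the weight-equalization step fails as written. A left-kernel vector $c$ of $C_{[\e]}$ can have $\sum_k c_k\neq 0$, for example from an outcome that never occurs or from a coarse-graining relation. Padding with block sums of primitive events cannot repair this. Every block-sum relation has coefficient sum $|O_l|-|O_{l'}|=0$ when all measurements have the same number of outcomes, which the paper assumes without loss of generality.

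The missing ingredient is coarse-graining. The single event $\bigcup_k\e^l_k$ has the same statistics $\boldsymbol{1}^\top$ as the block sum, but weight $1$ rather than $|O_l|$. By the union rule in Eq.~\eqref{eq:PVVM} it is represented by $u$, and ontologically by $\boldsymbol{1}$. Write $a=\sum_k c^+_k>b=\sum_k c^-_k$. Add $\beta$ copies of this event to the $c^+$ side and $\beta$ copies of the block sum to the $c^-$ side, with $\beta=(a-b)/(|O_l|-1)$. The weights become equal, the statistics still agree, and the response functions still differ by $c^\top R\neq 0$.

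This is not cosmetic. If only convex mixtures of primitive events count, the claim about $R$ can fail. Take an outcome that never occurs, with its response function supported on an unpopulated ontic point. Such a model respects every equivalence between convex mixtures of primitive events, yet has $\rank(R)>r_\e$.
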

\begin{proof}
    Lemmas~\ref{lemma:ont_nonnegpreGPT} and~\ref{lemma:NCOM_nonnegGPT} established an equivalence between ontological models and nonnegative preGPTs, and between NCOMs and nonnegative GPTs respectively. The result follows from the application of Theorem~\ref{thm:GPT model} to nonnegative preGPTs. 
\end{proof}
The conditions in Eq.~\eqref{eq:NCOM_cond} are the generalization of the equirank conditions introduced in Ref.~\cite{shahandeh2025unifiedlinearalgebraicframework} for PM scenarios. The violation of the first, second and third line of Eq.~\eqref{eq:NCOM_cond} imply measurement, preparation and transformation contextuality, respectively. 

We note that these conditions certify that operational equivalences 
at each stage are respected at the ontological level; equivalences between composite procedures may not be respected. For example, consider two sentences $\Pi=\pp_i\TT^1_j$ and $\Pi'=\pp_k\TT^1_l$ in the operational theory. Each corresponds to a particular preparation followed by a particular transformation. If every measurement following $\Pi$ or $\Pi'$ yields the same statistics in either case, then these composite procedures are experimentally indistinguishable, $\Pi\simeq \Pi'$. However, Eq.~\eqref{eq:NCOM_cond} does not entail that their representations, $\Gamma^{1j}\mu^i$ and $\Gamma^{1l}\mu^k$, are identical. We argue that this is natural, given that composite equivalences of this kind fall outside the purview of the noncontextuality conditions in Eq.~\eqref{eq:PTMNCOM}. 

For completeness, we nevertheless note that composite equivalences can be treated with rank constraints in a similar fashion. For instance, suppose we require all equivalences between transformed preparations, such as $\pp_i\TT^1_j\simeq \pp_k\TT^1_l$, to be respected. We note that $r_\e=\rank(C_{[\e]})$ is the rank of the mode-$\e$ unfolding of $C$, whose columns are labelled by all possible pairs of transformations and preparations. The matrix factor associated with transformed preparations is $E'$, defined as $E'_{:(ij)}=\Gamma^{1j}\mu^i$. Therefore, $\rank(E')=r_{\e}$ amounts to the requirement that all composite equivalences $\pp_i\TT^1_j\simeq \pp_k\TT^1_l$ are respected, i.e. transformed preparations are represented noncontextually. We do not apply these extra constraints to our factorizations, but an example showcasing single-stage vs composite equivalences is presented in Sec.~\ref{sec:temporal}.

Finally, in Appendix~\ref{sec:appA} we display factorizations of each flattening $C_{[\e]},C_{[\TT^1]}$ and $C_{[\pp]}$ of the COPE 3-tensor in terms of $R,G^1$ and $E$, and explain their interpretations. 

So far, we have characterized NCOMs as nonnegative factorizations of the COPE tensor satisfying the rank constraints in Theorem~\ref{thm:PTM-NC-rank-cnstrnts}, with $E$, every transformation matrix $\Gamma^{1j}$, and each measurement block of $R$ column-stochastic. However, when searching for an NCOM, normalization need not be imposed separately on each measurement block of $R$. Under the rank condition $\rank(R)=\rank(C_{[\e]})$, requiring each column of $R$ to sum to the number of measurements, $n_\MM$, already guarantees that each measurement block is column-stochastic. We establish this in the following lemma.

\begin{lemma}
\label{lemma:blocknormalization}
Suppose a COPE $3$-tensor for a $\pp\TT^1\MM$ scenario with $n_\MM$ measurements admits a nonnegative factorization
\begin{equation}
    C_{ijk}=R_{i:}\Gamma^{1j}E_{:k},
\end{equation}
where the epistemic state matrix $E$ and every transformation matrix $\Gamma^{1j}$ are column-stochastic. Suppose further that $R$ has $d$ columns and satisfies
\begin{equation}
    \begin{split}
        &\rank(R)=\rank(C_{[\e]}),\\
        &\sum_i R_{it}=n_\MM,
    \end{split}
\end{equation}
for each column $R_{:t}$.

Let $O_l$ denote the set of outcomes of measurement $\MM_l$. Then each measurement block of $R$ is column-stochastic:
\begin{equation}
    \sum_{l_k\in O_l}R_{(l,l_k)t}=1 \quad \forall\,l,t.
\end{equation}
Thus $R$ is a normalized response function matrix. Additionally, this holds for any number of transformation stages.
\end{lemma}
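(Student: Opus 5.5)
The plan is to show that each block-sum vector $r^l := \sum_{l_k\in O_l} R_{(l,l_k):}$ equals the all-ones row vector $\mathbf 1^\top$. I will use the rank condition to force all $r^l$ to coincide, and then use the column-sum condition to identify their common value.

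\textbf{Step 1: each $r^l$ sums to one on the relevant states.} Form the mode-$\e$ unfolding $C_{[\e]} = R E'$, where $E'_{:(jk)}=\Gamma^{1j}E_{:k}$ (or, for $m$ stages, the product of all stage matrices applied to $E_{:k}$). Each $E'_{:(jk)}$ is column-stochastic, being a product of column-stochastic matrices. Because every measurement block of the COPE tensor is column-stochastic, summing the rows of block $l$ gives $r^l E' = \mathbf 1^\top$ for every $l$. Hence $(r^l - r^{l'})E' = 0$ for all pairs $l,l'$.

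\textbf{Step 2: the rank condition removes the left kernel.} We have $\rank(C_{[\e]}) \le \rank(R)$ and also $\rank(C_{[\e]})=\rank(RE')\le\rank(E')$. The key point is that $\rank(R)=\rank(RE')$ forces $\mathrm{row}(R)\cap \ker(E'^\top)=\{0\}$. To see this, note that right-multiplication by $E'$ restricted to $\mathrm{row}(R)$ is a surjection onto $\mathrm{row}(RE')$. Since the two spaces have equal dimension, this map is injective. Each $r^l-r^{l'}$ is a sum of rows of $R$, so it lies in $\mathrm{row}(R)$, and by Step 1 it is annihilated by $E'$. Therefore $r^l=r^{l'}=:r$ for all $l$.

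\textbf{Step 3: identify the common value.} Summing over all measurements gives $\sum_i R_{it}=\sum_l r^l_t = n_\MM r_t$. The hypothesis $\sum_i R_{it}=n_\MM$ then gives $r_t=1$ for every $t$, which is exactly block column-stochasticity.

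For arbitrary numbers of stages, the same argument applies verbatim: $E'$ is replaced by the matrix of all composed transformed preparations, which is still column-stochastic, and $\rank(R)=\rank(C_{[\e]})$ is used in the same way.

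The main obstacle is Step 2. The injectivity argument needs $\rank(C_{[\e]})$ to be the rank of the product $RE'$ in which $r^l-r^{l'}$ is killed, so I must make sure the unfolding is written with all preparation–transformation pairs as columns. Note that Step 2 does not rely on any left-kernel statement about $E'$ alone, so it does not matter that $E'$ may have a nontrivial left kernel.
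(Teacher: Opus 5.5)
Your proof is correct and follows essentially the same route as the paper. Both arguments write $C_{[\e]}=RE'$, note that the differences of block row-sums of $C_{[\e]}$ vanish, use $\rank(R)=\rank(C_{[\e]})$ to transfer that linear relation to the rows of $R$, and then use the column sum $n_\MM$ to fix each block sum to one. The only difference is phrasing: you state the transfer as injectivity of $x\mapsto xE'$ on $\mathrm{row}(R)$, while the paper states it as equality of the column spaces of $C_{[\e]}$ and $R$, which is equivalent to equality of their left kernels.
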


\begin{proof}
Consider the mode-$\e$ unfolding, $C_{[\e]}=RE'$, where the columns of $E'$ represent all transformed preparations:
\begin{equation}
    E'_{:(ij)}=\Gamma^{1j}E_{:i} \quad \forall\,i,j.
\end{equation}
For each measurement $\MM_l$, normalization of its outcome probabilities implies
\begin{equation}
    \sum_{l_k\in O_l}\bigl(C_{[\e]}\bigr)_{(l,l_k)t}=1 \quad \forall\,l,t.
\end{equation}
since the observed probabilities are normalized separately for each measurement. Consequently for any two measurements $\MM_l$ and $\MM_{l'}$,
\begin{equation}
\label{eq:homogeneousrelation}
    \sum_{l_k\in O_l}\bigl(C_{[\e]}\bigr)_{(l,l_k):}-\sum_{l'_k\in O_{l'}}\bigl(C_{[\e]}\bigr)_{(l',l'_k):}=0.
\end{equation}

The factorization $C_{[\e]}=RE'$ implies
\begin{equation}
    \text{cspan}(C_{[\e]})\subseteq\text{cspan}(R).
\end{equation}
By the rank condition $\rank(R)=\rank(C_{[\e]})$, these column spaces have the same dimension and hence are equal. Therefore, every linear relation among the rows of $C_{[\e]}$ also holds among the corresponding rows of $R$. Therefore, Eq.~\eqref{eq:homogeneousrelation} implies,
\begin{equation}
    \sum_{l_k\in O_l}R_{(l,l_k):}-\sum_{l'_k\in O_{l'}}R_{(l',l'_k):}=0 \quad \forall\,l,l'.
\end{equation}
Thus, for any fixed column of $R$, the sum of its entries within a measurement block is the same for every measurement. Since
\begin{equation}
    \sum_i R_{it}=n_\MM
\end{equation}
for every column $R_{:t}$, and there are $n_\MM$ measurement blocks, it follows that
\begin{equation}
    \sum_{l_k\in O_l}R_{(l,l_k)t}=1 \quad \forall\,l,t.
\end{equation}
Hence each measurement block of $R$ is column-stochastic.
\end{proof}

\begin{exbox}
\begin{example}
\textbf{2D toy theory: NCOM.}          
\label{ex:noncont}    
an NCOM for $C^{2D}$ is
\begin{equation}
R = E =
\frac{1}{2}\begin{pmatrix}
1 & 1 & 0 & 0\\
0 & 1 & 1 & 0\\
0 & 0 & 1 & 1\\
1 & 0 & 0 & 1
\end{pmatrix},
\end{equation}
\begin{equation}
\Gamma^{11} =
\frac{1}{2}\begin{pmatrix}
1 & 1 & 0 & 0\\
0 & 1 & 1 & 0\\
0 & 0 & 1 & 1\\
1 & 0 & 0 & 1
\end{pmatrix},
\end{equation}
\begin{equation}
\Gamma^{12} =
\frac{1}{2}\begin{pmatrix}
1 & 0 & 0 & 1\\
1 & 1 & 0 & 0\\
0 & 1 & 1 & 0\\
0 & 0 & 1 & 1
\end{pmatrix},
\end{equation}
\begin{equation}
\Gamma^{13} =
\frac{1}{2}\begin{pmatrix}
0 & 0 & 1 & 1\\
1 & 0 & 0 & 1\\
1 & 1 & 0 & 0\\
0 & 1 & 1 & 0
\end{pmatrix},
\end{equation}
\begin{equation}
\Gamma^{14} =
\frac{1}{2}\begin{pmatrix}
0 & 1 & 1 & 0\\
0 & 0 & 1 & 1\\
1 & 0 & 0 & 1\\
1 & 1 & 0 & 0
\end{pmatrix}.
\end{equation}
The unit response function is the vector of all ones as required.
To see that this model is noncontextual, first observe that
\begin{equation*}
    \rank(R)=\rank(E)=3.
\end{equation*}
The rows of the matrix $G^1$ are the row-flattened $\Gamma^{1i}$ matrices. Noting that 
\begin{equation*}
    \Gamma^{11}-\Gamma^{12}+\Gamma^{13}-\Gamma^{14}=0
\end{equation*}
we have 
\begin{equation*}
    \rank(G^1)=\dim \text{span}(\{\Gamma^{11},\Gamma^{12},\Gamma^{13},\Gamma^{14}\})=3.
\end{equation*}
The Tucker rank of $C^{2D}$ then satisfies
\begin{equation*}
    \begin{split}
        (r_\e,r_{\TT^1},r_\pp)&=(\rank(R),\rank(G^1),\rank(E))\\&=(3,3,3)
    \end{split}
\end{equation*}
as in Example~\ref{ex:rebit_GPT}.
\end{example}
\end{exbox}

\subsubsection{Complexity of computing NCOMs in $\pp\TT^1\MM$ Scenario} \label{sec:computingNCOMs}
In the previous section we showed that an NCOM of a PTM scenario amounts to a nonnegative factorization of the COPE 3-tensor satisfying particular rank constraints.
In this section, we tackle the problem of determining whether such a factorization exists, and computing one if it does. We also offer a complexity analysis of this problem.
To do so, 
it is useful to develop a geometric interpretation of the conditions in Eq.~\eqref{eq:NCOM_cond}. 

These conditions specify the rank of nonnegative matrices appearing in a decomposition of the COPE tensor. The study of nonnegative \textit{matrix} factorizations with rank restrictions was introduced in Ref.~\cite{GILLIS20122685} and given the name of \textit{restricted nonnegative matrix factorization} (RNMF). For any nonnegative matrix $C$, an RNMF is a nonnegative factorization $C=RE$ such that $\rank(R)=\rank(C)$.
Note that in any real factorization $C=AB$, nonnegative or otherwise, the columns of $B$ can be interpreted as a complete set of points, consisting of vertices and possibly redundant points, whose convex hull is a polytope $\mathcal{B}$. Likewise, the rows of $A$ can be interpreted as a complete set of inequalities, consisting of facet inequalities and possibly redundant inequalities, whose intersection is a polytope $\mathcal{A}$.
This leads to the following geometric interpretation of RNMF~\cite{GILLIS20122685}.
\begin{lemma} \label{lemma:nestedpolytopes}
    Let $C$ be an $m\times n$ matrix of rank $r$ and column sum $l$ with a real factorization $C=AB$ of inner dimension $r$. Define the polyhedron $\mathcal{A}=\{x\in\operatorname{rspan}(A)|Ax\geq0, \sum_{i}(Ax)_i=l\}$ and the polytope $\mathcal{B}:=\operatorname{conv}\{B_{:1},\ldots,B_{:n}\}$. 
    Then, there exists an RNMF $C=RE$ of inner dimension $k$ with $\rank(R)=r$ if and only if there exists a nested polytope $\mathcal{G}\subset\operatorname{rspan}(A)$ with at most $k$ vertices such that $\mathcal{B}\subseteq\mathcal{G}\subseteq\mathcal{A}$.
\end{lemma}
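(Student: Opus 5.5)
We prove the two directions separately, working throughout in the $r$-dimensional space $\operatorname{rspan}(A)$. Since $C=AB$ has inner dimension $r=\rank(C)$, both $A$ (of size $m\times r$) and $B$ (of size $r\times n$) have rank $r$. Hence $A$ is injective on $\mathbb{R}^r=\operatorname{rspan}(A)$. The map $x\mapsto Ax$ therefore identifies $\mathbb{R}^r$ with $\operatorname{cspan}(C)$.

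\emph{Forward direction.} Suppose $C=RE$ is an RNMF with $R\geq0$, $E\geq0$, $\rank(R)=r$ and inner dimension $k$.
\begin{enumerate}
\item Because $\operatorname{cspan}(C)\subseteq\operatorname{cspan}(R)$ and the two have equal dimension, we get $\operatorname{cspan}(R)=\operatorname{cspan}(C)=\operatorname{cspan}(A)$. So there is a unique $r\times k$ matrix $G$ with $R=AG$.
\item Then $AB=C=AGE$, and injectivity of $A$ gives $B=GE$.
\item We may assume $E$ has no zero rows, since such rows and the corresponding columns of $R$ can be deleted without changing $\rank(R)$ or the product. Rescale each row of $E$ so that it sums to $1$, and rescale the matching column of $R$ inversely. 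After this, $E$ can be made column-stochastic by the usual normalization argument; this is the one point where we use the column-sum hypothesis $\mathbf 1^\top C=l\mathbf 1^\top$.
\item With $E$ column-stochastic, each $B_{:j}$ is a convex combination of the columns of $G$. Set $\mathcal G=\operatorname{conv}\{G_{:1},\dots,G_{:k}\}$; then $\mathcal B\subseteq\mathcal G$ and $\mathcal G$ has at most $k$ vertices.
\item Each vertex $G_{:t}$ satisfies $AG_{:t}=R_{:t}\geq0$. Under the scaling of step 3, $\mathbf 1^\top R_{:t}=l$ follows from $\mathbf 1^\top C=l\mathbf 1^\top$ (equivalently $\mathbf 1^\top R E=l\mathbf 1^\top$ with $E$ stochastic). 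Hence $G_{:t}\in\mathcal A$.
\item $\mathcal A$ is convex, so $\mathcal G\subseteq\mathcal A$.
\end{enumerate}

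\emph{Reverse direction.} Suppose $\mathcal B\subseteq\mathcal G\subseteq\mathcal A$ with $\mathcal G=\operatorname{conv}\{g_1,\dots,g_k\}$.
\begin{enumerate}
\item Let $G$ be the matrix with columns $g_t$, and set $R:=AG$. Since $g_t\in\mathcal A$, we have $R\geq0$.
\item Since $B_{:j}\in\mathcal G$, we can write $B_{:j}=GE_{:j}$ with $E_{:j}$ a probability vector. Then $RE=AGE=AB=C$, and $E\geq0$.
\item Every $B_{:j}$ lies in $\operatorname{cspan}(G)$, so $\operatorname{cspan}(G)\supseteq\operatorname{cspan}(B)=\mathbb R^r$ because $\rank B=r$. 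Thus $\rank G=r$, and injectivity of $A$ gives $\rank R=r$.
\end{enumerate}
This produces an RNMF of inner dimension $k$.

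\emph{Main obstacle.} The delicate step is the normalization in the forward direction. We must be able to choose the factorization so that $E$ is column-stochastic and, simultaneously, every column of $R$ sums to exactly $l$; only then do the vertices land in $\mathcal A$ rather than in some rescaled version of it. We first try the following argument. The column-sum vector $\mathbf 1^\top R$ equals $\mathbf 1^\top AG=a^\top G$ with $a^\top:=\mathbf 1^\top A$. From $a^\top B=l\mathbf 1^\top$, the affine hyperplane $\{x: a^\top x=l\}$ contains $\mathcal B$ and hence its affine hull. Now rescale each column $G_{:t}$ by a positive scalar, with the inverse scaling on row $t$ of $E$, so that $a^\top G_{:t}=l$. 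This requires $a^\top G_{:t}>0$. That holds because $AG_{:t}=R_{:t}\geq0$ is nonzero, once zero columns of $R$ have been discarded. After the rescaling, $\mathbf 1^\top E_{:j}=1$ follows from $l=a^\top B_{:j}=a^\top GE_{:j}=l\,\mathbf 1^\top E_{:j}$. This makes $E$ stochastic and puts each $g_t$ in $\mathcal A$, as required.
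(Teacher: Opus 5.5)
Your proof is correct and takes essentially the same route as the paper's. In one direction you set $R=AG$ and read $E$ off the convex weights expressing the columns of $B$; in the other you use $\operatorname{cspan}(R)=\operatorname{cspan}(A)$ to write $R=AG$ and cancel $A$ to get $B=GE$.

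You also justify three points the paper handles loosely. The paper simply assumes ``without loss of generality'' that $R$ has column sums $l$ and $E$ is column-stochastic; your column rescaling in the last paragraph proves this, and that argument supersedes the muddled row-normalization remark in your Step~3. You derive $\rank(R)=r$ from $\rank G=r$, via $\operatorname{cspan}(B)=\mathbb{R}^r$. And you obtain $GE=B$ from injectivity of $A$ rather than by the paper's informal ``identification''.
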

\begin{proof}
    Firstly, for a column $B_{:j}$ of $B$, $AB_{:j}=C_{:j}\geq0$ and $\sum_i(AB_{:j})_i=\sum_iC_{ij}=l$. Hence every column of $B$ lies in $\mathcal A$, and therefore $\mathcal B\subseteq\mathcal A$.
    
    Next, assume that such a $\mathcal G$ exists, and let $G$ be a matrix with $k$ columns whose convex hull is $\mathcal G$. Clearly, $\mathcal G\subseteq\mathcal A$ implies $AG\geq0$. Since $\mathcal{B}\subseteq\mathcal{G}$, each column of $B$ is in the convex hull of the columns of $G$. Therefore, there exists a nonnegative matrix $E$ such that $GE=B$. Then, defining $R:=AG$, we have $RE=AGE=AB=C$. We know $\rank(R)=r$ since $\rank(A)=r$, and $R$ has $k$ columns, therefore, $C=RE$ is an RNMF of inner dimension $k$.

    For the reverse direction, let $C=RE$ be an RNMF of inner dimension $k$, where, without loss of generality, $R$ has column-sum $l$, and $E$ is column-stochastic. Then, since $\rank(R)=\rank(A)$, there exists a matrix $G$ such that $R=AG$ and $\text{cspan}(G)=\text{rspan}(A)$. Now, we have $RE=A(GE)$. Since $\text{cspan}(GE)=\text{cspan}(G)=\text{rspan}(A)$, we can identify $GE=B$. Therefore, $B$ is in the convex hull of the columns of $G$.
\end{proof}
Note that an RNMF is always possible when $k$ equals the number of vertices of $\mathcal{A}$.
In this case, the intermediate polytope $\mathcal{G}$ is chosen to be $\mathcal{A}$ itself, which guarantees that $\mathcal{B}\subseteq\mathcal{G}$. 
We call this decomposition the
\textit{extremal} RNMF and the associated restricted extremal nonnegative matrix factor $\bar{R}$.

\begin{lemma} \label{lemma:extremalfactor}

    For an $m\times n$ nonnegative matrix $C$ of rank $r$, there exists a rank-$r$ nonnegative matrix $\bar{R}$ such that for any RNMF $C=RE$, $R$ can be written as $R=\bar{R}M$, for some nonnegative matrix $M$.
    $\bar{R}$ has $m^{O(r)}$ columns, and the complexity of computing it is at most $m^{O(r)}+\operatorname{poly}(m,n)$.
\end{lemma}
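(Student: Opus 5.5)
We take $\bar R := A V$, where $C=AB$ is a rank-$r$ real factorization and $V$ is the matrix whose columns are the vertices of the polytope $\mathcal A$ from Lemma~\ref{lemma:nestedpolytopes}. The proof then splits into a structural claim and a counting/algorithmic claim.

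\emph{Setup.} First compute a rank-$r$ real factorization $C=AB$ with inner dimension $r$, for instance by Gaussian elimination or SVD in $\operatorname{poly}(m,n)$ time. Let $l$ be the column sum of $C$. If column sums differ, rescale columns of $C$ by positive diagonal factors; this is absorbed into $E$ and does not affect $R$. Form $\mathcal A=\{x\in\operatorname{rspan}(A): Ax\geq 0,\ \sum_i(Ax)_i=l\}$. It is cut out by $m$ inequalities and one equality in an $r$-dimensional space. Let $V$ be the matrix whose columns are the vertices of $\mathcal A$, and set $\bar R:=AV$. Then $\bar R\geq 0$ because $V\subset\mathcal A$, and $\rank(\bar R)=r$ because the vertices span $\operatorname{rspan}(A)$ (affinely spanning the slice $\sum(Ax)=l$, which does not pass through the origin).

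\emph{Universality.} Take any RNMF $C=RE$, normalized without loss of generality as in the reverse direction of Lemma~\ref{lemma:nestedpolytopes}: $R$ has column sum $l$ and $E$ is column stochastic. That proof gives $R=AG$ with $\operatorname{cspan}(G)=\operatorname{rspan}(A)$. Each column $G_{:t}$ satisfies $AG_{:t}=R_{:t}\geq 0$ and $\sum_i (AG_{:t})_i=l$, so $G_{:t}\in\mathcal A$. Hence $G_{:t}=VM_{:t}$ for some stochastic vector $M_{:t}$, and therefore $R=AVM=\bar R M$ with $M\geq 0$. Zero columns of $R$, or the unnormalized case, are handled by a nonnegative column scaling of $M$.

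\emph{Boundedness.} The one genuine subtlety, and the step I expect to be the main obstacle, is that $\mathcal A$ must be a polytope rather than an unbounded polyhedron; otherwise vertices would not suffice. To show this, suppose $d\in\operatorname{rspan}(A)$ is a recession direction, so $Ad\geq0$ and $\sum_i(Ad)_i=0$. Together these force $Ad=0$. Since $A$ is injective on $\operatorname{rspan}(A)$, it follows that $d=0$. So $\mathcal A$ is bounded, and $\mathcal A$ is nonempty since it contains the columns of $B$.

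\emph{Counting and complexity.} The set $\mathcal A$ is an $(r-1)$-dimensional polytope with at most $m$ facets. Each vertex is determined by choosing $r-1$ tight inequalities, so there are at most $\binom{m}{r-1}=m^{O(r)}$ vertices; the upper bound theorem gives the same order. To enumerate them, iterate over all $(r-1)$-subsets of rows. For each subset, solve the $r\times r$ linear system consisting of those tight constraints plus the normalization, in $\operatorname{poly}(r)$ time. Keep the solution if it is unique and satisfies $Ax\geq0$, which costs $O(mr)$. This takes $m^{O(r)}$ time overall. Adding the $\operatorname{poly}(m,n)$ cost of the initial factorization and the product $AV$ gives the claimed bound $m^{O(r)}+\operatorname{poly}(m,n)$.
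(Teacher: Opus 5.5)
Your proposal is correct and follows the paper's proof essentially step for step. You set $\bar R = A\bar G$ with $\bar G$ the vertex matrix of $\mathcal A$, prove $\mathcal A$ bounded because a recession direction would need $\sum_i (Ad)_i = 0$ with $Ad \geq 0$, and get universality from the reverse direction of Lemma~\ref{lemma:nestedpolytopes}. Your direct membership argument ($G_{:t}\in\mathcal A$), the explicit check that $\rank(\bar R)=r$, and the elementary $\binom{m}{r-1}$ brute-force vertex enumeration are minor tightenings of steps the paper handles by citation or leaves implicit.
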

\begin{proof}
    We will give a constructive proof of the existence of such a matrix $\bar{R}$. We assume w.l.g. that $C$ has a column sum of some integer number $l$. 
    Given a rank factorization $C=AB$, define the polytopes $\mathcal{A}$ and $\mathcal{B}$ as in Lemma~\ref{lemma:nestedpolytopes}, as well as $r:=\operatorname{rank}(C)$.
    We first show that $\mathcal{A}$ is bounded. Assume for contradiction that it is unbounded: then there exists a set of points $x+ty$ in $\mathcal{A}$, for some fixed nonzero vectors $x,y$, and a scalar $t\geq0$. Since $\sum_i (A(x+ty))_i=l$ must hold for $t\ge 0$, we must have $\sum_i (Ay)_i=0$. However, $y$ is nonzero and $A$ is full rank, so $Ay\neq0$. Therefore $Ay$ contains a negative element. This means that $A(x+ty)$ contains a negative element for large enough $t$, which is a contradiction.

    Given that $\mathcal{A}$ is bounded, any nested polytope $\mathcal{G}$ is in the convex hull of the vertices of $\mathcal{A}$. Let $\bar{G}$ be the matrix whose columns are vertices of $\mathcal{A}$. By Lemma~\ref{lemma:nestedpolytopes} we have that if $C=RE$ is an RNMF, $R=AG$ where the columns of $G$ are vertices of the nested polytope $\mathcal{G}\subseteq \mathcal{A}$. The matrix $G$ can therefore be written as $G=\bar{G}M$, for some nonnegative matrix $M$. Therefore, $R=AG=A\bar{G}M$ for some $M$. The desired matrix is given by $\bar{R}=A\bar{G}$.

    It is known that vertex enumeration of a polytope with $m$ facets in $r$ dimensions has complexity $m^{O(r)}$ and the number of vertices is $m^{O(r)}$~\cite{Khachiyan2008}. The complexity of computing the rank factorization $C=AB$ is $\operatorname{poly}(m,n)$, hence the complexity result.
\end{proof}
\noindent A similar argument can be applied to $C^\top$ to construct an extremal matrix, $\bar{E}$.

\begin{lemma}
\label{lemma:extremalfactor2}
    For an $m\times n$ nonnegative matrix $C$ of rank $r$, there exists a rank-$r$ nonnegative matrix $\bar{E}$ such that for any RNMF of $C^\top$, $C^\top=E^\top R^\top$, $E$ can be written as $E=M'\bar{E}$, for some nonnegative $M'$.
    This $\bar{E}$ has $n^{O(r)}$ rows, and the complexity of computing it is $n^{O(r)}+\operatorname{poly}(m,n)$.
\end{lemma}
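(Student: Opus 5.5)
The plan is to obtain Lemma~\ref{lemma:extremalfactor2} from Lemma~\ref{lemma:extremalfactor} by transposition, and then check that the normalization and dimension bookkeeping survive the swap of roles between rows and columns.

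Apply Lemma~\ref{lemma:extremalfactor} to the $n\times m$ nonnegative matrix $C^\top$, which has the same rank $r$. Before doing so, rescale so that $C^\top$ has constant column sums; this is the condition needed in Lemma~\ref{lemma:nestedpolytopes}.
\begin{itemize}
\item In the COPE setting, the rows of $C$ indexed by a single measurement block sum to one in every column. The relevant normalization for $C^\top$ is instead that its columns (the rows of $C$) should have equal sums.
\item I would impose this by a positive diagonal rescaling $C\mapsto D C$, where $D=\operatorname{diag}(\cdot)$ normalizes each nonzero row sum of $C$ to $1$. Zero rows are discarded, since they contribute only zero rows of $E^\top$.
\item This rescaling preserves rank. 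It maps RNMFs of $C^\top$ bijectively to RNMFs of $(DC)^\top=C^\top D$ via $R^\top\mapsto R^\top D$, and it leaves $E$ untouched. So it suffices to prove the claim for the rescaled matrix.
\end{itemize}

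Next, take a rank factorization $C^\top=B^\top A^\top$. Form the polytope $\mathcal{A}'=\{x\in\operatorname{rspan}(B^\top): B^\top x\ge 0,\ \sum_i (B^\top x)_i=l\}$, which is cut out by $n$ inequalities in dimension $r$.
\begin{itemize}
\item Boundedness follows verbatim from the argument in Lemma~\ref{lemma:extremalfactor}: $B^\top$ is full column rank, so any recession direction $y$ would give $B^\top y\neq 0$ with zero sum, hence a negative entry.
\item Let $\bar G'$ collect the vertices of $\mathcal{A}'$ and set $\bar E^\top:=B^\top\bar G'$, that is, $\bar E=\bar G'^\top B$. This is nonnegative and has rank $r$.
\item Given any RNMF $C^\top=E^\top R^\top$ with $\rank(E^\top)=r$, Lemma~\ref{lemma:nestedpolytopes} gives $E^\top=B^\top G'$ with the columns of $G'$ in $\mathcal{A}'$. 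Hence $G'=\bar G' N$ with $N\ge 0$ column-stochastic, and so $E^\top=\bar E^\top N$.
\item Transposing gives $E=M'\bar E$ with $M'=N^\top\ge0$, as claimed.
\end{itemize}

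For the complexity bound, the rank factorization costs $\operatorname{poly}(m,n)$. Vertex enumeration of a polytope with $n$ facets in dimension $r$ costs $n^{O(r)}$ and produces $n^{O(r)}$ vertices~\cite{Khachiyan2008}. This yields the stated row count of $\bar E$ and the running time $n^{O(r)}+\operatorname{poly}(m,n)$.

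The only delicate step is the normalization. The constant-column-sum hypothesis of Lemma~\ref{lemma:nestedpolytopes} is not automatically satisfied by $C^\top$, so the diagonal rescaling, and the fact that it commutes with the extremal-factor property, is where I expect care to be needed. Everything else is a direct transcription of the proof of Lemma~\ref{lemma:extremalfactor}.
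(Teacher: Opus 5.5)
Your proposal is correct and is essentially the paper's own argument: the paper only remarks that the proof of Lemma~\ref{lemma:extremalfactor} carries over to $C^\top$. One small correction is needed. The rows of $E$ need not sum to $l$, so the columns of $G'$ lie in the cone $\{x : B^\top x \ge 0\}$ generated by the vertices of $\mathcal{A}'$, not in $\mathcal{A}'$ itself. Hence $N$ is nonnegative but not column-stochastic in general, which is all the statement requires. The same cone argument shows your rescaling by $D$ is harmless but not actually needed. Also, zero rows of $C$ are zero columns of $C^\top$, so they constrain $R^\top$, not $E^\top$.
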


Now, we can use Lemmas~\ref{lemma:extremalfactor} and~\ref{lemma:extremalfactor2} to decide if an NCOM of an operational theory exists, and compute one if it does. For COPE matrices the rank $r$ and the minimum GPT dimension coincide; for COPE tensors we will see that the latter quantity continues to play a key role in the complexity of computing an NCOM. Recall that the minimum GPT dimension $r$ for a COPE 3-tensor with Tucker rank $(r_\e, r_{\TT^1},r_\pp)$ is given by $r=\max\{r_\e,r_\pp\}$.

\begin{lemma}
\label{lemma:extremalnormalform}
Suppose a $\pp\TT^1\MM$ scenario with COPE tensor $C$ admits an NCOM
\begin{equation}
C_{:j:}=R\Gamma^{1j}E .
\end{equation}
Let $\bar R$ be the extremal factor obtained from  $C_{[\e]}$ according to Lemma~\ref{lemma:extremalfactor}. By Lemma~\ref{lemma:blocknormalization}, each represented measurement block in $\bar R$ is column-stochastic.
Moreover, let $\bar E$ be the extremal state factor obtained from  $C_{[\pp]}$ according to Lemma~\ref{lemma:extremalfactor2}.
Then, $C$ admits another NCOM
\begin{equation}
C_{:j:}=\bar R\tilde\Gamma^{1j}\tilde E ,
\end{equation}
where $\tilde E$ is obtained from $\bar E$ by applying a nonnegative diagonal rescaling. In particular, $\bar R$ and $\tilde E$ are normalized, and every $\tilde\Gamma^{1j}$ is column-stochastic, after duplicating columns and appending rows of zeros if required.
\end{lemma}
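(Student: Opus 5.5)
The plan is to start from the given NCOM $C_{:j:}=R\Gamma^{1j}E$ and replace $R$ and $E$ separately by the extremal factors. Both replacements are justified because each factor solves an RNMF of the appropriate unfolding. Along the way we keep track of normalization and stochasticity.

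\textbf{Replacing $R$.} The mode-$\e$ unfolding factorizes as $C_{[\e]}=RE'$ with $E'_{:(ij)}=\Gamma^{1j}E_{:i}$. By Theorem~\ref{thm:PTM-NC-rank-cnstrnts}, $\rank(R)=r_\e=\rank(C_{[\e]})$, so this is an RNMF. Lemma~\ref{lemma:extremalfactor} then gives a nonnegative $M$ with $R=\bar R M$.

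\textbf{Replacing $E$.} Symmetrically, $C_{[\pp]}=R'E$ with $R'_{(ij):}=R_{i:}\Gamma^{1j}$ and $\rank(E)=r_\pp$. Lemma~\ref{lemma:extremalfactor2} gives a nonnegative $M'$ with $E=M'\bar E$.

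Substituting both, we get $C_{:j:}=\bar R\,(M\Gamma^{1j}M')\,\bar E$. The candidate transformations are therefore $\hat\Gamma^{1j}:=M\Gamma^{1j}M'$, which are nonnegative. The rank conditions of Eq.~\eqref{eq:NCOM_cond} are automatic:
\begin{itemize}
\item $\rank(\bar R)=r_\e$ and $\rank(\bar E)=r_\pp$ by the two lemmas.
\item The flattening $\hat G^1$ satisfies $\hat G^1=G^1(M^\top\otimes M')$ up to ordering conventions, so $\rank(\hat G^1)\le r_{\TT^1}$.
\item Conversely, $C_{[\TT^1]}$ factors through $\hat G^1$, which forces $\rank(\hat G^1)\ge r_{\TT^1}$.
\end{itemize}

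\textbf{Normalization.} For $\bar R$, the $\bar R$ of Lemma~\ref{lemma:extremalfactor} is built with column sum $n_\MM$ (take $l=n_\MM$) and has rank $r_\e$. Lemma~\ref{lemma:blocknormalization} then shows it is block column-stochastic. For $E$, since it is column-stochastic, $\mathbf 1^\top M'\bar E=\mathbf 1^\top$. We define $D=\operatorname{diag}(\mathbf 1^\top M')$, restricted to the rows where $\mathbf 1^\top M'$ is nonzero; zero rows of $\bar E$ that are unused can be dropped or left with zero weight. We then set $\tilde E:=D\bar E$, which is column-stochastic, and $\tilde\Gamma^{1j}:=\hat\Gamma^{1j}D^{\dagger}$.

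\textbf{Main obstacle: column-stochasticity of $\tilde\Gamma^{1j}$.} First, the columns of $M'D^\dagger$ sum to one and $\Gamma^{1j}$ is stochastic. It remains to control $\mathbf 1^\top M$, since $M$ need not have unit column sums. Here we use the normalization of $R$ and $\bar R$. Summing one measurement block gives $\mathbf 1^\top=\mathbf 1_{\rm block}^\top R=\mathbf 1_{\rm block}^\top\bar R M=\mathbf 1^\top M$, because $\bar R$ is block-stochastic. So $M$ is column-stochastic, and hence so is $M\Gamma^{1j}M'D^\dagger$ on the support of $D$.

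The remaining steps handle shapes and degenerate cases. Where a row of $M'$ is zero, the corresponding column of $\tilde\Gamma^{1j}$ is zero. We overwrite that column with any stochastic column, as in the proof of Lemma~\ref{lemma:ont_nonnegpreGPT}; this is harmless because the matching row of $\tilde E$ is zero. If $\bar R$ and $\tilde E$ have different inner dimensions, we pad by duplicating columns of $\bar R$ and appending zero rows to $\tilde E$, so that every $\tilde\Gamma^{1j}$ becomes square. Stochasticity is preserved by making the duplicated ontic points' columns copies of existing ones, and the product is unchanged.
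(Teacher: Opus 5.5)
Your proposal is correct and follows essentially the same route as the paper. You write $R=\bar RM$ and $E=M'\bar E$ via the extremal-factor lemmas, absorb them into $\hat\Gamma^{1j}=M\Gamma^{1j}M'$, deduce that $M$ is column-stochastic from the block normalization of both $R$ and $\bar R$, and rescale by $\operatorname{diag}(\mathbf 1^\top M')$. The differences are cosmetic:
\begin{itemize}
\item You overwrite the resulting zero columns, which come from zero \emph{columns} of $M'$ rather than rows; the paper simply deletes them.
\item Your two-sided rank bound, $\hat G^1=G^1(M^\top\otimes M')$ for the upper bound and $C_{[\TT^1]}=\hat G^1(\bar R^\top\otimes\bar E)$ for the lower bound, is a more explicit version of the paper's remark that $\Gamma^{1j}\mapsto M\Gamma^{1j}M'$ preserves all operational equivalences.
\end{itemize}
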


\begin{proof}
By Lemmas \ref{lemma:extremalfactor} and \ref{lemma:extremalfactor2}, there are nonnegative matrices $M$ and $M'$ such that $R=\bar R M$ and $E=M'\bar E$.
Using a positive diagonal rescaling, the factor $\bar R$ can be assumed without loss of generality to be composed of column-stochastic measurement blocks. Hence both $R$ and $\bar R$ are composed of disjoint column-stochastic sets of rows. Therefore, $M$ is column-stochastic.
Defining $\widehat{\Gamma}^{1j}:=M\Gamma^{1j}M'$ and $v^{\top}:=\mathbf{1}^{\top}M'$, we obtain a new factorization $C_{:j:} = \bar R\widehat{\Gamma}^{1j}\bar E$. Using the normalization of the original NCOM factors,
\begin{equation}
\begin{split}
&\mathbf{1}^{\top} E=(\mathbf{1}^{\top} M') \bar E \implies v^{\top}\bar E = \mathbf{1}^{\top},\\
&\mathbf{1}^{\top}\widehat{\Gamma}^{1j} =(\mathbf{1}^{\top}M\Gamma^{1j})M'=\mathbf{1}^{\top}M'=v^{\top}.
\end{split}
\end{equation}
Now we consider deleting unused rows and columns of the factors. For any $v_a=0$, the $a$th column of every $\widehat{\Gamma}^{1j}$ is identically zero. Therefore, we may delete the $a$th column of $\widehat{\Gamma}^{1j}$ and the $a$th row of $\bar E$ without changing the COPE tensor. After repeating for all $v_a=0$, assume $v>0$.

In order to normalize the remaining unnormalized factors, we introduce the diagonal matrix $D^v:=\operatorname{diag}(v)$. Using this matrix, we rescale according to
\begin{equation}
\begin{split}
\tilde E &:= D^v\bar E,\\
\tilde\Gamma^{1j} &:= \widehat{\Gamma}^{1j}(D^v)^{-1}.
\end{split}
\end{equation}
Finally, to arrive at a consistent ontic dimension, rows of zeros can be appended to the $\tilde\Gamma^{1j}$ matrices and $\tilde E$, and columns of $\bar R$ and the $\tilde\Gamma^{1j}$ matrices can be duplicated.

This leads to the factorization $C_{:j:}=\bar R\tilde\Gamma^{1j}\tilde E$, where all factors satisfy the appropriate normalization conditions, and nonnegativity is preserved throughout.

It remains to note that none of the above operations increase the ranks relevant to the noncontextuality conditions in Eq.~\eqref{eq:NCOM_cond}. Deleting rows and columns, and applying a positive diagonal rescaling are linear operations. Likewise, the map $\Gamma^{1j}\mapsto M\Gamma^{1j}M'$ is linear. Indeed, any equality between convex mixtures of the transformation matrices can be written as
    $\sum_i \alpha_i\Gamma^{1i}=0$, where the coefficients $\alpha_i$
    are the difference of two probability distributions. By linearity,
    $\sum_i \alpha_i M\Gamma^{1i}M'=M\left(\sum_i\alpha_i\Gamma^{1i}\right)M'=0$.
In short, the transformation $\Gamma^{1i}\rightarrow M\Gamma^{1i}M'$ does not destroy any equivalences.

Therefore the new normalized ontological model is an NCOM.
\end{proof}

This above normal form allows us to set up the following linear program to decide contextuality in the $\pp\TT^1\MM$ scenario.

\begin{theorem}\label{thm:PTMLinprog}
    It can be decided whether a $\pp\TT^1\MM$ scenario in an operational theory with $n_\pp$ preparations, $n_\e$ measurement events, $n_{\TT^1}$ transformations, and minimum GPT dimension $r$ of its COPE tensor admits an NCOM in time $\poly((n_\e n_\pp)^{O(r)},n_{\TT^1})$.
\end{theorem}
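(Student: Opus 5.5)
The plan is to reduce the existence of an NCOM to the feasibility of a single linear program, using the normal form of Lemma~\ref{lemma:extremalnormalform}. First, I compute the Tucker rank $(r_\e,r_{\TT^1},r_\pp)$ of $C$ by rank factorizations of the three unfoldings, which takes $\poly(n_\e,n_{\TT^1},n_\pp)$ time. Next, I apply Lemma~\ref{lemma:extremalfactor} to $C_{[\e]}$ (an $n_\e\times n_{\TT^1}n_\pp$ matrix of rank $r_\e\le r$). This gives the extremal response factor $\bar R$ with $n_\e^{O(r)}$ columns in time $n_\e^{O(r)}+\poly(n_\e,n_{\TT^1}n_\pp)$, and I normalize its measurement blocks as in Lemma~\ref{lemma:blocknormalization}. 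Symmetrically, Lemma~\ref{lemma:extremalfactor2} applied to $C_{[\pp]}$ gives $\bar E$ with $n_\pp^{O(r)}$ rows. I expect one subtlety here: $C_{[\pp]}$ has $n_\e n_{\TT^1}$ rows, and the relevant polytope has only $n_\pp$ facet inequalities $E^\top$-side. So I will check that the vertex count is governed by $n_\pp^{O(r_\pp)}$ and not by the longer dimension, which holds because the constraints of $\mathcal A$ for the transposed problem are indexed by preparations.

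By Lemma~\ref{lemma:extremalnormalform}, if any NCOM exists then one exists of the form $C_{:j:}=\bar R\tilde\Gamma^{1j}\tilde E$, where $\tilde E=D\bar E$ for some nonnegative diagonal $D$. The unknowns are therefore the matrices $\tilde\Gamma^{1j}$ and the diagonal $D$. The product $\tilde\Gamma^{1j}D$ is what enters, so I substitute $X^{j}:=\tilde\Gamma^{1j}D\ge 0$, which makes the equations $C_{:j:}=\bar R X^j\bar E$ linear. Normalization becomes linear as well: $\mathbf 1^\top X^j=\mathbf 1^\top D=:d^\top$ must be the same vector $d$ for all $j$, and $d^\top\bar E=\mathbf 1^\top$. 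This recovers $\tilde\Gamma^{1j}=X^j D^{-1}$ on the support of $d$, with columns outside the support deleted as in the proof of Lemma~\ref{lemma:extremalnormalform}.

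The noncontextuality constraints also need to be linear. The conditions $\rank(\bar R)=r_\e$ and $\rank(\tilde E)=r_\pp$ hold automatically by Lemmas~\ref{lemma:extremalfactor},~\ref{lemma:extremalfactor2}, provided the rescaling only deletes rows; I will argue that deleted rows correspond to zero weights and leave the span of the states unchanged. For transformations, the condition $\rank(G^1)=r_{\TT^1}$ is equivalent to every linear relation among the slices $C_{:j:}$ also holding among the $\tilde\Gamma^{1j}$ (the reverse inclusion is automatic because $C_{:j:}=\bar R\tilde\Gamma^{1j}\tilde E$). So I compute a basis $\{\alpha^{(s)}\}$ of the left kernel of $C_{[\TT^1]}$ and impose $\sum_j\alpha^{(s)}_jX^j=0$. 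Since right multiplication by $D$ is invertible on the support, this is equivalent to the same relation for the $\tilde\Gamma^{1j}$.

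All constraints are then linear in $(X^1,\dots,X^{n_{\TT^1}},d)$, together with the requirement that $d$ be the diagonal underlying each $X^j$'s column sums. The number of variables is $n_{\TT^1}\cdot n_\e^{O(r)}n_\pp^{O(r)}+n_\pp^{O(r)}$ and the number of constraints is similar, so LP feasibility is decided in time $\poly((n_\e n_\pp)^{O(r)},n_{\TT^1})$. Any feasible point yields an NCOM, and by Lemma~\ref{lemma:extremalnormalform} infeasibility certifies that none exists. I expect the main obstacle to be the soundness of the reparametrization by $D$: I must ensure that the rank conditions and column-stochasticity survive when some $d_a=0$, and that no nonlinear coupling between $D$ and the $\tilde\Gamma^{1j}$ remains.
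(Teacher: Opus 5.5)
Your proposal is correct and follows the paper's proof. Your $X^j=\tilde\Gamma^{1j}D$ and $d$ are exactly the paper's unrescaled rows of $G^1$ and its vector $v$, subject to the same constraints $G^1J^1=C_{[\TT^1]}$, $G^1\Sigma=\boldsymbol{1}v^\top$, $v^\top\bar E=\boldsymbol{1}^\top$, and followed by the same deletion and rescaling by $D^v$. The only difference is how the transformation rank condition is encoded: you impose explicit left-kernel equalities $\sum_j\alpha^{(s)}_jX^j=0$, whereas the paper builds the condition in through the ansatz $G^{1\prime}=C_{[\TT^1]}J^{1\dagger}$ and the shear parametrization $G^1=G^{1\prime}Q$; both encodings are linear and equivalent. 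For the state rank after deleting rows, use $\rank(C_{[\pp]})\le\rank(\tilde E)\le\rank(\bar E)$ rather than a span argument.
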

    \begin{proof}
    Suppose an ontological model of the scenario is given by response function matrix $R$, epistemic state matrix $E$, and stochastic transformation matrices $\{\Gamma^{1i}\}_{i=1}^{n_{\TT^1}}$. Then, by Lemma~\ref{lemma:extremalnormalform}, it can be assumed without loss of generality that $R=\bar{R}$, and that $E=\bar{E}$ up to a positive diagonal rescaling.

    Now, to decide if the COPE admits an NCOM, we first find the extremal factors $\bar{R}$ and $\bar{E}$, which have at most $n_\e^{O(r)}$ columns and at most $n_\pp^{O(r)}$ rows, respectively. The complexity of this step is $n_\e^{O(r)}+n_\pp^{O(r)}$. Note that here, $\bar{E}$ is subject to a positive diagonal rescaling. From these factors, we define the matrix of epistemic state-response function transitions 
\begin{equation}
\label{eq:J1}
    J^1=\bar R^\top\otimes \bar E
\end{equation}
which is of size $(n_\e n_\pp)^{O(r)}\times n_\e n_\pp$.
Now, the problem of finding an NCOM can be reduced to finding a matrix $G^1$, which has the row-flattened transformation matrices as its rows, together with a vector $v$ encoding the diagonal rescaling of $\bar E$. In particular, $G^1$ must satisfy $G^1J^1=C_{[\TT^1]}$ and $\rank(G^1)=\rank(C_{[\TT^1]})$.

To encode the remaining normalization constraints, We choose the size of the ontic space $\Lambda$ to be the maximum of the number of columns of $\bar R$ and the number of rows of $\bar E$. Then, denote by $\Sigma$ the fixed matrix which maps the row-flattening of a transformation matrix to its vector of column sums. Explicitly, given that the row flattening of $\Gamma^{1i}$ is $G^1_{i:}$, we have that $(G^1_{i:} \Sigma)_j$ is the sum of entries in the column $\Gamma^{1i}_{:j}$.
Therefore, the remaining normalization constraints are given by,
\begin{equation}
\begin{split}
    G^1\Sigma &= \boldsymbol{1}^{n_{\TT^1}}v^\top,\\
    v^\top\bar E &= \boldsymbol{1}^{n_\pp\top} ,
\end{split}
\end{equation}
for some nonnegative vector $v$, where $\boldsymbol{1}^{n_{\TT^1}}$ is the vector of all ones of length $n_{\TT^1}$, and $\boldsymbol{1}^{n_\pp}$ is the vector of all ones of length $n_\pp$.

To search for $G^1$, we start with the ansatz $G^{1\prime}=C_{[\TT^1]}J^{1\dagger}$, where $J^{1\dagger}$ is the pseudoinverse of $J^1$. Obviously, $\rank(G^{1'})=\rank(C_{[\TT^1]})$. It follows that finding the matrix $G^1$ from this ansatz is equivalent to finding a shear matrix $Q$ of the form~\cite{yianni2025complexitycontextuality},
\begin{equation}
\label{eq:shear}
    Q=\mathds{1}_{k\times k}+\sum_{i=1}^\rho\sum_{j=1}^{k-\rho}D_{ij} \bm{a}^{i}\bm{b}^{j\top},
\end{equation}
such that, for some vector $v$,
\begin{equation}
\begin{split}
    G^{1\prime}Q &\geq 0,\\
    (G^{1\prime}Q)\Sigma &= \boldsymbol{1}^{n_{\TT^1}}v^\top,\\
    v^\top\bar E &= \boldsymbol{1}^{n_\pp\top},\\
    v &\geq 0.
\end{split}
\end{equation}
Here, $\{\bm{a}^{1},...,\bm{a}^{\rho}\}$ is an orthonormal basis for the columns of $J^1$, $\{\bm{b}^{1\top},...,\bm{b}^{(k-\rho)\top}\}$ is an orthonormal basis for the left kernel of $J^1$, $\rho$ is the rank of $J^1$, $k$ is the number of rows of $J^1$, and
$D$ is a real matrix to be found via optimization.

It is immediate given the forms of $Q$ and $G^{1\prime}$ that
\begin{equation}
\label{eq:T1flattening}
\begin{split}
    G^{1\prime}QJ^1&=G^{1\prime}J^1,\\
    \rank(G^{1\prime}Q)&=\rank(C_{[\TT^1]}),
\end{split}
\end{equation}
therefore, 
\begin{equation}
    G^1=G^{1\prime}Q
\end{equation}
is a valid solution for $G^1$ whenever the above linear constraints are feasible.

With this choice of $G^1$, let $\Gamma^{1i}$ be the matrix obtained by reshaping the $i$th row of $G^1$. For any $v_a=0$, the $a$th column of every $\Gamma^{1i}$ is zero by nonnegativity. Therefore, the $a$th row of $E$ can be deleted without changing the COPE tensor. After repeating for all $v_a=0$, define
\begin{equation}
    D^v=\operatorname{diag}(v).
\end{equation}
Finally, we rescale the factors $\bar E$ and $\{\Gamma^{1i}\}$ as,
\begin{equation}
\begin{split}
    \bar E&\mapsto D^v\bar E,\\
    \Gamma^{1i}&\mapsto \Gamma^{1i}(D^v)^{-1},
\end{split}
\end{equation}
ensuring they are now column-stochastic without changing the COPE tensor.

Thus a feasible solution of the linear program defines an NCOM.

With this choice of $G^1$, the conditions for matrix $D$ and vector $v$ can be written as linear inequalities and equalities, so that they can be found via a linear program. In particular, each entry of $G^1$ corresponds to a linear inequality and is a linear function of $G^{1\prime}$, $D$ and the $\boldsymbol a^i$ and $\boldsymbol b^j$ vectors. Likewise, the constraints $G^1\Sigma=\boldsymbol{1}^{n_{\TT^1}}v^\top$ and $v^\top\bar E=\boldsymbol{1}^\top$ are linear equalities. There are at most $n_{\TT^1}(n_\e n_\pp)^{O(r)}$ nonnegativity inequalities, at most $n_{\TT^1}n_\pp^{O(r)}+n_\pp$ normalization equalities, and $\poly((n_\e n_\pp)^{O(r)},n_{\TT^1})$ many variables. Therefore, the total complexity of solving this linear program is $\poly((n_\e n_\pp)^{O(r)},n_{\TT^1})$.
\end{proof}

In fact, we can argue for a tighter complexity bound in the above proof. For procedure $\mathsf{O}$, let $r_\mathsf{O}$ be the rank of the mode-$\mathsf{O}$ matricization of $C$. Using Lemma~\ref{lemma:extremalfactor}, $\bar{E}$ has $n_\pp^{O(r_\pp)}$ rows, and $\bar{R}$ has $n_\e^{O(r_\e)}$ columns.
Constructing $J^{1}$ from these extremal factors yields a matrix with $\operatorname{poly}(n_\pp,n_\e)^{r_\e+r_\pp}$ rows. Then, the matrix $G^1$ would have dimensions $n_{\TT^1}\times \operatorname{poly}(n_\pp,n_\e)^{r_\e+r_\pp}$.  Therefore, finding a nonnegative solution for $G^1$ would have complexity $\operatorname{poly}(n_{\TT^1})\operatorname{poly}(n_\pp,n_\e)^{r_\e+r_\pp}$.

In the following sections, we will discuss complexity simply in terms of the minimal GPT dimension $r$, rather than the various ranks of matricizations of the COPE tensor. Nevertheless, it is worth noting that these ranks can be used to tighten the complexity bounds via similar arguments as above. This is useful in cases where many of these ranks are small.

\section{Models of COPE 4-tensors}\label{sec:sequential}
The natural extension of the $\pp\TT^1\MM$ scenario is the $\pp\TT^1\TT^2\MM$ in which there are two sequential stages of transformations.
In this case, the probabilistic structure of the operational theory is given by a COPE $4$-tensor, where one of the dimensions encodes the recipes for the second transformation stage. 
As in the $\pp\TT^1\MM$ scenario, the procedures in each stage appearing in the COPE tensor are atomic.

The operational equivalence between two transformations in the first stage is denoted by
\begin{equation} \label{eq:transformation2layereqv}
\begin{split}
\TT^1_i \simeq \TT^1_j \; \text{iff} \; p(\e| \pp, \TT^1_i,\TT^2_k, \MM) = &p(\e|\pp, \TT^1_j,\TT^2_k, \MM) \\ &\forall \ \pp, \TT^2_k, \e, \MM, \\
\end{split}
\end{equation}
and for two transformations in the second stage by
\begin{equation}
    \begin{split}
    \label{eq:T2_opequiv}
        \TT^2_i \simeq \TT^2_j \; \text{iff} \; p(\e| \pp, \TT^1_k,\TT^2_i, \MM) = &p(\e|\pp, \TT^1_k,\TT^2_j, \MM) \\ &\forall \ \pp, \TT^1_k, \e, \MM.
    \end{split}
\end{equation}

In this section we characterize preGPT, GPT and noncontextual ontological models as factorizations of COPE 4-tensors. 

\subsection{preGPT and GPT Models of $\pp\TT^1\TT^2\MM$ Scenario}
\label{sec:preGPT_GPT_seq}

For a $\pp\TT^1\TT^2\MM$ scenario in an operational theory with COPE tensor $C$, a preGPT model is given by a factorization such that 
\begin{equation}
    \label{eq:cijkl_decomp}C_{ijkl}=A_{i:}T^{2j}T^{1k}B_{:l}
\end{equation}
in a vector space of dimension $r$.
Here, the set of effects $\{e^i\}$ are the rows of the matrix $A$, the set of states $\{s^l\}$ are the columns of $B$, and the transformations are given by the $r\times r$ matrices $\{T^{1k}\}$ and $\{T^{2j}\}$.
Note also that in a $\pp\TT^1\TT^2\MM$ scenario, each transformation stage of the preGPT must preserve the unit effect.

The decomposition of the COPE 4-tensor in Eq.~\eqref{eq:cijkl_decomp} is an instance of a \textit{tensor-train} decomposition, typically written as
\begin{equation}
C_{ijkl}=\sum_{\alpha=1}^{d_1}\sum_{\beta=1}^{d_2}\sum_{\gamma=1}^{d_3}A_{i\alpha}T^{2j}_{\alpha\beta}T^{1k}_{\beta\gamma}B_{\gamma l}.
\label{eq:TT_decompCijkl}
\end{equation}
The decomposition in Eq.~\eqref{eq:cijkl_decomp} is defined over a vector space of dimension $r$, which amounts to the choice $d_i=r$ for $i=1,2,3$ in Eq.~\eqref{eq:TT_decompCijkl}.
Furthermore, the best-known algorithm for tensor-train decomposition is called the \textit{tensor-train singular-value-decomposition}~\cite{tensor-train}.

The converse of the above is also true, i.e., 
any tensor-train decomposition can be modified to give a valid preGPT model for the $\pp\TT^1\TT^2\MM$ scenario as shown below.

\begin{lemma}
\label{lemma:preGPT_pt1t2}
Consider a COPE tensor $C$ with a tensor-train decomposition of the form given in Eq.~\eqref{eq:TT_decompCijkl}. This decomposition can always be converted into a preGPT model for the $\pp\TT^1\TT^2\MM$ scenario.
\end{lemma}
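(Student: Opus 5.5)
The plan is to mimic the proof of Lemma~\ref{lemma:preGPT_Tucker}, now applied stage by stage along the train. Start from $C_{ijkl}=A_{i:}T^{2j}T^{1k}B_{:l}$ with $A\in\mathbb{R}^{n_\e\times d_1}$, $T^{2j}\in\mathbb{R}^{d_1\times d_2}$, $T^{1k}\in\mathbb{R}^{d_2\times d_3}$, $B\in\mathbb{R}^{d_3\times n_\pp}$. First fix a common unit effect exactly as before. Group $B'_{:(j,k,l)}:=T^{2j}T^{1k}B_{:l}$ so that $C_{[\e]}=AB'$. For every measurement $\MM_m$, normalization gives $u^{m\top}B'=\mathbf{1}^\top$, where $u^m$ is the block sum of the rows of $A$. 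Choose a reference $u:=u^{m_0}$ and shift the rows of block $m$ by $\frac{1}{|O_m|}(u-u^m)^\top$. This leaves $C$ unchanged, and afterwards every block of $A$ sums to $u$.

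Next, normalize the second stage. Set $v_2^\top:=u^\top T^{2j_0}$ and $B''_{:(k,l)}:=T^{1k}B_{:l}$. Then $(u^\top T^{2j}-v_2^\top)B''=0$ for all $j$, because both sides reproduce $\mathbf{1}^\top$ on every column. Pick $w_2\in\mathbb{R}^{d_1}$ with $u^\top w_2=1$ and replace $T^{2j}\mapsto T^{2j}+w_2(v_2^\top-u^\top T^{2j})$. This preserves $C$ and gives $u^\top T^{2j}=v_2^\top$ for all $j$.

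Now repeat one stage down, with $v_2$ playing the role of the unit. For every $k$ and $l$, $v_2^\top T^{1k}B_{:l}=u^\top T^{2j}T^{1k}B_{:l}=1$. Hence, with $v_1^\top:=v_2^\top T^{1k_0}$ and some $w_1$ satisfying $v_2^\top w_1=1$, the shift $T^{1k}\mapsto T^{1k}+w_1(v_1^\top-v_2^\top T^{1k})$ preserves $C$ and yields $v_2^\top T^{1k}=v_1^\top$. One point needs care: $v_2\neq0$, which holds because $v_2^\top B''=\mathbf{1}^\top$.

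To make all matrices square, set $D:=\max\{d_1,d_2,d_3\}$ and pad as in Lemma~\ref{lemma:preGPT_Tucker}. Duplicate columns of the left factors and append zero rows to the right factors so that the products are unchanged, and extend $u,v_2,v_1$ accordingly. Take the duplicated entries to be split so that each extended vector is still nonzero and the products $\widehat u^\top\widehat T^{2j}=\widehat v_2^\top$ and $\widehat v_2^\top\widehat T^{1k}=\widehat v_1^\top$ persist. The cleanest way to do this is to append zero columns to the left factor together with zero rows to the right factor, and pad the vectors with zeros.

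Finally, choose $X_2,X_1\in GL(D)$ with $\widehat v_2^\top X_2=\widehat u^\top$ and $\widehat v_1^\top X_1=\widehat u^\top$. Such matrices exist since all three vectors are nonzero. Define
\[
T^{\ast2j}:=\widehat T^{2j}X_2,\qquad T^{\ast1k}:=X_2^{-1}\widehat T^{1k}X_1,\qquad B^\ast:=X_1^{-1}\widehat B .
\]
Then $\widehat u^\top T^{\ast2j}=\widehat u^\top$ and $\widehat u^\top T^{\ast1k}=\widehat v_2^\top\widehat T^{1k}X_1=\widehat u^\top$, and the product $\widehat A T^{\ast2j}T^{\ast1k}B^\ast$ reproduces $C_{ijkl}$. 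Together with the block-sum condition on $\widehat A$, this gives a valid preGPT for the $\pp\TT^1\TT^2\MM$ scenario.

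The main obstacle is the padding step. With three possibly different bond dimensions, zero-padding must not destroy the relations $u^\top T^{2j}=v_2^\top$ and $v_2^\top T^{1k}=v_1^\top$. I would handle it by padding every intermediate space to dimension $D$ with zero rows and columns placed consistently on both sides of each bond, then checking these relations entrywise before applying the change of basis.
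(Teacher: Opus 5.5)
Your proposal is correct and follows essentially the same route as the paper. You translate the rows of $A$ to a common unit effect, apply rank-one shifts first to the $T^{2j}$ and then to the $T^{1k}$ so that the unit is carried to fixed vectors, pad to a common dimension $D$, and finish with a change of basis. The differences are cosmetic: you pad with zero rows and columns rather than duplicated columns, which makes $\widehat u^\top\widehat T^{2j}=\widehat v_2^\top$ and $\widehat v_2^\top\widehat T^{1k}=\widehat v_1^\top$ immediate, and you take $\widehat u$ itself as the common target, so you need two invertible matrices instead of the paper's three.
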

\begin{proof}
We begin by replacing the outcome index $i$ with $(m,m_n)$, where $m$ denotes a measurement and $m_n$ its $n$-th outcome. The set of outcomes of measurement $m$ is denoted $O_m$. Following Eq.~\eqref{eq:cijkl_decomp}, the COPE tensor's tensor-train decomposition is written as $C_{(m,m_n)jkl}=A_{(m,m_n):}T^{2j}T^{1k}B_{:l}$. 
Here, $A$ has $d_1$ columns, $B$ has $d_3$ rows, and the transformation matrices $T^{1k}$ and $T^{2j}$ are of sizes $d_2\times d_3$ and $d_1\times d_2$ respectively.

We define the matrix $B''$ whose columns represent all possible combinations of states with transformations from the first and second layer, $B''_{:(jkl)}=T^{2j}T^{1k}B_{:l}$.
The sum of effects for measurement $m$ is given as
\begin{equation*}
u^{m\top}:=\sum_{m_n\in O_m}A_{(m,m_n):}.
\end{equation*}
For any preparation and pair of transformations, the sum of all measurement outcomes associated with measurement $m$ produces a row of ones in the COPE,
$u^{m\top} B''=\mathbf 1^{\mathsf T}$ for all $m$. We now choose a reference
measurement $m_0$ and set $u:=u^{m_0}$, and translate the rows of the $m$th block of $A$ by
\begin{equation*}
A_{(m,m_n):}\mapsto A_{(m,m_n):}+\frac{1}{|O_m|}(u-u^m)^\top.
\end{equation*}
Since $(u-u^m)^\top B''=0$ for all $m$, the product $AB''$ and consequently the COPE tensor are unchanged. The vector $u^\top$ is now the common unit vector for all measurements.

We now define the matrix $B'$ consisting of all possible preparations followed by one transformation in the first stage, $B'_{:kl}=T^{1k}B_{:l}$, and select a reference transformation in the second stage $T^{2j_0}$. The vector $v$, defined as $v^\top:=u^\top T^{2j_0}$, satisfies $v^\top B'=\mathbf 1^{\mathsf T}$. We select a column vector $y$ such that $u^\top y=1$, and shift the transformation matrices according to
$T^{2j}\mapsto \widetilde T^{2j}:=
T^{2j}+y\left(v^\top -u^\top T^{2j}\right)$.
The COPE tensor is unchanged because $(v^\top-u^\top T^{2j})B'=0$, and the new transformation matrices $\widetilde T^{2j}$ transport the unit vector $u^\top$ to the same output vector $v^\top$.

This process is repeated for the first transformation layer. We define a vector $w$ from a reference transformation $T^{1k_0}$, as 
$w^\top:=v^\top T^{1k_0}$. Similarly, we choose a column vector $z$ with $v^\top z=1$, and shift the set of transformation matrices in the first stage according to
$\widetilde T^{1k}
:=
T^{1k}+z\left(w^\top-v^\top T^{1k}\right)$.
The COPE tensor is unchanged since
$(w^\top-v^\top T^{1k})B=0$, and every transformation in the first stage now maps the vector $v^\top$ to a unique output vector $w^\top$.
The unit vector $u^\top$ is therefore
transported through each stage of transformations as
\begin{equation*}
u^\top
\xrightarrow{\;\widetilde T^{2j}\;}
v^\top
\xrightarrow{\;\widetilde T^{1k}\;}
w^\top,
\end{equation*}
independently of the transformation labels $j$ and $k$.

To ensure that the transformations fix the unit effect rather than transporting it uniquely to $v^\top$ and then $w^\top$, we append zero rows and duplicate nonzero columns to make the matrices square. Define the maximal latent dimension
$D_\ast:=\max\{d_1,d_2,d_3\}$ and duplicate columns of $A$ and the sets of matrices $T^{1k}$ and $T^{2j}$, and append zero rows to $B$ and the matrices $T^{1k}$ and $T^{2j}$ to form 
$\widehat A,\widehat B,\widehat T^{1k}$ and $\widehat T^{2j}$, so that $\widehat A$ has $D_\ast$ columns, $\widehat B$ has $D_\ast$ rows, and every $\widehat T^{1k}$ or $\widehat T^{2j}$ matrix is of size $D_\ast \times D_\ast$. This corresponds to duplicating entries in the vectors $u,v$ and $w$ to form vectors $\widehat u,\widehat v$ and $\widehat w$ of length $D_\ast$.
Now we have
\begin{equation*}
\widehat u^\top\,\widehat T^{2j}=\widehat v^\top,
\quad
\widehat v^\top\,\widehat T^{1k}=\widehat w^\top, \quad \widehat{A}\widehat{T}^{2j}\widehat{T}^{1k}\widehat{B}=A\tilde{T}^{2j}\tilde{T}^{1k}B.
\end{equation*}

Finally, we choose matrices $X_1,X_2,X_3\in GL(D_\ast)$ such that
\begin{equation*}
\widehat u^\top X_1=
\widehat v^\top X_2=
\widehat w^\top X_3=:u^\top_\ast,
\end{equation*}
and redefine all of our factors according to
\begin{equation*}
A^\ast=\widehat AX_1,\quad
T^{\ast2j}=X_1^{-1}\widehat T^{2j}X_2,
\end{equation*}
\begin{equation*}
T^{\ast1k}=X_2^{-1}\widehat T^{1k}X_3,
\quad
B^\ast=X_3^{-1}\widehat B.
\end{equation*}
These result in the decomposition\begin{equation*}
C_{(m,m_n)jkl}=A^\ast_{(m,m_n):}T^{\ast2j}T^{\ast 1k}B^\ast_{:l},
\end{equation*}
such that
\begin{equation}
    \sum_{m_n\in O_m}A^\ast_{(m,m_n):}=u^\top_\ast \ \forall \  \MM_m,
\end{equation}
and
\begin{equation*}
u^\top_\ast T^{\ast 2j}=u^\top_\ast,\quad
u^\top_\ast T^{\ast 1k}=u^\top_\ast \quad\forall \  j,k.
\end{equation*}
This decomposition is therefore a valid preGPT model for the $\pp\TT^1\TT^2\MM$ scenario.
\end{proof}
\noindent We note that the construction in the above Lemma can easily be extended to scenarios with an arbitrary number of  transformation stages.

As in the $\pp\TT^1\MM$ scenario, we isolate each stage of procedures in the above factorization by considering the corresponding flattenings of the COPE tensor. Each flattening admits a decomposition with a particular interpretation, as follows.
\begin{enumerate}
    \item $C_{[\e]} = AB''$, where each column of $C_{[\e]}$ is labelled by a preparation composed with one transformation from the first and the second stage. In the factorization this sequence of procedures is represented by a column of $B''$.
    \item $C_{[\pp]}=A''B$, where each row of $C_{[\pp]}$ is labelled by a measurement outcome composed with one transformation from the first and the second stage. In the factorization this sequence of procedures is represented by a row of $A''$. 
    \item $C_{[\TT^1]}=Y^1\Theta^1$, where each row of $C_{[\TT^1]}$ is labelled by a single transformation in the first stage, and represented in the factorization by a row of $Y^1$. Each  column of $C_{[\TT^1]}$ is labelled by a choice of preparation, second-stage transformation and measurement outcome, and represented by a column of $\Theta^1$.
    \item $C_{[\TT^2]}=Y^2\Theta^2$, where each row of $C_{[\TT^2]}$ is labelled by a single transformation in the second stage, and represented in the factorization by a row of $Y^2$. Each column of $C_{[\TT^2]}$ is labelled by a choice of preparation, first-stage transformation and measurement outcome, and represented by a column of $\Theta^2$.
\end{enumerate}

In each of the above factorizations, one factor contains the representatives of the set of atomic procedures singled out by the flattening, and the other contains the representatives of the composite procedures which make up the rest of the sentence.
When a preGPT as in Eq.~\eqref{eq:cijkl_decomp} for the scenario is provided, the factorizations of flattenings can be chosen such that
\begin{equation*}
        B''_{:(jkl)}=T^{2j}T^{1k}B_{:l} \ \text{ and } \ A''_{(ijk):}=A_{i:}T^{2j}T^{1k}.
\end{equation*}
This is simply the statement that composite procedures consisting of a preparation followed by two transformations, and of two transformations followed by a measurement, are represented in the preGPT in a way that is consistent with the causal structure encoded in Eq.~\eqref{eq:cijkl_decomp}.
Similarly, we must have
\begin{equation}
\label{eq:ythetaSeqDefinition}
    \begin{split}
        &\Theta^2 :=A^\top\otimes B',\\
        &\Theta^1 :=A^{\prime \top}\otimes B,\\
        &Y^1_{j,(r(m-1)+n)} :=T^{1j}_{m,n},\\
        &Y^2_{j,(r(m-1)+n)} :=T^{2j}_{m,n}.
    \end{split}
\end{equation}

For a GPT model of this scenario, the states, effects and first stage of transformations must be quotiented as in Eq.~\eqref{eq:quot_def}; the second stage of transformations $\{T^{2k}\}$ must also be quotiented according to the last line of Eq.~\eqref{eq:quot_def}.
Then, the conditions for the decomposition in Eq.~\eqref{eq:cijkl_decomp} to correspond to a GPT can be expressed in terms of the tensor flattenings as follows.
\begin{theorem}
Suppose the COPE 4-tensor $C$ admits a factorization of the form in Eq.~\eqref{eq:cijkl_decomp}, and define the matrices $Y^1$ and $Y^2$ from the sets of matrices $\{T^{1i},T^{2j}\}$ as in Eq.~\eqref{eq:ythetaSeqDefinition}. Then, the matrices $A$ and $B$ and sets of matrices $\{T^{1i}\}$ and $\{T^{2j}\}$ constitute a GPT model of the COPE if and only if the conditions
\label{thm:2stageGPTconditions}
\begin{equation}\label{eq:2stageGPTconditions}
\begin{split}
    &\rank(A)=\rank(C_{[\e]}),\\
    &\rank(B)=\rank(C_{[\pp]}),\\
    &\rank(Y^1)=\rank(C_{[\TT^1]}),\\
    &\rank(Y^2)=\rank(C_{[\TT^2]}),
\end{split}
\end{equation}
are satisfied.
\end{theorem}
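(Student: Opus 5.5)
The plan is to mirror the proof of Theorem~\ref{thm:GPT model}, handling each of the four procedure types separately through the corresponding flattening of the COPE 4-tensor. First we record the factorizations already listed: $C_{[\e]}=AB''$, $C_{[\pp]}=A''B$, $C_{[\TT^1]}=Y^1\Theta^1$ and $C_{[\TT^2]}=Y^2\Theta^2$. Since a product never has larger rank than its factors, each of these gives $\rank(A)\ge\rank(C_{[\e]})$, $\rank(B)\ge\rank(C_{[\pp]})$, and likewise for $Y^1$ and $Y^2$. Hence each condition in Eq.~\eqref{eq:2stageGPTconditions} says exactly that the relevant factor does not exceed the rank forced by the statistics. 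Equivalently, the row space of $A$ (respectively the column space of $B$, the row space of $Y^i$) is fully detected by the complementary factor $B''$ (respectively $A''$, $\Theta^i$).

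For necessity, we argue by contrapositive, as in Lemmas~\ref{lemma:quotiented1} and~\ref{lemma:quotiented2}. Suppose $\rank(A)>\rank(C_{[\e]})$. Then some nonzero combination of rows of $A$ is annihilated by all columns of $B''$, i.e.\ by all composite states $T^{2j}T^{1k}s^l$. By Eq.~\eqref{eq:preparationmeasurementeqv}, extended to the two-stage setting, this yields procedures (convex mixtures of outcomes, which the preGPT represents linearly) that are operationally equivalent yet represented by distinct vectors, so the effects are not quotiented. The same argument applies to $B$ using the columns of $A''$. For $Y^1$ we use that $\Theta^1$ encodes all triples (effect composed with a second-stage transformation, state), so a kernel direction of $\Theta^1$ within $\operatorname{row}(Y^1)$ gives first-stage transformations that are equivalent in the sense of Eq.~\eqref{eq:transformation2layereqv} but have distinct matrices. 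Symmetrically, for $Y^2$ we use $\Theta^2$ and Eq.~\eqref{eq:T2_opequiv}.

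For sufficiency, assume all four equalities hold. Then $\operatorname{row}(C_{[\e]})$ and $\operatorname{row}(A)$ are related by the surjective map $x\mapsto xB''$, which is injective on $\operatorname{row}(A)$ by the rank count. So two outcomes (or mixtures) with identical statistics, i.e.\ identical rows of $C_{[\e]}$, must have identical rows of $A$. This is exactly the second line of Eq.~\eqref{eq:quot_def}. The same injectivity argument, applied to $\Theta^1$ on $\operatorname{row}(Y^1)$, to $\Theta^2$ on $\operatorname{row}(Y^2)$, and to $A''$ on $\operatorname{col}(B)$, gives the remaining lines.

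The main obstacle is the bookkeeping that $\Theta^1=A'^\top\otimes B$ and $\Theta^2=A^\top\otimes B'$ really implement the contraction $\operatorname{vec}(T)\mapsto \big(A_{i:}T^{2j}\,T\,B_{:l}\big)$ (respectively $A_{i:}\,T\,T^{1k}B_{:l}$) consistently with the row-flattening convention of Eq.~\eqref{eq:ythetaSeqDefinition}. We would check this index identity once via $\operatorname{vec}(XTZ)=(X\otimes Z^\top)\operatorname{vec}(T)$ in row-major form. We would also need to justify that the operational equivalences of Eqs.~\eqref{eq:transformation2layereqv} and~\eqref{eq:T2_opequiv} quantify precisely over the columns of $\Theta^1$ and $\Theta^2$. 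Both points rest on Assumption~\ref{ass:COPE_comp} and Corollary~\ref{cor:IDRC}.
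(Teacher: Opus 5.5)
Your proposal is correct and is essentially the paper's proof. The paper only says that the argument of Theorem~\ref{thm:GPT model} carries over flattening by flattening: each rank equality holds exactly when the complementary factor ($B''$, $A''$, $\Theta^1$ or $\Theta^2$) separates that stage's representations, and you have made the kernel/injectivity step and the $\Theta^1=A^{\prime\top}\otimes B$, $\Theta^2=A^\top\otimes B'$ bookkeeping explicit. One detail is worth stating in the necessity step for effects: a vector $c$ with $cC_{[\e]}=0$ and $cA\neq 0$ need not have zero coefficient sum, unlike the state and transformation cases where block normalization forces it, so to split $c$ into two genuine convex mixtures you should pad with the null (empty coarse-grained) event, which is represented by $0$.
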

\begin{proof}
    The argument proceeds identically to Theorem~\ref{thm:GPT model}. Each flattening in this set of equations isolates a particular stage of procedures, and rank equality certifies that the GPT representations of that stage are completely separated by the other stages. Therefore, satisfaction of all four equations guarantees that every procedure is represented uniquely as required by a GPT. The converse also holds: if any of the equations were not satisfied, there would necessarily exist distinct elements which failed to be separated, and the decomposition would represent a preGPT  but not a GPT. Hence the conditions of Eq.~\eqref{eq:2stageGPTconditions} are necessary and sufficient for the decomposition to be a GPT model.
\end{proof}

In the tensor-train decomposition of Eq.~\eqref{eq:TT_decompCijkl}, the tensor-train rank $(\chi_1,\chi_2,\chi_3)$ is defined as the minimum inner dimensions $(d_1,d_2,d_3)$ in which a decomposition is possible.
We call such a decomposition the \textit{minimal} tensor-train decomposition.
We claim that the preGPT constructed from the minimal tensor-train decomposition via the procedure in Lemma~\ref{lemma:preGPT_pt1t2} is a minimal GPT. To see this, note that the 
dimensions $\chi_1$ and $\chi_3$ in the decomposition are equal to $r_\e=\rank(C_{[\e]})$ and $r_\pp=\rank(C_{[\pp]})$ which guarantee $\rank(A)=r_\e$ and $\rank(B)=r_\pp$, respectively.
The dimension $\chi_2$, on the other hand, equals the rank of the middle unfolding of the COPE tensor, $C_{[\text{mid}]}:=C_{(ij),(kl)}$.
The latter is understood as the PM scenario involving $\TT^1$-transformed preparations and $\TT^2$-transformed measurements.

To show that the matrix $Y^2$ defined in Theorem~\ref{thm:2stageGPTconditions} has rank $r_{\TT^2}=\rank(C_{[\TT^2]})$, consider $C_{[\text{mid}]}=A'B'$ in which $(B')_{\beta,(k,l)}:=\sum_\gamma T^{1k}_{\beta \gamma}B_{\gamma l}$ and  $(A')_{(i,j),\beta}:=\sum_\alpha A_{i\alpha }T^{2j}_{\alpha \beta}$.
The inner dimension of this factorization is $\chi_2$, which is assumed to be minimal.
Such a factorization is only possible if $\rank (B')=\chi_2$, so that $B'$ has full row-rank. Now, the mode-$\TT^2$ unfolding factors as $C_{[\TT^2]}=Y^2(A^{\top}\otimes B')$.
Since $\rank(A^\top\otimes B') =\chi_1\chi_2$, the matrix $A^\top\otimes B'$ is full row-rank.
Right multiplication by this matrix is therefore injective and consequently
the row space of $Y^2$ has dimension equal to the row space of $C_{[\TT^2]}$, so that
\begin{equation*}
\operatorname{rank}(Y^2)=\operatorname{rank}(C_{[\TT^2]})=r_{\TT^2}.
\end{equation*}
An analogous argument can be made to show that $\rank(Y^1)=\rank(C_{[\TT^1]})=r_{\TT^1}$. 

Finally, the duplication and zero-padding of columns, and the invertible transformations used in Lemma~\ref{lemma:preGPT_pt1t2} to construct the preGPT do not affect the ranks of any factors, so that the resulting preGPT satisfies all the conditions of Theorem~\ref{thm:2stageGPTconditions} and thus it is in fact a GPT.
It is also guaranteed to be the smallest possible GPT for the scenario, since otherwise there would exist a tensor-train decomposition smaller than the minimal one.

The fact that the minimal tensor-train decomposition provides a GPT may tempt us to consider $\rank(A')=\rank(B')=\rank({C_{\rm{mid}}})$ as an alternative rank criterion in place of those using $\rank(C_{[\TT^1]})$ and $\rank(C_{[\TT^2]})$ in Eq.~\eqref{eq:2stageGPTconditions}.
However, minimality of the tensor-train decomposition is essential for obtaining a GPT.
In other words, a \textit{non-minimal} decomposition satisfying the conditions
\begin{equation}
\label{eq:2stageNOTGPT}
\begin{split}
&\rank(A)=\rank(C_{[\e]}),\\ &\rank(B)=\rank(C_{[\pp]}),\\  &\rank(B')=\rank(A')=\rank({C_{\rm{mid}}})
\end{split}
\end{equation}
may not represent a GPT by failing
to satisfy the last two lines of Eq.~\eqref{eq:2stageGPTconditions}.
This can be shown through an explicit counterexample.
Consider a
$6\times 6\times 6\times 6$ COPE 4-tensor 
with a tensor-train decomposition with the following factors:
\begin{align}\label{eq:NOTGPT_counter}
    &A_{i:}=
        \begin{pmatrix}
        1 & \cos(\theta_i) & \sin(\theta_i) & 1
        \end{pmatrix}, \quad &i=1,2,\dots ,6\\ &B_{:l}=1/2\begin{pmatrix}
            1 & \cos(\theta_l) & \sin(\theta_l) & 0
        \end{pmatrix}^\top, \quad &l=1,2,\dots ,6 \\[3pt] 
        &T^{11}=\begin{pmatrix}
            1 & 0 & 0 & 0 \\
            0 & \cos(\theta_1) & \sin(\theta_1) & 0 \\ 0 & -\sin(\theta_1) & \cos(\theta_1) & 0 \\ 0 & 0 & 0 & 1
        \end{pmatrix}, \\[3pt]
        &T^{1k}=\begin{pmatrix}
            1 & 0 & 0 & 1 \\
            0 & \cos(\theta_k) & \sin(\theta_k) & 0 \\ 0 & -\sin(\theta_k) & \cos(\theta_k) & 0 \\ 0 & 0 & 0 & 0
        \end{pmatrix}, \quad & k=2,3,\dots 6 \\[3pt]
        \text{and } \ &T^{2j}=\begin{pmatrix}
            1 & 0 & 0 & 1 \\
            0 & \cos(\theta_j) & \sin(\theta_j) & 0 \\ 0 & -\sin(\theta_j) & \cos(\theta_j) & 0 \\ 0 & 0 & 0 & 0
        \end{pmatrix}, \quad & j=1,2,\dots 6,
\end{align}
where $\theta_j = 2\pi j/6$. Note that the above decomposition is not minimal, because $\rank(C_\e)=\rank(C_\pp)=\rank({C_{\rm{mid}}})=3$, but $d_i=4$ for $i=1,2,3$.
Furthermore, it satisfies Eq.~\eqref{eq:2stageNOTGPT} as $\rank(A)=\rank(B)=\rank(A')=\rank(B')=3$.
Nevertheless,  $\rank(Y^1)>\rank(C_{[\TT^1]})$, so that the decomposition fails to represent a GPT.
The latter follows from the fact that the mode-$\TT^1$ slices of the COPE tensor imply that there is an operational equivalence between transformations in the first stage given by $\frac{1}{3}(\TT^{11}+\TT^{13}+\TT^{15})\cong \frac{1}{3}(\TT^{12}+\TT^{14}+\TT^{16})$. 
However, by inspecting the matrix factors in Eq.~\eqref{eq:NOTGPT_counter}, we see that $\frac{1}{3}(T^{11}+T^{13}+T^{15})\ne \frac{1}{3}(T^{12}+T^{14}+T^{16})$ because $T^{11}$ has a one in its final row, while all other $T^{1k}$ matrices have only zeroes.
The phenomenon above can be intuitively explained by the fact that redundancies in the representation of transformations may be obscured in the representations of transformed preparations and transformed measurements by the projections encoded in the representations of preparations and measurements.

\subsection{Ontological Models of $\pp\TT^1\TT^2\MM$ Scenario}

\label{sec:pT1T2M_ont}

\subsubsection{General Ontological Models}

Similarly to the previous section, an ontological model of this scenario is given by a response function matrix $R$, an epistemic state matrix $E$, and nonnegative stochastic transformation matrices $\{\Gamma^{1k}\}$ and $\{\Gamma^{2j}\}$ representing $\T^1$ and $\T^2$, respectively. These provide an ontological model of the COPE 4-tensor via the decomposition
\begin{equation}
\label{eq:PT1T2MCOPEdecomp}
    C_{ijkl}=R_{i:}\Gamma^{2j}\Gamma^{1k}E_{:l}
\end{equation}
analogous to Eq.~\eqref{eq:cijkl_decomp}.

Lemma~\ref{lemma:ont_nonnegpreGPT}, which established the equivalence between nonnegative preGPTs and ontological models in the $\pp\TT^1\MM$ scenario, can be trivially extended to the sequential $\pp\TT^1\TT^2\MM$ scenario.
Indeed, the proof of Lemma~\ref{lemma:ont_nonnegpreGPT} does not rely on having only one transformation stage.
Starting from an ontological model, one represents each ontic point by a Cartesian basis vector, so that response functions, epistemic states, and stochastic transition maps are represented by nonnegative matrices. Then Eq.~\eqref{eq:PT1T2MCOPEdecomp} is exactly a nonnegative preGPT factorization of the form in Eq.~\eqref{eq:cijkl_decomp}, with $A=R$, $B=E$, $T^{1k}=\Gamma^{1k}$, and $T^{2j}=\Gamma^{2j}$. 
Conversely, assume that a given preGPT has a nonnegative unit effect preserved by every transformation in both stages. Then the nonnegative diagonal rescaling from Lemma~\ref{lemma:ont_nonnegpreGPT} is employed to achieve
\begin{equation}
\begin{split}
&A T^{2j}T^{1k}B\\
&=(AD^\dagger)\left(D T^{2j}D^\dagger\right)\left(D T^{1k}D^\dagger\right)\left(D B\right)\\
&=R\Gamma^{2j}\Gamma^{1k}E,
\end{split}
\end{equation}
where $D=\operatorname{diag}(u)$. 
Since both sets of matrices $T^{2j}$ and $T^{1k}$ preserve the unit effect, the rescaled transformation matrices are column-stochastic on their nonzero columns, while the rescaled effect and state factors are, respectively, composed of disjoint column-stochastic sets of rows, and a column-stochastic matrix, on their nonzero columns.
Hence, the second transformation stage merely introduces an additional stochastic transition matrix in the ontological factorization; it does not alter the structural equivalence between ontological models and nonnegative preGPT models established in Lemma~\ref{lemma:ont_nonnegpreGPT}.

\subsubsection{NCOMs of $\pp\TT^1\TT^2\MM$ Scenario}

Following the characterization of
ontological models for the $\pp\TT^1\TT^2\MM$ scenario, we use the GPT conditions of Theorem~\ref{thm:2stageGPTconditions} to express the conditions that must be satisfied by an NCOM of the scenario.  This is justified because, following the argument in Lemma~\ref{lemma:NCOM_nonnegGPT}, an NCOM of the $\pp\TT^1\TT^2\MM$ scenario is simply a nonnegative GPT model.
Indeed, the argument is unchanged by the presence of a second transformation stage. By definition, an ontological model is noncontextual if and only if it assigns identical ontological representations to operationally equivalent procedures. Thus, in the two-transformation-stage scenario, in addition to the usual conditions for preparations and measurement outcomes, one must have
\begin{equation}
\label{eq:2transformationstage_quot}
\begin{split}
    \TT^1_i\simeq \TT^1_j \Leftrightarrow \Gamma^{1i}&=\Gamma^{1j},\\
    \TT^2_i\simeq \TT^2_j \Leftrightarrow \Gamma^{2i}&=\Gamma^{2j}.
\end{split}
\end{equation}
These are precisely the quotienting conditions which turn a nonnegative preGPT representation of the form in Eq.~\eqref{eq:PT1T2MCOPEdecomp} into a GPT representation, now applied separately to the two transformation stages. Therefore, noncontextuality is exactly the requirement that the nonnegative preGPT be quotiented over all atomic procedure types.

It follows that we must consider NMFs of the four flattenings of the COPE tensor given by $C_{[\e]} = RE''$, $C_{[\pp]} = R''E$, $C_{[\TT^1]}=G^1 F^1$, and $C_{[\TT^2]}=G^2 F^2$.
Using the conditions in Theorem~\ref{thm:2stageGPTconditions}, we now state the following result.

\begin{theorem}
\label{thm:2stageNCOMconditions}
    Suppose the nonnegative matrices $R$ and $E$ and sets of nonnegative matrices $\{\Gamma^{2j}\}, \{\Gamma^{1k}\}$ induce a decomposition of a COPE 4-tensor $C$ as in Eq.~\eqref{eq:PT1T2MCOPEdecomp}. Define the matrices $G^2$ and $G^1$ such that the $j$th row of $G^2$ is the row-flattened form of $\Gamma^{2k}$,
    and analogously for $G^1$. The decomposition constitutes an NCOM if and only if the following conditions are satisfied:
    \begin{equation}
    \label{eq:2stageNCOMconditions}
        \begin{split}
            &\rank(R)=\rank(C_{[\e]})\\
             &\rank(E)=\rank(C_{[\pp]}) \\
             &\rank(G^1)=\rank(C_{[\TT^1]}) \\
             &\rank(G^2)=\rank(C_{[\TT^2]})
        \end{split}
    \end{equation}
\end{theorem}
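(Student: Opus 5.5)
The plan is to transport Theorem~\ref{thm:2stageGPTconditions} to the ontological setting, in the same way Theorem~\ref{thm:PTM-NC-rank-cnstrnts} was obtained from Theorem~\ref{thm:GPT model}. First, I will use the two-stage version of Lemma~\ref{lemma:ont_nonnegpreGPT} established at the start of Sec.~\ref{sec:pT1T2M_ont}. Setting $A=R$, $B=E$, $T^{1k}=\Gamma^{1k}$ and $T^{2j}=\Gamma^{2j}$, the decomposition in Eq.~\eqref{eq:PT1T2MCOPEdecomp} is a nonnegative preGPT of the form in Eq.~\eqref{eq:cijkl_decomp}. Its unit effect is $\mathbf 1$, which both stages preserve because each $\Gamma$ is column-stochastic. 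The matrices $Y^1$ and $Y^2$ of Eq.~\eqref{eq:ythetaSeqDefinition} are then exactly $G^1$ and $G^2$. The argument of Lemma~\ref{lemma:NCOM_nonnegGPT} then identifies the noncontextuality conditions, namely Eq.~\eqref{eq:PTMNCOM} for preparations and outcomes together with Eq.~\eqref{eq:2transformationstage_quot}, with the quotienting conditions Eq.~\eqref{eq:quot_def} applied to all four atomic procedure types. Hence the ontological model is an NCOM if and only if this nonnegative preGPT is a GPT.

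Second, I will invoke Theorem~\ref{thm:2stageGPTconditions} directly, which already states that a preGPT of this form is a GPT if and only if the four rank conditions hold. Substituting $A=R$, $B=E$, $Y^1=G^1$ and $Y^2=G^2$ yields Eq.~\eqref{eq:2stageNCOMconditions}.

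To keep the proof self-contained, I will also spell out each direction at the level of flattenings. For necessity, I factor $C_{[\TT^2]}=G^2F^2$ with $F^2=R^\top\otimes E'$, where $E'_{:(kl)}=\Gamma^{1k}E_{:l}$. Since $C_{[\TT^2]}$ is a product with $G^2$ on the left, $\operatorname{rspan}(C_{[\TT^2]})$ lies in the image of $\operatorname{rspan}(G^2)$, so $\rank(G^2)\ge\rank(C_{[\TT^2]})$. Strict inequality would give a vector $\alpha$ with $\alpha^\top G^2\neq0$ but $\alpha^\top C_{[\TT^2]}=0$. Splitting $\alpha$ into its positive and negative parts, normalised, produces convex mixtures of second-stage transformations that are operationally equivalent by Eq.~\eqref{eq:T2_opequiv} (extended to mixtures) but have different ontological representations. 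This contradicts Eq.~\eqref{eq:2transformationstage_quot}. The same argument with $F^1=R''^\top$-type factors, and with $E''$ and $R''$, handles the other three conditions. For sufficiency, equality of ranks means the kernel of right multiplication by the complementary factor, restricted to $\operatorname{rspan}$ of the stage factor, is trivial. So any equivalence $\alpha^\top C_{[\TT^2]}=0$ forces $\alpha^\top G^2=0$, which gives identical representations.

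The main obstacle is the step from a rank deficiency to a \emph{genuine} operational equivalence, in both directions. A vector $\alpha$ in the left kernel need not be a difference of probability distributions. I would handle this by noting that every row of $C_{[\TT^2]}$ has the same inner product with the all-ones direction restricted to a fixed preparation, first-stage transformation and measurement block. This inner product equals $n_\MM n_\pp n_{\TT^1}$ times a constant, so $\alpha^\top C_{[\TT^2]}=0$ implies $\sum_j\alpha_j=0$ whenever the block sums are nonzero. After rescaling, $\alpha$ is therefore a difference of two probability vectors, as in the linearity remark in the proof of Lemma~\ref{lemma:extremalnormalform}. Conversely, since the $\Gamma$'s are stochastic, $\alpha^\top G^2=0$ is exactly equality of the mixed representations.
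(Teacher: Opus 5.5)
Your proposal is correct and is essentially the paper's proof: the paper identifies an NCOM with a nonnegative GPT (via Lemma~\ref{lemma:NCOM_nonnegGPT} and the two-stage version of Lemma~\ref{lemma:ont_nonnegpreGPT}) and then applies Theorem~\ref{thm:2stageGPTconditions} with $A=R$, $B=E$, $Y^1=G^1$, $Y^2=G^2$, exactly as in your first two paragraphs. Your extra flattening-level argument only makes explicit what the paper leaves implicit, with one caveat: the block-sum step forcing $\sum_j\alpha_j=0$ works for transformations and preparations but not for outcome events, where left-kernel vectors of $C_{[\e]}$ need not sum to zero, so there you should argue directly from the linear relation, as the quotienting argument of Theorem~\ref{thm:GPT model} does.
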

\begin{proof}
    Since an NCOM is a nonnegative GPT, the proof is simply the application of Theorem~\ref{thm:2stageGPTconditions} to nonnegative preGPTs.
\end{proof}

Now, we present a general normal form for NCOMs of this sequential scenario that allows us to set out our linear program for deciding whether an NCOM exists in this setting and computes one if it does.
\begin{lemma}
\label{lemma:sequentialnormalform}
Suppose a $\pp\TT^1\TT^2\MM$ scenario with COPE tensor $C$ admits an NCOM.
Denote by $\bar R$, $\bar E$, $\bar G^1$, and $\bar G^2$ the extremal factors obtained by applying Lemma~\ref{lemma:extremalfactor} to $C_{[\e]}$, Lemma~\ref{lemma:extremalfactor2} to $C_{[\pp]}$, and Lemma~\ref{lemma:extremalfactor} to $C_{[\TT^1]}$ and $C_{[\TT^2]}$, respectively.
Then, $C$ admits an equivalent NCOM
\begin{equation}
C_{ijkl}=\sum_{wxy}\bar R_{iw}\tilde\Gamma^{2j}_{wy}\tilde\Gamma^{1k}_{yx}\tilde E_{xl},
\end{equation}
in which $\tilde E$ is obtained by rescaling the rows of $\bar E$, and $\tilde \Gamma^{2j}$ and $\tilde \Gamma^{1k}$ are obtained by the application of nonnegative linear maps to the factors $\bar G^1$ and $\bar G^2$ respectively.
\end{lemma}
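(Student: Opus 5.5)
The plan is to mirror the proof of Lemma~\ref{lemma:extremalnormalform}, now with two transformation stages. Fix an NCOM $C_{ijkl}=R_{i:}\Gamma^{2j}\Gamma^{1k}E_{:l}$. By Theorem~\ref{thm:2stageNCOMconditions} it satisfies the four rank equalities of Eq.~\eqref{eq:2stageNCOMconditions}.

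\textbf{Step 1: express $R$ and $E$ through the extremal factors.} The factorization $C_{[\e]}=RE''$ is an RNMF, since $\rank(R)=\rank(C_{[\e]})$. Likewise $C_{[\pp]}^\top=E^\top R''^\top$ is an RNMF of the transpose. Lemmas~\ref{lemma:extremalfactor} and~\ref{lemma:extremalfactor2} then give nonnegative $M,M'$ with $R=\bar R M$ and $E=M'\bar E$. As in Lemma~\ref{lemma:extremalnormalform}, we may rescale $\bar R$ so that each measurement block is column-stochastic, using Lemma~\ref{lemma:blocknormalization}, which holds for any number of stages. Since both $R$ and $\bar R$ are built from column-stochastic blocks, $M$ is column-stochastic.

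\textbf{Step 2: absorb $M$ and $M'$ into the transformations.} Set $\widehat\Gamma^{2j}:=M\Gamma^{2j}$ and $\widehat\Gamma^{1k}:=\Gamma^{1k}M'$, so that $C_{ijkl}=\bar R_{i:}\widehat\Gamma^{2j}\widehat\Gamma^{1k}\bar E_{:l}$. Define $v^\top:=\mathbf 1^\top M'$. Then $v^\top\bar E=\mathbf 1^\top$ and $\mathbf 1^\top\widehat\Gamma^{1k}=v^\top$. The second stage stays column-stochastic, because $\mathbf 1^\top M\Gamma^{2j}=\mathbf 1^\top$.

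Next, delete the rows of $\bar E$ and the columns of $\widehat\Gamma^{1k}$ indexed by $v_a=0$. Then rescale with $D^v=\operatorname{diag}(v)$: $\tilde E:=D^v\bar E$ and $\tilde\Gamma^{1k}:=\widehat\Gamma^{1k}(D^v)^{-1}$, and set $\tilde\Gamma^{2j}:=\widehat\Gamma^{2j}$. Finally, pad with zero rows and duplicated columns to reach a common ontic dimension.

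\textbf{Step 3: check noncontextuality.} Each map $\Gamma^{2j}\mapsto M\Gamma^{2j}$ and $\Gamma^{1k}\mapsto\Gamma^{1k}M'(D^v)^{-1}$ is linear and nonnegative. Hence any relation $\sum_i\alpha_i\Gamma^{si}=0$ survives, so $\rank(\tilde G^s)\le\rank(G^s)=r_{\TT^s}$. The reverse inequality $\rank(\tilde G^s)\ge r_{\TT^s}$ holds for any factorization, because $C_{[\TT^s]}=\tilde G^s\tilde F^s$. The same argument gives the ranks of $\bar R$ and $\tilde E$. Theorem~\ref{thm:2stageNCOMconditions} then makes the new model an NCOM.

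\textbf{Step 4: relate $\tilde\Gamma$ to $\bar G^1,\bar G^2$.} This is the step I expect to be the main obstacle, because the statement phrases $\tilde\Gamma$ in terms of these extremal factors. The rows of $\tilde G^s$ are linear images of the rows of $G^s$. Moreover, $G^s$ enters the RNMF $C_{[\TT^s]}=G^sF^s$ with $\rank(G^s)=\rank(C_{[\TT^s]})$, so Lemma~\ref{lemma:extremalfactor} gives $G^s=\bar G^s N^s$ with $N^s\ge0$. Hence $\tilde G^s=\bar G^s N^s L^s$, where $L^s$ is the fixed nonnegative linear map on row-flattenings induced by $M$ (respectively $M'(D^v)^{-1}$ followed by deletion and padding). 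So each $\tilde\Gamma$ is a nonnegative linear image of $\bar G^s$. The care needed is to confirm that the deletion and padding steps act as nonnegative matrices on flattenings, which they do, being selections and duplications of coordinates.
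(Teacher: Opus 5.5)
Your proposal is correct and follows the paper's proof. Both write $R=\bar R M$, $E=M'\bar E$ and $G^s=\bar G^s N^s$ via the extremal-factor lemmas, absorb $M$ and $M'$ into the second and first transformation stages, renormalize with $v^\top=\mathbf 1^\top M'$, and pin each rank between the corresponding flattening rank from both sides. The only differences are cosmetic: you delete the coordinates with $v_a=0$ where the paper uses the pseudoinverse of $\operatorname{diag}(v)$ and overwrites zero columns, and you bound $\rank(\tilde G^s)$ above by $\rank(G^s)$ rather than by $\rank(\bar G^s)$, which is equally valid.
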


\begin{proof}
By assumption, $C$ admits an NCOM,
\begin{equation}
\label{eq:NCOMbyassumption}
C_{ijkl}=\sum_{wxy}R_{iw}\Gamma^{2j}_{wy}\Gamma^{1k}_{yx}E_{xl}.
\end{equation}

We denote by $G^1$ and $G^2$ the matrices formed by placing the row-flattenings of $\Gamma^{1k}$ and $\Gamma^{2j}$ in their respective rows. By applying Lemma~\ref{lemma:extremalfactor} to the flattenings $C_{[\e]}, C_{[\TT^1]}$ and $C_{[\TT^2]}$, and Lemma~\ref{lemma:extremalfactor2} to $C_{[\pp]}$, we compute the extremal factors $\bar{R}$, $\bar G^1, \bar G^2$ and $\bar{E}$ respectively. Following the argument in Lemmas~\ref{lemma:extremalfactor} and~\ref{lemma:extremalfactor2}, the matrices $R,E,G^1$ and $G^2$ associated with the original NCOM in Eq.~\eqref{eq:NCOMbyassumption} must satisfy the conditions
\begin{equation}
\label{eq:newfactors}
\begin{split}
&R=\bar RM^1,\\
&E=M^2\bar E,\\
&G^1=\bar G^1M^3,\\
&G^2=\bar G^2M^4,
\end{split}
\end{equation}
where the matrices $M^1$, $M^2$, $M^3$, and $M^4$ are nonnegative.
As in Lemma~\ref{lemma:extremalnormalform}, $\bar R$ may be chosen to be normalized, that is, each measurement block represented in $\bar R$ is column-stochastic, in which case $M^1$ is column-stochastic.

The matrices $M^1$ and $M^2$ may be absorbed into the adjacent transformation stages by writing
\begin{equation}
\begin{split} \label{eq:absorption}
\widehat\Gamma^{1k}&:=\Gamma^{1k}M^2,\\
\widehat\Gamma^{2j}&:=M^1\Gamma^{2j},\\
\end{split}
\end{equation}
The extremal factors $\bar R$ and $\bar E$, alongside the new transformation matrices $\widehat\Gamma^{1k}$ and $\widehat\Gamma^{2j}$, give rise to a new decomposition of the COPE tensor, given by
\begin{equation}
C_{ijkl}=\sum_{wxy}\bar R_{iw}\widehat\Gamma^{2j}_{wy}\widehat\Gamma^{1k}_{yx}\bar E_{xl}.
\end{equation}

Since in an NCOM the matrix $E$ of epistemic states is column-stochastic, and given that $E=M^2\bar{E}$, we see that 
\begin{equation}
    \mathbf 1^\top E=(\mathbf 1^\top M^2)\bar{E}=\mathbf 1 ^\top .
\end{equation}
We define the vector $v^\top:=\mathbf 1^\top M^2$ so that $v^\top\bar E=\mathbf1^\top$. The matrices $\Gamma^{1k}$ are also column-stochastic, so that the first line of Eq.~\eqref{eq:absorption} implies
\begin{equation}
    \mathbf{1}^\top\widehat\Gamma^{1k} =(\mathbf{1}^\top\Gamma^{1k})M^2\ =\mathbf{1}^\top M^2 = v^\top
\end{equation}
Finally, recalling that in the first line of Eq.~\eqref{eq:newfactors} we choose the matrix $\bar{R}$  such that $M^1$ is column-stochastic, the second line of Eq.~\eqref{eq:absorption} implies
\begin{equation}
\mathbf{1}^\top\widehat\Gamma^{2j} = \mathbf 1^\top M^1\Gamma^{2j}=\mathbf{1}^\top
\end{equation}
since the $\Gamma^{2j}$ matrices are also column-stochastic.

The vector $v$ is then used to rescale the new matrix factors, via $D:=\operatorname{diag}(v)$, and its pseudoinverse, $D^\dagger$. We define the rescaled epistemic state and transformation matrices 
\begin{equation}
\begin{split} \label{eq:diagonalscaling}
&\tilde E:=D\bar E,\\
&\tilde\Gamma^{1k}:=\widehat\Gamma^{1k}D^\dagger,\\
&\tilde\Gamma^{2j}:=\widehat\Gamma^{2j}.
\end{split}
\end{equation}
Note that the rescalings preserve the COPE tensor. To ensure column-stochasticity of $\tilde\Gamma^{1k}$, one may simply overwrite each zero column $\tilde\Gamma^{1k}_{:x}$ for $v_x=0$ with a nonzero column $\tilde\Gamma^{1k}_{:x_0}$ for some fixed $x_0$, where $v_{x_0}>0$. Thus, all factors are now normalized.

The absorption of $M^1$ and $M^2$ in Eq.~\eqref{eq:absorption}, restriction to the retained coordinates, and positive diagonal rescaling in Eq.~\eqref{eq:diagonalscaling}, induce fixed nonnegative linear maps on the row-flattened transformation matrices. Consequently, there exist nonnegative matrices $L^1$ and $L^2$ such that
$\tilde G^1=G^1L^1$ and $\tilde G^2=G^2L^2$.
Combining these relations with the extremal factors given in Eq.~\eqref{eq:newfactors} results in $\tilde G^1=\bar G^1H^1$ and $\tilde G^2=\bar G^2H^2$, where $H^1:=M^3L^1$ and $H^2:=M^4L^2$ are nonnegative.

The new matrices $\bar R, \tilde G^1, \tilde G^2$ and $\tilde E$ reproduce the COPE tensor $C$, so that their ranks are bounded below by $\rank(C_{[\e]})$, $\rank(C_{[\TT^1]})$, $\rank(C_{[\TT^2]})$, and $\rank(C_{[\pp]})$, respectively. At the same time, they can all be expressed in terms of matrix factorizations which include the corresponding extremal factors $\bar R$, $\bar G ^1$, $\bar G ^2$, and  $\bar E$, whose ranks are $\rank(C_{[\e]})$, $\rank(C_{[\TT^1]})$, $\rank(C_{[\TT^2]})$, and $\rank(C_{[\pp]})$, respectively, as shown in Lemmas~\ref{lemma:extremalfactor} and~\ref{lemma:extremalfactor2}. Hence, $\bar R$, $\tilde G^1$, $\tilde G^2$, and $\tilde E$ also have ranks bounded above by $\rank(C_{[\e]})$, $\rank(C_{[\TT^1]})$, $\rank(C_{[\TT^2]})$, and $\rank(C_{[\pp]})$, respectively. Therefore, all four rank conditions of Theorem~\ref{thm:2stageNCOMconditions} are preserved. 

Finally, if necessary, the columns of $\bar R$ and the columns of the final transformation matrices may be duplicated, and zero rows may be appended to the final transformation matrices and $\tilde E$, to arrive at a common dimension without altering the COPE tensor, normalization, or rank conditions.
\end{proof}
We now formalize the linear program as follows.
\begin{theorem}
\label{thm:sequential}
It can be decided whether a $\pp\TT^1\TT^2\MM$ scenario in an operational theory with $n_\pp$ preparations, $n_\e$ measurement events, $n_{\TT^1}$ first-stage transformations, $n_{\TT^2}$ second-stage transformations, and maximum Tucker rank component $r$ of its COPE tensor admits an NCOM in time
$\poly((n_\e n_\pp n_{\TT^1}n_{\TT^2})^{O(r)})$.
\end{theorem}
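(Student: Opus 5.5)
The plan is to mirror the proof of Theorem~\ref{thm:PTMLinprog}, using the normal form of Lemma~\ref{lemma:sequentialnormalform} to fix as many factors as possible. The remaining unknowns are then found by a feasibility program whose size is polynomial in $(n_\e n_\pp n_{\TT^1}n_{\TT^2})^{O(r)}$.

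\textbf{Step 1: extremal factors.} By Lemma~\ref{lemma:sequentialnormalform}, if an NCOM exists we may take $R=\bar R$ and $E=D\bar E$ with $D=\operatorname{diag}(v)$ for some $v\ge 0$. We may further take $\tilde G^1=\bar G^1H^1$ and $\tilde G^2=\bar G^2H^2$ with $H^1,H^2\ge 0$.

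We compute $\bar R$ from $C_{[\e]}$ and $\bar E$ from $C_{[\pp]}$ via Lemmas~\ref{lemma:extremalfactor} and~\ref{lemma:extremalfactor2}. This costs $n_\e^{O(r)}+(n_\pp n_{\TT^1}n_{\TT^2})^{O(r)}$, since each unfolding has at most $n_\e n_\pp n_{\TT^1}n_{\TT^2}$ rows and columns and rank at most $r$. Likewise we compute $\bar G^1$ and $\bar G^2$ from $C_{[\TT^1]}$ and $C_{[\TT^2]}$. The ontic dimension is the common size $|\Lambda|=(n_\e n_\pp)^{O(r)}$ of these factors, after duplication and padding.

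\textbf{Step 2: decoupling the bilinearity.} The main obstacle is that the model equation
\begin{equation}
C_{ijkl}=\bar R_{i:}\Gamma^{2j}\Gamma^{1k}D\bar E_{:l}
\end{equation}
is bilinear in the two transformation stages, so it is not directly a linear program. I would break this with the middle unfolding. Set $\Phi^{k}:=\Gamma^{1k}D\bar E$, whose columns represent $\TT^1$-transformed preparations, and introduce the unknown nonnegative matrix $F$ with columns $F_{:(kl)}=\Gamma^{1k}\tilde E_{:l}$.

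The idea is to reduce the problem to two single-stage problems. The first is the $\pp\TT^1\MM'$ problem for the $\TT^2$-transformed measurements, with factor $\bar R^{(2)}$ built from $C_{[\rm mid]}$. The second is the $\pp'\TT^2\MM$ problem for the $\TT^1$-transformed preparations. Each is handled exactly as in Theorem~\ref{thm:PTMLinprog}.

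To make this sound, I would fix one stage at the extremal level. Every stochastic $\Gamma^{1k}$ satisfying the normal form has columns lying in the convex hull of the extremal points determined by the polytope of $C_{[\rm mid]}$, by the argument of Lemma~\ref{lemma:nestedpolytopes} applied to the middle unfolding. So $\Gamma^{1k}\tilde E$ can be replaced by a nonnegative combination of fixed extremal vectors, with unknown coefficients. The constraint then becomes linear in the pair (stage-2 flattening, coefficient matrix). Alternatively, one can linearize via a lifted variable $Z=G^2\otimes G^1$ restricted to $\bar G^2\otimes\bar G^1$ cone generators. In that formulation the separable structure is forced by the rank constraints $\rank(G^i)=\rank(C_{[\TT^i]})$, imposed through the shear ansatz of Eq.~\eqref{eq:shear}. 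I expect verifying that this relaxation is exact is the hardest step. My first attempt would be to show that any feasible lifted solution, contracted against $\bar R$ and $\bar E$, yields transformation matrices reproducing $C$ by the rank-equality argument of Lemma~\ref{lemma:blocknormalization}.

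\textbf{Step 3: constraints, correctness, and complexity.} We impose linear constraints in $(D^{(1)},D^{(2)},v)$:
\begin{itemize}
\item nonnegativity of the entries;
\item the column-sum constraints $G^1\Sigma=\mathbf 1v^\top$, $G^2\Sigma=\mathbf 1\mathbf 1^\top$ and $v^\top\bar E=\mathbf 1^\top$;
\item the factorization equalities $G^2 J^2=C_{[\TT^2]}$ and $G^1J^1=C_{[\TT^1]}$.
\end{itemize}
Here $J^1,J^2$ are Kronecker products of the fixed extremal factors, as in Eq.~\eqref{eq:J1}, and the rank conditions of Theorem~\ref{thm:2stageNCOMconditions} are preserved by the shear form.

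Feasibility then yields an NCOM after the rescaling by $D^v$, exactly as in Theorem~\ref{thm:PTMLinprog}. Conversely, Lemma~\ref{lemma:sequentialnormalform} guarantees feasibility whenever an NCOM exists.

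The LP has $\poly((n_\e n_\pp n_{\TT^1}n_{\TT^2})^{O(r)})$ variables and constraints, since all Kronecker products are of size at most $(n_\e n_\pp n_{\TT^1}n_{\TT^2})^{O(r)}$. This gives the claimed bound.
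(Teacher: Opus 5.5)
Step~1 matches the paper's normal form. The gap is in Step~2, and it carries into Step~3: you never actually remove the bilinearity between the two transformation stages.

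\textbf{The route through $C_{[\rm mid]}$ does not work.} Your first route expands $\Gamma^{1k}\tilde E_{:l}$ over extremal points of a polytope built from $C_{[\rm mid]}$. Lemmas~\ref{lemma:nestedpolytopes} and~\ref{lemma:extremalfactor} only give such expansions for restricted factorizations, i.e.\ when the relevant factor has rank equal to $\rank(C_{[\rm mid]})$. Theorem~\ref{thm:2stageNCOMconditions} imposes no such condition. The paper stresses that composite equivalences such as $\pp_l\TT^1_k\simeq\pp_{l'}\TT^1_{k'}$ need not be respected by an NCOM, so this restriction can discard every NCOM. Even granting it, an unknown nonnegative coefficient matrix multiplied by the unknown $\Gamma^{2j}$ is still a product of two unknowns, so nothing becomes linear ``in the pair''.

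\textbf{Step~3 has the same defect.} In the two-stage scenario $C_{[\TT^1]}=G^1(A'^\top\otimes B)$, where $A'$ is built from $\bar R_{i:}\Gamma^{2j}$; likewise $C_{[\TT^2]}=G^2(A^\top\otimes B')$, where $B'$ is built from $\Gamma^{1k}\tilde E_{:l}$. So $J^1$ and $J^2$ are not Kronecker products of fixed extremal factors; the columns of $C_{[\TT^1]}$ even carry the $\TT^2$ index. The pseudoinverse/shear ansatz of Theorem~\ref{thm:PTMLinprog} therefore has nothing fixed to act on. Imposing the two flattening equations separately would not enforce the joint equation $C_{ijkl}=\bar R_{i:}\Gamma^{2j}\Gamma^{1k}\tilde E_{:l}$ anyway.

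\textbf{The ontic dimension is not bounded.} Your claim $|\Lambda|=(n_\e n_\pp)^{O(r)}$ is unjustified for the intermediate ontic space. In the normal form of Lemma~\ref{lemma:sequentialnormalform}, the index contracted between $\widehat\Gamma^{2j}$ and $\widehat\Gamma^{1k}$ is still the original, unbounded ontic space.

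\textbf{The lifted variable is the right direction, but you draw the wrong conclusion.} Rank conditions on $G^1$ and $G^2$ separately cannot force a lifted matrix to be a tensor product. Lemma~\ref{lemma:blocknormalization}, which concerns normalization of response-function blocks, is irrelevant here. The missing idea is that separability is not needed.

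The paper sets $K_{(b,a),(w,x)}:=v_x\sum_y H^2_{b,(w,y)}H^1_{a,(y,x)}$ and $u_{(a,x)}:=v_x\sum_y H^1_{a,(y,x)}$, which marginalizes out the unbounded middle index $y$. It then imposes, with $K,u,v\geq0$, the linear constraints
\begin{itemize}
\item[] $(\bar G^2\otimes\bar G^1)K(\bar R^\top\otimes\bar E)=C$,
\item[] $v^\top\bar E=\mathbf 1^\top$,
\item[] $\sum_a\bar G^1_{ka}u_{(a,x)}=v_x$,
\item[] $\sum_{b,w}\bar G^2_{jb}K_{(b,a),(w,x)}=u_{(a,x)}$.
\end{itemize}
Any feasible point, separable or not, yields an NCOM on an enlarged intermediate ontic space indexed by $(a,x)$. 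Set $\tilde\Gamma^{1k}_{(a,x'),x}=\delta_{xx'}\bar G^1_{ka}u_{(a,x)}/v_x$ and $\tilde\Gamma^{2j}_{w,(a,x)}=\sum_b\bar G^2_{jb}K_{(b,a),(w,x)}/u_{(a,x)}$ on the supports of $v$ and $u$, overwriting zero-weight columns.

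\begin{itemize}
\item[] The marginal constraints give column-stochasticity.
\item[] The scalings cancel in the contraction, so $C$ is reproduced.
\item[] $\tilde G^1=\bar G^1L^1$ and $\tilde G^2=\bar G^2L^2$ for fixed matrices $L^1,L^2$, and $\tilde E=D^v\bar E$. Together with reproducing $C$, this makes all four ranks equal those of the corresponding unfoldings.
\end{itemize}
Completeness comes from Lemma~\ref{lemma:sequentialnormalform}. The program has $(n_\e n_\pp n_{\TT^1}n_{\TT^2})^{O(r)}$ variables and constraints, which gives the stated bound.
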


\begin{proof}
By Lemma~\ref{lemma:sequentialnormalform}, if an NCOM exists, then one exists using the extremal factors $\bar R$, $\bar E$, $\bar G^1$, and $\bar G^2$, where $\tilde E=D^v\bar E$, $\tilde G^1=\bar G^1H^1$, and $\tilde G^2=\bar G^2H^2$, where $D^v=\operatorname{diag}(v)$ is a nonnegative diagonal matrix, and matrices $H^1$ and $H^2$ are also nonnegative. Therefore, it suffices to search for an NCOM in this normal form.

We denote the number of columns of $\bar R$ by $d_R$, the number of rows of $\bar E$ by $d_E$, and the numbers of columns of $\bar G^1$ and $\bar G^2$ by $m_1$ and $m_2$ respectively. To encode the nonnegative maps on $\bar G^1$ and $\bar G^2$, we introduce the following compositions of nonnegative maps,
\begin{equation}
\begin{split}
&K_{(b,a),(w,x)}:=v_x\sum_y H^2_{b,(w,y)} H^1_{a,(y,x)},\\
&u_{(a,x)}:=v_x\sum_y H^1_{a,(y,x)}.
\end{split}
\label{eq:sequential-lift}
\end{equation}

Here, $H^1$ and $H^2$ represent the nonnegative maps on $\bar G^1$ and $\bar G^2$, respectively, up to a diagonal scaling, and $v$ and $u$ represent the diagonal scalings of the factors. The matrix $K$ is the nonnegative rescaling of the tensor product of $H^1$ and $H^2$, which is treated as a single matrix variable in the following linear program. In particular, the linear program need not explicitly solve for the individual maps $H^1$ and $H^2$.

Now, the linear program to decide the existence of a NCOM of the sequential $\pp\TT^1\TT^2\MM$ scenario is as follows in terms of the variables $K,u,v \geq 0$,
\begin{align}
&\left[(\bar G^2\otimes\bar G^1)K(\bar R^\top\otimes\bar E)\right]_{(j,k),(i,l)}:=\\
&\sum_{a,b,w,x} \bar G^2_{jb}\bar G^1_{ka}K_{(b,a),(w,x)}\bar R_{iw}\bar E_{xl}=C_{ijkl}\quad \forall \  i,j,k,l,\\
&\sum_x v_x\bar E_{xl}=1 \quad\forall \  l \label{eq:sequentialE} \\
&\sum_a\bar G^1_{ka}u_{(a,x)}=v_x\quad\forall \  k,x, \label{eq:sequentialG1}\\
&\sum_{b,w}\bar G^2_{jb}K_{(b,a),(w,x)}=u_{(a,x)}\quad\forall \  j,a,x. \label{eq:sequentialG2}
\end{align}
Then, given a solution to the above linear program, we obtain the following matrices. Using the notation $D^v:=\operatorname{diag}(v)$ and $D^u:=\operatorname{diag}(u)$, and using $\dagger$ to denote the pseudoinverse, we set
\begin{align}
&\tilde E_{xl} :=v_x\bar E_{xl}, \label{eq:sequentialrecoveredE}\\
&\tilde\Gamma^{1k}_{(a,x'),x} :=\delta_{x,x'}\bar G^1_{ka}u_{(a,x)}D^{v\dagger}_{xx},\label{eq:sequentialrecoveredG1}\\
&\tilde\Gamma^{2j}_{w,(a,x)} :=\sum_b\bar G^2_{jb}K_{(b,a),(w,x)}D^{u\dagger}_{(a,x),(a,x)}. \label{eq:sequentialrecoveredG2}
\end{align}

We see that Eq.~\eqref{eq:sequentialG2} combined with Eq.~\eqref{eq:sequentialrecoveredG2} ensures that each nonzero column of each $\tilde \Gamma^{2j}$ matrix is stochastic. Similarly, by Eq.~\eqref{eq:sequentialG1} and Eq.~\eqref{eq:sequentialrecoveredG1}, each nonzero column of each $\tilde \Gamma^{1k}$ is stochastic, and by Eq.~\eqref{eq:sequentialE} and Eq.~\eqref{eq:sequentialrecoveredE}, $\tilde E$ is column-stochastic.

To ensure full column-stochasticity of the transformation matrices, one may choose one index $x_0$ with $v_{x_0}>0$ and one index $(a_0,x_1)$ with $u_{(a_0,x_1)}>0$. For every $x$ with $v_x=0$, overwrite the $x$th column of every $\tilde\Gamma^{1k}$ by a copy of its $x_0$th column. For every $(a,x)$ with $u_{(a,x)}=0$, overwrite the corresponding column of every $\tilde\Gamma^{2j}$ by a copy of its $(a_0,x_1)$th column. Since $\tilde E$ and $\tilde \Gamma^{1k}$ have zero-rows in the corresponding positions of $v_x=0$ and $u_{(a,x)}=0$, respectively, this overwriting procedure preserves the COPE tensor.

The final step is to pad the dimensions of the factor matrices to a common ontic dimension if necessary. The epistemic state matrix $\tilde E$ is padded by appending zero rows, while the response function matrix $\tilde R$ is padded by duplicating an existing column. Each transformation matrix is enlarged compatibly by appending zero rows and duplicating an existing normalized column. These additions preserve the reconstructed COPE tensor, as the new ontic states are never populated.

Since we obtained the final factors by linear transformations of the extremal factors, their ranks equal the ranks of the corresponding COPE unfoldings. Therefore, the final factors satisfy the rank conditions of Theorem~\ref{thm:2stageNCOMconditions} and constitute an NCOM.

Now we consider the complexity. By Lemma~\ref{lemma:extremalfactor}, $\bar R$ has $n_\e^{O(r)}$ columns, $\bar E$ has $n_\pp^{O(r)}$ rows, $\bar G^1$ has $n_{\TT^1}^{O(r)}$ columns, and $\bar G^2$ has $n_{\TT^2}^{O(r)}$ columns. Hence $K$ has $(n_\e n_\pp n_{\TT^1}n_{\TT^2})^{O(r)}$ entries, and the number of variables and constraints in the above linear program is polynomial in this quantity. The decision procedure consequently runs in time
\begin{equation}
\poly((n_\e n_\pp n_{\TT^1}n_{\TT^2})^{O(r)}).
\end{equation}
\end{proof}

\section{Extension to COPE $k$-Tensors}

The analysis applied to PTM scenarios with two transformation stages in Sec.~\ref{sec:sequential} can be easily extended to scenarios with arbitrarily many transformation stages, i.e., the $\pp\TT^1\TT^2...\TT^{k-2}\MM$ scenario for arbitrary but finite $k$.

In this case, the probabilistic structure of the operational theory is given by a COPE $k$-tensor $C_{i_1i_2...i_k}$. Here, the last index $i_k$ labels preparations and the first index $i_1$ labels measurement outcomes. Furthermore, the index $i_{k-1}$ labels first-stage transformations, $i_{k-2}$ labels second-stage transformations, and so on, down to $i_2$, which labels $(k-2)$th stage transformations. 
As in $\pp\TT^1\MM$ and $\pp\TT^1\TT^2\MM$ scenarios, the procedures in each stage appearing in the COPE tensor are atomic. Now, the operational equivalence between two transformations in the $l$th stage is denoted by
\begin{equation} \label{eq:transformationk-2layereqv}
\begin{split}
\TT^l_i \simeq \TT^l_j \; &\text{iff} \; p(\e| \pp, \TT^1_{i_{k-1}},...,\TT^l_i,...,\TT^{k-2}_{i_2}, \MM) \\
&=p(\e| \pp, \TT^1_{i_{k-1}},...,\TT^l_j,...,\TT^{k-2}_{i_2}, \MM) \\ &\forall \  \pp, \e, \MM, \TT^q_{i_{k-q}}
\text{ with }q\neq l. \\
\end{split}
\end{equation}
In parallel with the previous section, we characterize preGPT, GPT and noncontextual ontological models as factorizations of COPE $k$-tensors. 

\subsection{PreGPT and GPT Models}

For a sequential-transformation scenario in an operational theory with a COPE $k$-tensor $C$, a preGPT model is given by a factorization of the form,
\begin{equation}
    C_{i_1...i_k}=A_{i_1:}T^{(k-2)i_2}...T^{2i_{k-2}}T^{1i_{k-1}}B_{:i_k}
\end{equation}
in a vector space of dimension $r$.

It is clear that such a preGPT model can be found by first applying the tensor-train singular-value decomposition step of Sec.~\ref{sec:sequential}. Then, to arrive at a single consistent unit effect resulting from the matrix $A$, we may apply the row-translation step shown in the proof of Lemma~\ref{lemma:preGPT_pt1t2}. Finally, in order to preserve the unit effect after each transformation stage, we iterate the transformation-translation step used in the proof of Lemma~\ref{lemma:preGPT_pt1t2}.

In order to isolate each stage of procedures in the above factorization, we consider the corresponding flattenings of the COPE tensor.

To begin, we denote by $A^{l}$ the matrix whose rows each represent the composition of a choice of outcome, a $(k-2)$th-stage transformation, a $(k-3)$th-stage transformation, and so on, down to an $l$th-stage transformation. Note that, as a result, $A^l$ has $n_\e^\MM n_\MM n_{\TT^{k-2}}\cdots n_{\TT^l}$ rows.
Similarly, we denote by $B^{l}$ the matrix whose columns each represent the composition of a choice of preparation, a first-stage transformation, a second-stage transformation, and so on, up to an $l$th-stage transformation.
This matrix has $ n_{\TT^{l}} \cdots n_{\TT^1} n_\pp$ columns.
Furthermore, we denote by $\Theta^l$ the matrix whose columns each represent a choice of each stage of procedures except the $l$th transformation stage, and denote by $Y^l$ the matrix whose rows each represent an $l$th-stage transformation. Specifically, with the convention $A^{(k-1)}:=A$, and $B^0:=B$,
\begin{equation}
\begin{split}
    &\Theta^l:=A^{(l+1)\top}\otimes B^{(l-1)},\\
    &Y^l_{j,(r(m-1)+n)}:=T^{lj}_{m,n},
\end{split}
\end{equation}
where we recall that $r$ is the preGPT dimension.

Now we are ready to construct the relevant flattenings of $C$:
\begin{equation}
    \begin{split}
        C_{[\e]}&=AB^{(k-2)},\\
        C_{[\pp]}&=A^{1}B,\\
        C_{[\TT^l]}&=Y^l\Theta^l\quad\forall \  l.
    \end{split}
\end{equation}
It is clear that the proof of Theorem~\ref{thm:2stageGPTconditions} can be extended to show that a preGPT model of the COPE $k$-tensor is a GPT if and only if the following conditions are simultaneously satisfied:
\begin{equation} \label{eq:kGPTconditions}
    \begin{split}
    &\rank(A)=\rank(C_{[\e]}),\\
    &\rank(B)=\rank(C_{[\pp]}),\\
    &\rank(Y^l)=\rank(C_{[\TT^l]})\quad\forall \  l.
\end{split}
\end{equation}
As in Sec.~\ref{sec:sequential}, a preGPT constructed from the minimal tensor-train decomposition of $C$ is a GPT.

\subsection{NCOMs}

In an ontological model of the $\pp\TT^1...\TT^{k-2}\MM$ scenario, response functions are represented by rows of the matrix $R$, and epistemic states are represented by columns of the matrix $E$. Transformations in the $l$th stage of transformations, consisting of $n_{\TT^l}$ many transformations, are represented by transformation matrices $\Gamma^{li_{k-l}}$ for $i_{k-l}=1,...,n_{\TT^l}$. Thus, an ontological model of the COPE $k$-tensor is expressed by the decomposition
\begin{equation}
    \label{eq:kCOPEdecomp}
    C_{i_1...i_k}=R_{i_1:}\Gamma^{(k-2)i_2}...\Gamma^{1i_{k-1}}E_{:i_k}.
\end{equation}
Furthermore, it is immediate from Lemma~\ref{lemma:NCOM_nonnegGPT} and Theorem~\ref{thm:2stageNCOMconditions} that the conditions for a NCOM of the $\pp\TT^1...\TT^{k-2}\MM$ scenario can be expressed in the same form as the conditions for the GPT model of the $\pp\TT^1...\TT^{k-2}\MM$ scenario.

\begin{theorem}
    Suppose the nonnegative matrices $R$ and $E$ and sets of nonnegative matrices $\{\Gamma^{l(i_{k-l})}\}$ for $l=1,...,k-2$, induce a decomposition of a COPE $k$-tensor $C$ as in Eq.~\eqref{eq:kCOPEdecomp}. Define the matrices $G^l$ such that the $j$th row of $G^l$ is the row-flattened form of $\Gamma^{lj}$. The decomposition constitutes an NCOM if and only if the following conditions are satisfied:
    \begin{equation}
    \label{eq:kstageNCOMconditions}
        \begin{split}
            &\rank(R)=\rank(C_{[\e]})\\
             &\rank(E)=\rank(C_{[\pp]}) \\
             &\rank(G^l)=\rank(C_{[\TT^l]})\quad\forall \  l.
        \end{split}
    \end{equation}
\end{theorem}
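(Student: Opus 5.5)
The plan mirrors the proofs of Theorem~\ref{thm:PTM-NC-rank-cnstrnts} and Theorem~\ref{thm:2stageNCOMconditions}. The argument has three ingredients.

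\textbf{Step 1: ontological models are nonnegative preGPTs.} First I extend Lemma~\ref{lemma:ont_nonnegpreGPT} to $k-2$ stages. Given the nonnegative factorization of Eq.~\eqref{eq:kCOPEdecomp}, set $A=R$, $B=E$ and $T^{li}=\Gamma^{li}$. This is a nonnegative preGPT whose unit effect is $\mathbf{1}$, and $\mathbf{1}$ is preserved by every column-stochastic $\Gamma^{li}$. Conversely, for a nonnegative preGPT whose unit effect $u$ is preserved by every stage, conjugate each factor by $D=\operatorname{diag}(u)$ exactly as in the two-stage discussion:
\begin{equation*}
A\,T^{(k-2)}\cdots T^{1}B=(AD^\dagger)(DT^{(k-2)}D^\dagger)\cdots(DT^{1}D^\dagger)(DB).
\end{equation*}
The inner factors $D^\dagger D$ act as the identity on the support of $u$. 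This holds because each $T^{li}$ preserves $u^\top$, so the rows of each $T^{li}$ outside the support of $u$ contribute nothing to $u^\top T^{li}$; they are annihilated in the product. Zero columns are then overwritten as in Lemma~\ref{lemma:ont_nonnegpreGPT}.

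\textbf{Step 2: noncontextuality is quotienting.} Next I note, as in Lemma~\ref{lemma:NCOM_nonnegGPT}, that the NCOM conditions are the analogue of Eq.~\eqref{eq:PTMNCOM} with one line $\TT^l_i\simeq\TT^l_j\Leftrightarrow\Gamma^{li}=\Gamma^{lj}$ for each stage $l$. These are literally the GPT quotienting conditions applied to the nonnegative preGPT of Step~1. Hence an NCOM is exactly a nonnegative GPT.

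\textbf{Step 3: apply the $k$-stage GPT criterion.} The theorem then follows by applying the criterion Eq.~\eqref{eq:kGPTconditions}, with $Y^l=G^l$, $A=R$ and $B=E$. The diagonal rescaling of Step~1 does not change any of the ranks involved, since it maps $G^l$ to $G^l(D\otimes D^{\dagger})$ restricted to the support of $u$, on which it is invertible.

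\textbf{Main obstacle.} The substantive point is justifying Eq.~\eqref{eq:kGPTconditions} itself. The paper only asserts that the proof of Theorem~\ref{thm:2stageGPTconditions} extends. I would make this explicit by the Lemma~\ref{lemma:quotiented2}-type argument for each stage $l$. Use the flattening $C_{[\TT^l]}=Y^l\Theta^l$ with $\Theta^l=A^{(l+1)\top}\otimes B^{(l-1)}$. Operational equivalences of the form $\sum_i\alpha_i\TT^l_i\simeq0$ are exactly the left-kernel vectors of $C_{[\TT^l]}$. Quotienting requires $\sum_i\alpha_iT^{li}=0$ for every such $\alpha$, which is the statement $\ker(C_{[\TT^l]}^\top)\subseteq\ker(Y^{l\top})$. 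Since $C_{[\TT^l]}=Y^l\Theta^l$ already gives the reverse inclusion, this is equivalent to $\rank(Y^l)=\rank(C_{[\TT^l]})$. The same argument applied to $C_{[\e]}=AB^{(k-2)}$ and $C_{[\pp]}=A^1B$ handles effects and states.

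The care needed here is that equivalences must be read as linear (convex-difference) relations, not merely pairwise equalities. This matches how the earlier examples treat them, e.g.\ $T^{11}-T^{12}+T^{13}-T^{14}=0$.
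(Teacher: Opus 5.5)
Your proposal is correct and follows the same route as the paper. The ontological model is itself a nonnegative preGPT, noncontextuality is exactly the quotienting condition (so an NCOM is a nonnegative GPT), and the result is Eq.~\eqref{eq:kGPTconditions} applied to it; your left-kernel argument ($\ker(Y^{l\top})\subseteq\ker(C_{[\TT^l]}^\top)$ always holds, with equality iff the ranks agree) spells out what the paper only asserts carries over from Theorem~\ref{thm:2stageGPTconditions}.

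One small caveat: the hypothesis already supplies normalized factors with unit effect $\mathbf{1}$, so the $\operatorname{diag}(u)$ rescaling in Steps~1 and~3 is not needed. This matters because your claim that the rescaling preserves $\rank(G^l)$ fails in general when $u$ has zero entries: rows of $T^{li}$ outside $\operatorname{supp}(u)$ can raise $\rank(G^l)$ without affecting $C$.
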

\begin{proof}
    Since a NCOM is a nonnegative GPT up to a nonnegative rescaling, the proof is simply an application of the GPT conditions in Eq.~\eqref{eq:kGPTconditions} applied to nonnegative preGPTs.
\end{proof}
In the following Theorem, we extend the linear program set out in Theorem~\ref{thm:sequential} to decide the contextuality of $\pp\TT^1\TT^2\MM$ scenarios, to scenarios with $(k-2)$ stages of transformations. 

\begin{theorem}
\label{thm:ksequential}
It can be decided whether a $\pp\TT^1...\TT^{k-2}\MM$ scenario in an operational theory with $k\geq4$ total stages, $n_\pp$ preparations, $n_\e$ measurement events, $n_{\TT^l}$ $l$-stage transformations for all $l=1,...,k-2$, and maximum Tucker rank component $r$ of its COPE tensor admits an NCOM in time
$\poly((n_\e n_\pp n_{\TT^1}...n_{\TT^{k-2}})^{O(r)})$.
\end{theorem}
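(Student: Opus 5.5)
The plan is to generalize the proof of Theorem~\ref{thm:sequential} by induction on the number of transformation stages, in three steps: a normal form, a linear program, and a complexity count.

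\emph{Step 1 (normal form).} First I would establish the $(k-2)$-stage analogue of Lemma~\ref{lemma:sequentialnormalform}. Given an NCOM as in Eq.~\eqref{eq:kCOPEdecomp}, apply Lemma~\ref{lemma:extremalfactor} to $C_{[\e]}$ and to each $C_{[\TT^l]}$, and apply Lemma~\ref{lemma:extremalfactor2} to $C_{[\pp]}$. This yields extremal factors $\bar R$, $\bar E$, $\bar G^1,\dots,\bar G^{k-2}$ with $R=\bar R M^R$, $E=M^E\bar E$, and $G^l=\bar G^l M^l$, where all the $M$'s are nonnegative. As in Lemma~\ref{lemma:extremalnormalform}, $M^R$ can be taken column-stochastic and absorbed into $\Gamma^{(k-2)j}$, while $M^E$ is absorbed into $\Gamma^{1k}$.

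The normalization defects are then propagated stage by stage. Set $v^\top=\mathbf 1^\top M^E$, so that $\mathbf 1^\top\widehat\Gamma^{1k}=v^\top$. Rescaling by $\operatorname{diag}(v)$ makes stage 1 stochastic, and the later stages remain stochastic. Finally, zero columns are overwritten and dimensions padded exactly as before. Every operation is a fixed nonnegative linear map on the row-flattenings, so $\tilde G^l=\bar G^l H^l$ with $H^l\ge0$. The rank conditions of Eq.~\eqref{eq:kstageNCOMconditions} are preserved by the same sandwich argument: each rank is bounded below because the factors reproduce $C$, and bounded above because each factor is a product involving an extremal factor.

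\emph{Step 2 (linear program).} The variables are a single lifted tensor $K\ge0$ indexed by $((b_{k-2},\dots,b_1),(w,x))$, defined as $v_x\sum_{y}\prod_l H^l$ chained along the internal ontic indices. Alongside $K$ we introduce a hierarchy of partial marginals $u^{(1)}=v$, $u^{(2)},\dots,u^{(k-2)}$, generalizing the pair $(v,u)$ of Eq.~\eqref{eq:sequential-lift}.

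The constraints are as follows:
\begin{itemize}
\item the reconstruction equality $(\bar G^{k-2}\otimes\cdots\otimes\bar G^1)K(\bar R^\top\otimes\bar E)=C$, suitably reindexed;
\item $v^\top\bar E=\mathbf 1^\top$;
\item for each stage $l$, the requirement that contracting stage $l$'s extremal factor against the $l$th marginal reproduces the $(l-1)$th marginal, for all choices of the corresponding $\bar G^l$ row. This generalizes Eqs.~\eqref{eq:sequentialG1}--\eqref{eq:sequentialG2}.
\end{itemize}
These are all linear in $(K,u^{(l)})$. From a feasible point I would reconstruct $\tilde E$ and the $\tilde\Gamma^{lj}$ as in Eqs.~\eqref{eq:sequentialrecoveredE}--\eqref{eq:sequentialrecoveredG2}. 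Each intermediate ontic space is the product of the previous one with the column index $a_l$ of $\bar G^l$, and we divide by the pseudoinverse of the corresponding marginal. Zero columns are then fixed by copying and dimensions padded. Soundness follows, as before, because the factors are linear images of extremal factors, so the ranks match.

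\emph{Step 3 (complexity).} $\bar R$, $\bar E$ and $\bar G^l$ have $n_\e^{O(r)}$, $n_\pp^{O(r)}$ and $n_{\TT^l}^{O(r)}$ columns or rows respectively. Since the Tucker rank of $C_{[\TT^l]}$ is at most $r$, Lemma~\ref{lemma:extremalfactor} applies with the correct exponent. Hence $K$ and the marginals have at most $(n_\e n_\pp\prod_l n_{\TT^l})^{O(r)}$ entries, and the constraint count is polynomial in this quantity. Solving the LP therefore gives the claimed bound.

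The main obstacle I expect is Step 2's completeness direction. One must check that the normal form of Step 1 really yields a feasible $(K,u^{(l)})$, in particular that the chained internal ontic indices and the intermediate rescalings can be consistently encoded by nested marginals. One must also check that the recovered matrices reproduce $C$ despite the enlarged internal ontic spaces. I would handle this by induction on $k$, treating stages $2,\dots,k-2$ together with $R$ as a "super-effect" and reusing the two-stage bookkeeping at each step.
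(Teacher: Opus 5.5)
Your proposal is correct and takes essentially the same route as the paper. It uses extremal factors for every mode, absorbs the column-stochastic $M^R$ and $M^E$ into the last and first transformation stages with a $\operatorname{diag}(v)$ rescaling, introduces a single lifted variable $K$ with nested marginals $v,u^1,\ldots,u^{k-3}$, recovers the enlarged intermediate ontic spaces by dividing by the pseudoinverses of those marginals, and bounds the size of $K$. The completeness check you flag needs no induction: the paper defines $u^l$ and $K$ directly as $v_x$-weighted chained sums of products of the $H^q$, and each marginal constraint then follows in one line from the column-stochasticity of the normal-form $\tilde\Gamma^{l}$.
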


\begin{proof}
We give an outline of the proof for the $\pp\TT^1...\TT^{k-2}\MM$ scenario with $k\geq4$.

We now apply the same logic used to restrict the candidate factors in the $\pp\TT^1\MM$ and $\pp\TT^1\TT^2\MM$ scenarios to each stage of procedures. Suppose that an NCOM exists, with response function matrix $R$, epistemic-state matrix $E$, and transformation factors $G^l$, where the $i_{k-l}$th row of $G^l$ is the row-flattening of $\Gamma^{li_{k-l}}$, for $l=1,\ldots,k-2$. By Lemmas~\ref{lemma:extremalfactor} and~\ref{lemma:extremalfactor2}, there exist nonnegative matrices $M^1,M^2,M^3,\ldots,M^k$ such that
\begin{equation}
\begin{split}
&R=\bar R M^1,\\
&E=M^2\bar E,\\
&G^l=\bar G^lM^{l+2},\qquad l=1,\ldots,k-2,
\end{split}
\end{equation}
where $\bar R,\bar E,\bar G^1,\ldots,\bar G^{k-2}$ are the corresponding extremal factors.

As in Lemma~\ref{lemma:sequentialnormalform}, the maps on $R$ and $E$ may be absorbed into the last and first transformation stages, respectively. After applying the required diagonal normalization, there is therefore an equivalent NCOM of the form
\begin{equation}
\begin{split}
&\tilde R =\bar R,\\
&\tilde E =D^v\bar E,\\
&\tilde G^l =\bar G^lH^l,\qquad l=1,\ldots,k-2,
\end{split}
\end{equation}
where $D^v:=\operatorname{diag}(v)$, its pseudoinverse $D^{v\dagger}$, and the $H^l$ matrices are nonnegative. Thus, it suffices to search for an NCOM in this normal form.

Solving for the nonnegative maps $D^v,H^1,\ldots,H^{k-2}$ directly would be computationally inefficient. Therefore, we consider their compositions. The columns of $\bar G^l$ are indexed by $a_l$, the rows of $\bar E$ by $x$, and the columns of $\bar R$ by $w$. For an NCOM in the above normal form, we construct the successive marginals of the nonnegative maps $D^v,H^1,\ldots,H^{k-2}$ as,
\begin{equation} \label{eq:kmarginals}
\begin{split}
&u^l_{(a_l,\ldots,a_1,x)}:=v_x\sum_{\lambda_1,\ldots,\lambda_l}\prod_{q=1}^lH^q_{a_q,(\lambda_q,\lambda_{q-1})}
\end{split}
\end{equation}
for $l=1,\ldots,k-3$, and their compositions,
\begin{equation}
K_{(a_{k-2},\ldots,a_1),(w,x)}:=v_x\sum_{\lambda_1,\ldots,\lambda_{k-3}}\prod_{q=1}^{k-2}H^q_{a_q,(\lambda_q,\lambda_{q-1})},
\end{equation}
where $\lambda_0=x$ and $\lambda_{k-2}=w$. The marginals in Eq.~\eqref{eq:kmarginals} will be used to enforce normalizability on the final transformation factors.

Consequently, the existence of an NCOM can be expressed as feasibility of the following linear constraints:
\begin{align}
&\sum_{a_1,\ldots,a_{k-2},w,x}(\bar R_{i_1w}K_{(a_{k-2},\ldots,a_1),(w,x)}\bar E_{xi_k}\prod_{l=1}^{k-2}\bar G^l_{i_{k-l}a_l})\\
&=C_{i_1i_2\cdots i_k}\quad \forall \  i_1,i_2,\ldots,i_k, \label{eq:ksequential-reconstruction}\\
&\sum_xv_x\bar E_{xi_k}=1\quad \forall \  i_k \label{eq:ksequential-E}\\
&\sum_{a_1}\bar G^1_{i_{k-1}a_1}u^1_{(a_1,x)}=v_x\quad \forall \  i_{k-1},x, \label{eq:ksequential-first}\\
&\sum_{a_l}\bar G^l_{i_{k-l}a_l}u^l_{(a_l,\ldots,a_1,x)}=u^{l-1}_{(a_{l-1},\ldots,a_1,x)}\\
&\forall \  i_{k-l},a_{l-1},\ldots,a_1,x,\quad l=2,\ldots,k-3, \label{eq:ksequential-middle}\\
&\sum_{a_{k-2},w}\bar G^{k-2}_{i_2a_{k-2}}K_{(a_{k-2},\ldots,a_1),(w,x)}=u^{k-3}_{(a_{k-3},\ldots,a_1,x)}\\
&\forall \  i_2,a_{k-3},\ldots,a_1,x, \label{eq:ksequential-final}\\
&K,u^1,\ldots,u^{k-3},v\geq0. \label{eq:ksequential-nonnegative}
\end{align}
Then, given a solution to the above linear program, we obtain the following matrices. Using the notation $D^v:=\operatorname{diag}(v)$ and $D^{u^l}:=\operatorname{diag}(u^l)$, we set
\begin{align}
    &\tilde E_{xi_k}:=v_x\bar E_{xi_k},\label{eq:krecoveredE} \\
    &\tilde\Gamma^{1i_{k-1}}_{(a_1,x),x}:=\bar G^1_{i_{k-1}a_1}u^1_{(a_1,x)}D^{v\dagger}_{x,x},\label{eq:ksequential-recovered-first} \\
    &\tilde\Gamma^{li_{k-l}}_{(a_l,\ldots,a_1,x),(a_{l-1},\ldots,a_1,x)}:=\\
    &\bar G^l_{i_{k-l}a_l}u^l_{(a_l,\ldots,a_1,x)}D^{u^{l-1}\dagger}_{(a_{l-1},\ldots,a_1,x),(a_{l-1},\ldots,a_1,x)}\label{eq:ksequential-recovered-middle}\\
    &\forall \  l=2,...,k-3,\\
    &\tilde\Gamma^{(k-2)i_2}_{w,(a_{k-3},\ldots,a_1,x)}:=\\
    &\sum_{a_{k-2}}\bar G^{k-2}_{i_2a_{k-2}}K_{(a_{k-2},\ldots,a_1),(w,x)}D^{u^{k-3}\dagger}_{(a_{k-3},\ldots,a_1,x),(a_{k-3},\ldots,a_1,x)}. \label{eq:ksequential-recovered-final}
\end{align}

We see that equation~\eqref{eq:ksequential-reconstruction} ensures the COPE tensor is reconstructed. Eq.~\eqref{eq:ksequential-E} combined with Eq.~\eqref{eq:krecoveredE} ensures that $\tilde E$ is column-stochastic. Similarly, equations~\eqref{eq:ksequential-first}, \eqref{eq:ksequential-middle}, and~\eqref{eq:ksequential-final} combined with equations~\eqref{eq:ksequential-recovered-first}, \eqref{eq:ksequential-recovered-middle}, and \eqref{eq:ksequential-recovered-final}, respectively, ensure that every nonzero column of the reconstructed transformation matrices is stochastic. Columns corresponding to zero entries of $v,u^1,\ldots,u^{k-3}$ may be overwritten by copies of nonzero columns. Such overwritten columns correspond to ontic points that are never populated, so this replacement does not change the COPE tensor.

Substituting Eqs.~\eqref{eq:krecoveredE}, \eqref{eq:ksequential-recovered-first}, \eqref{eq:ksequential-recovered-middle}, and~\eqref{eq:ksequential-recovered-final} into the sequential tensor contraction of Eq.~\eqref{eq:kCOPEdecomp} causes the successive rescaling factors, $D^v,D^{u^l}$, and their pseudoinverses to cancel on their supports. In particular, observe that Eq.~\eqref{eq:ksequential-recovered-middle} has a recursive structure: for each $l$ in the allowed range, the rescaling factor is $D^{u^l}D^{u^{l-1}\dagger}$. Then, the resulting contraction is precisely the left-hand side of Eq.~\eqref{eq:ksequential-reconstruction}, and therefore reproduces $C$.

The reconstructed row-flattened transformation factors are obtained by nonnegative linear maps from the extremal factors $\bar G^l$, and $\tilde E$ is obtained by a nonnegative diagonal rescaling of $\bar E$. Hence, the final factors were obtained by nonnegative linear transformations of the extremal factors, thus their ranks cannot increase after such transformations. Therefore, all rank conditions required for an NCOM are satisfied. If necessary, the different intermediate ontic spaces may be padded to a common dimension by appending zero rows and duplicating normalized columns, without changing the COPE tensor or the rank conditions.

It remains to bound the complexity. By Lemmas~\ref{lemma:extremalfactor} and~\ref{lemma:extremalfactor2}, $\bar R$ has at most $n_\e^{O(r)}$ columns, $\bar E$ has at most $n_\pp^{O(r)}$ rows, and $\bar G^l$ has at most $n_{\TT^l}^{O(r)}$ columns for every $l=1,\ldots,k-2$. Therefore, $K$ has at most
\begin{equation}\label{eq:ksize}
n_\e^{O(r)}n_\pp^{O(r)}\prod_{l=1}^{k-2}n_{\TT^l}^{O(r)}=(n_\e n_\pp n_{\TT^1}...n_{\TT^{k-2}})^{O(r)}
\end{equation}
entries. Every marginal variable $u^l$, and every family of constraints in the above linear program, has a size bounded by a polynomial in the quantity in Eq.~\eqref{eq:ksize}. Hence, feasibility can be decided by a linear program in time
\begin{equation}
\poly((n_\e n_\pp n_{\TT^1}...n_{\TT^{k-2}})^{O(r)}).
\end{equation}
\end{proof}

\section{Application to Known Toy Models}\label{sec:application}

\subsection{Spekkens' Toy Theory}

To show that the criteria of Theorem~\ref{thm:PTM-NC-rank-cnstrnts} certify noncontextuality, we apply them to several well-known examples, starting with Spekkens' toy theory.
This is an operational theory that reproduces many qualitative features of quantum mechanics, including entanglement, non-commutativity, and interference, but it lacks contextuality~\cite{SpekkensToy,SpekkensToyReview}.
The PTM scenario in this theory admits an NCOM defined over the ontic space $\Lambda=\{\lambda_1,\lambda_2,\lambda_3,\lambda_4\}$. There are six epistemic states, each with equal support on a pair of ontic states, which we denote
\begin{equation}\label{eq:STT_ES}
\mu^{i\lor j}(\lambda)=\begin{cases}
    1/2, \lambda \in\{\lambda_i,\lambda_j\} \\ 0 \ \ \ \ \text{ otherwise.}
\end{cases}
\end{equation}
The response functions are defined similarly as 
\begin{equation}\label{eq:STT_RF}
\xi^{i\lor j}(\lambda)=\begin{cases}
    1, \lambda \in\{\lambda_i,\lambda_j\} \\ 0 \ \ \ \ \text{ otherwise.}
\end{cases}
\end{equation}
Since $\Lambda$ can be partitioned into two equal subsets in only three distinct ways, the observer has three binary measurements available.

The allowed transformations correspond to relabellings of the ontic points. These are represented by the 24 distinct $4\times 4 $ permutation matrices. As every row and column must sum to 1, there are 6 constraints on each matrix; the final row/column will be determined by the others. Considering the matrices as vectors in $\mathbb{R}^{16}$ placed under 6 constraints, we see that only $10$ are linearly independent. 

The COPE tensor of Spekkens' toy theory can be written out explicitly using its six pure states, twenty four permutations, and three binary measurements (six events).
Using the matrices of epistemic states in Eq.~\eqref{eq:STT_ES}, matrices of response functions in Eq.~\eqref{eq:STT_RF}, and the transformation matrices (permutations), we find
\begin{equation}
\begin{split}
&\rank(C_{[\e]})=\rank(R)= 4, \\
&\rank(C_{[\pp]})=\rank(E)= 4, \\
&\rank(C_{[\TT]})=\rank(G)= 10.
\end{split}
\end{equation}
Therefore, the model indeed provides a noncontextual representation of the statistics. We note that the rank equalities above extend to any number of transformation stages, certifying that the model remains universally noncontextual as more stages are added.

\subsection{8-state Theory for Single Qubit stabilizer}

\label{sec:8state}

We now demonstrate our criteria on a theory which is known to exhibit transformation contextuality: The single-qubit stabilizer subtheory~\cite{Lillystone2019,Schmid_2024}. Note that we will use the operator algebra of quantum theory to describe a GPT model for this theory, as is common practice in the literature.

A GPT model of this operational theory is given by the following operator representations: possible
measurements are the Paulis $\mathcal{M}_{\text{1QS}}=\{X,Y,Z\}$, possible preparations are their $\pm 1$ eigenprojector $\mathcal P_{\text{1QS}}=\{X^\pm,Y^\pm,Z^\pm\}$, and
the transformations are $\mathcal T_{\text{1QS}}=\{\tau_{\mathds{1}}, \tau_Z, \tau_S, \tau_{S^{-1}}\}$, where $S=\sqrt Z$ is the phase gate. 
In particular, $\tau_G$ is the channel $\tau^{(G)}(\cdot):=G(\cdot)G^\dagger$. Indeed, we see that this model is quotiented under operational equivalences, since the representations of the preparations, the preparations composed with the transformations, the transformations composed with the measurements, and the measurements, all span the full Bloch sphere. Thus, no stage of procedures is given an underspecified representation.

For an ontological model of the theory, we consider an
ontic space $\Lambda$ with size $s:=|\Lambda|$. The epistemic states representing the eigenprojector of operator $O$ will be $\mu^{O^\pm}(\lambda)$, and the corresponding response functions representing measurement outcomes will be $\xi^{O^\pm}(\lambda)$.
The transformations $\tau_G$ are represented by stochastic matrices $\Gamma^{(G)}$. 

Any pair of $\pm 1$ eigenprojector of a fixed Pauli operator is \textit{single-shot distinguishable} (SSD), meaning that there exists a measurement which perfectly discriminates them.  As noted in Ref.~\cite{Spekkens_2005}, any two SSD states must be represented by orthogonal epistemic state vectors. To see this, recall that for any complete measurement $\MM_j$ with outcomes labelled by $k$, the response functions satisfy
\begin{equation}
    \sum_k \xi^{jk}(\lambda)=1 \qquad \forall \ \lambda , j.
\end{equation}
For an $O \in \{X,Y,Z\}$ measurement this implies
\begin{equation}
    \xi^{O^+}(\lambda)+\xi^{O^-}(\lambda)=1 \qquad \forall \ \lambda .
\end{equation}
If the epistemic states $\mu^{O^+}$ and $\mu^{O^-}$ shared any support, then the ontic state of the system could lie within the overlap; no measurement could conclude which epistemic state it belonged to with certainty. Hence, their supports must be disjoint.

This observation allows us to partition the ontic space $\Lambda$. Define the ontic subspaces which are the supports of each epistemic state, $\Delta_{\pm \sigma}=\text{supp}(\mu^{\sigma^\pm})$ for $\sigma\in\ \mathcal{M}_{\text{1QS}}$.  As every pair of eigenprojectors of a fixed Pauli observable is SSD, the eight ontic subspaces
\begin{equation}
\Lambda_{x,y,z}
= \Delta_{xX}
\cap
\Delta_{yY}
\cap
\Delta_{zZ},
\label{eq:disjoint}
\end{equation}
where $(x,y,z)\in\{\pm\}^3$, must also be disjoint. Each region corresponds to the shared support of a triplet of eigenprojectors of $X$, $Y$, and $Z$.

Given that there are eight disjoint subspaces, we define a model with $s=8$ so that
\begin{flalign}
\label{eq:ontic_points_8state}
    &\lambda_1 = (+,+,+), \lambda_2 = (-,+,+), \lambda_3 = (+,-,+)\notag \\ &\lambda_4 = (+,+,-),
    \lambda_5 = (+,-,-), \lambda_6 = (-,+,-), \notag \\ &\lambda_7 = (-,-,+), \lambda_8 = (-,-,-).
\end{flalign}
Each ontic state $\lambda$ labels whether or not that ontic point is in the support of the corresponding eigenprojectors. For instance, $\lambda_1$ is the unique shared support of the epistemic states $\mu^{X^+},\mu^{Y^+}$ and $\mu^{Z^+}$. Note that, any larger model will essentially inherit the same structure.

We first define response functions and transformation matrices.
Outcome determinism $\langle\xi^{\sigma^\pm},\mu^{\sigma^\pm}\rangle=1$ fixes the response functions' form as
\begin{flalign}
    \xi^{\sigma^\pm}(\lambda)= 
    \begin{cases}
        1   \ \ \ \lambda\in \text{supp}(
        \mu^{\sigma^\pm})\\ 
        0 \ \ \ \text{otherwise}.
    \end{cases}
    \label{eq:8-state-RF}
\end{flalign}
The key observation is that the above specification of ontic states and response functions
completely fixes the representation of transformations $\tau^{(G)}$ as permutation matrices.
For instance, we know that the channel $\tau^{(X)}$ leaves $X$ eigenprojectors unchanged, but flips the sign of $Y$ and $Z$ eigenprojectors due to anti-commutation. The matrix uniquely representing this action is precisely the permutation $\Gamma^{(X)}:(x,y,z)\mapsto(x,-y,-z)$, and similarly for all other channels~\cite{Lillystone2019}.

Now, the epistemic states are invariant under the action of permutation matrices only if they are uniform over their support, that is,
\begin{equation}
    \mu^{X^\pm}(\lambda)=\begin{cases}
        1/4,  \ \ \  \lambda=(x,y,z), x=\pm \\
        0 \qquad \text{ otherwise.}
    \end{cases}
    \label{eq:8-state-ES}
\end{equation}
All extremal epistemic states $\mu^{X^\pm},\mu^{Y^\pm}$ and $\mu^{Z^\pm}$ are thus specified analogously to Eq.~\eqref{eq:8-state-ES}. 

With the ontological model in place, we consider a $\pp\TT^1\MM$ scenario with preparations, transformations and measurements admitting the GPT descriptions given by the sets $\mathcal P_{\text{1QS}}, \mathcal T_{\text{1QS}}$ and $\mathcal M_{\text{1QS}}$ respectively.

The matrices of epistemic states and response functions are $8\times6$ epistemic and  $6\times8$ respectively.  We also flatten the permutation matrices $\Gamma^{(Z)}$, $\Gamma^{(\mathds{1})}$, $\Gamma^{(S)}$ and $\Gamma^{(S^{-1})}$ into length $64$ vectors, which become the rows of the  matrix $G^1$, as defined in Theorem~\ref{thm:PTM-NC-rank-cnstrnts}. We find that
\begin{equation}
    \begin{split}
        &\rank(C_{[\e]})=\rank(R)= 4, \\
        &\rank(C_{[\pp]})=\rank(E)= 4, \\
        &\rank(C_{[\TT^1]})=3,\\ 
        &\rank(G^1)= 4,
    \end{split}
\end{equation}
which, in view of the criteria set out in Theorem~\ref{thm:PTM-NC-rank-cnstrnts}, imply that the ontological model for the $\pp\TT^1\MM$ scenario in this theory is transformation contextual, in agreement with Ref.~\cite{Schmid_2024}.

Indeed, the set of transformations obeys the operational identity
\begin{equation}
    \frac{1}{2}(\tau_Z(\cdot)+\tau_{\mathds{1}}(\cdot))=\frac{1}{2}(\tau_S(\cdot)+\tau_{S^{-1}}(\cdot)),
    \label{eq:stabilizer_opequiv}
\end{equation}
so that transformation noncontextuality demands
\begin{equation}
\label{eq:ZIPhaseNC}
    \frac{1}{2}(\Gamma^{(Z)}+\Gamma^{(\mathds{1})})=\frac{1}{2}(\Gamma^{(S)}+\Gamma^{(S^{-1})}).
\end{equation}
A closer examination of the permutation matrices,
\begin{equation} \label{eq:8-state-trans}
\begin{split}
  \Gamma^{(\mathds{1})}:(x,y,z)&\mapsto(x,y,z), \\
  \Gamma^{(Z)}:(x,y,z)&\mapsto(-x,-y,z), \\ 
  \Gamma^{(S)}:(x,y,z)&\mapsto(y,-x,z), \\
  \Gamma^{(S^{-1})}:(x,y,z)&\mapsto(-y,x,z),
\end{split}
\end{equation}
reveals the source of transformation contextuality. The identity and Z gates are represented by parity-preserving stochastic matrices, whereas the phase gates $S$ and $S^{-1}$ are represented by parity-reversing ones. The channel defined in Eq.~\eqref{eq:stabilizer_opequiv} therefore has two distinct representations as parity-preserving and parity-reversing matrices. These representations are the LHS and RHS of the noncontextuality requirement in Eq.~\eqref{eq:ZIPhaseNC}, respectively, leading to the contradiction.

\subsection{Necessity of Sequential Scenarios}

\label{subsec:necessity}

The example of 8-state theory in the previous section demonstrated that considering transformations in operational theories can introduce constraints beyond the PM scenario that render an NCOM impossible. Given that the set of transformations in the operational theory is assumed to be atomic, we ask whether the contextuality or noncontextuality of the theory can always be determined by analyzing the $\pp\TT^1\MM$ scenario.
We answer this question in the negative by providing a counterexample: an operational theory whose $\pp\TT^1\TT^2\MM$ scenario is contextual, but whose $\pp\TT^1\MM$ scenario is noncontextual. In fact, this conclusion holds true regardless of how one chooses to coarse-grain the $\pp\TT^1\TT^2\MM$ scenario to arrive at the $\pp\TT^1\MM$ scenario.  We thereby prove that contextuality is inseparable from the causal structure of the scenario.

This example is an extension of the $\mathsf{PT}^1\mathsf{M}$ scenario given by Example~\ref{ex:Spekkens_COPE}, which we may refer to as the $\mathsf{S}_1$ scenario. 
The mode-$\TT^1$ slices of the COPE tensor are given by,
\begin{align}
    C^{2D}_{:1:}=\frac{1}{8}
\begin{pmatrix}
1 & 3 & 3 & 1\\
1 & 1 & 3 & 3\\
3 & 1 & 1 & 3\\
3 & 3 & 1 & 1
\end{pmatrix} \label{eq:C_1},\\
C^{2D}_{:2:}=\frac{1}{8}
\begin{pmatrix}
3 & 3 & 1 & 1\\
1 & 3 & 3 & 1\\
1 & 1 & 3 & 3\\
3 & 1 & 1 & 3
\end{pmatrix},\\
C^{2D}_{:3:}=\frac{1}{8}
\begin{pmatrix}
3 & 1 & 1 & 3\\
3 & 3 & 1 & 1\\
1 & 3 & 3 & 1\\
1 & 1 & 3 & 3
\end{pmatrix},\\
C^{2D}_{:4:}=\frac{1}{8}
\begin{pmatrix}
1 & 1 & 3 & 3\\
3 & 1 & 1 & 3\\
3 & 3 & 1 & 1\\
1 & 3 & 3 & 1
\end{pmatrix}.
\end{align}
We first prove that the only possible NCOM, up to a relabelling of ontic points, is given by
\begin{align}
    &R=E=\frac{1}{2}\begin{pmatrix}
        1&1&0&0\\
        0&1&1&0\\
        0&0&1&1\\
        1&0&0&1
    \end{pmatrix} \label{eq:RandE},
\end{align}
and
\begin{align}
    &\Gamma^{11}=\frac{1}{2}\begin{pmatrix} \label{eq:gamma1}
        1&1&0&0\\
        0&1&1&0\\
        0&0&1&1\\
        1&0&0&1
    \end{pmatrix},\\
    &\Gamma^{12}=\frac{1}{2}\begin{pmatrix}
        1&0&0&1\\
        1&1&0&0\\
        0&1&1&0\\
        0&0&1&1
    \end{pmatrix},\\
    &\Gamma^{13}=\frac{1}{2}\begin{pmatrix}
        0&0&1&1\\
        1&0&0&1\\
        1&1&0&0\\
        0&1&1&0
    \end{pmatrix},\\
    &\Gamma^{14}=\frac{1}{2}\begin{pmatrix}
        0&1&1&0\\
        0&0&1&1\\
        1&0&0&1\\
        1&1&0&0
    \end{pmatrix}.
\end{align}
A rank factorization of $C^{2D}_{:1:}$ is given by,
\begin{equation}
\begin{split}
    C^{2D}_{:1:}&=\begin{pmatrix}
        1&0&1\\
        0&-1&1\\
        -1&0&1\\
        0&1&1
    \end{pmatrix}
    \left(\frac{1}{8}\right)\begin{pmatrix}
1&1&-1&-1\\
1&-1&-1&1\\
2&2&2&2
\end{pmatrix}\\
&=AB.
\end{split}
\end{equation}
Defining the polytope $\mathcal{A}$ from the rows of $A$ as $\mathcal{A}=\{x|Ax\geq0, \sum_i (Ax)_i=1\}$, as in Lemma~\ref{lemma:nestedpolytopes}, we find that the columns of $R$ are given by the images of the vertices of $\mathcal{A}$ under $A$. Using Lemma~\ref{lemma:extremalfactor}, this implies that, if a noncontextual model exists, it can be built on $R$ in Eq.~\eqref{eq:RandE}.
Similarly, Lemma~\ref{lemma:extremalfactor2} guarantees that a noncontextual model exists built on $E$ in Eq.~\eqref{eq:RandE}.
With $R$ and $E$ fixed, it is clear from the sparsity patterns of $C$, $R$, and $E$ that noncontextuality fixes the $\Gamma^{1i}$ matrices to the form given above.

Now we consider the $\mathsf{PT}^1\mathsf{T}^2\mathsf{M}$ scenario, referred to as $\mathsf{S}_2$ in this section, in which selecting the $i$th and $j$th transformations from the first and second stage, respectively, gives rise to the transformation slices given by,
\begin{align}
    C^{S_2}_{:ij:}=\frac{1}{8}
\begin{pmatrix}
1 & 3 & 3 & 1\\
1 & 1 & 3 & 3\\
3 & 1 & 1 & 3\\
3 & 3 & 1 & 1
\end{pmatrix},\: i+j\equiv1\pmod{4}, \label{eq:C'_1}\\
C^{S_2}_{:ij:}=\frac{1}{8}
\begin{pmatrix}
3 & 3 & 1 & 1\\
1 & 3 & 3 & 1\\
1 & 1 & 3 & 3\\
3 & 1 & 1 & 3
\end{pmatrix},\: i+j\equiv2\pmod{4},\\
C^{S_2}_{:ij:}=\frac{1}{8}
\begin{pmatrix}
3 & 1 & 1 & 3\\
3 & 3 & 1 & 1\\
1 & 3 & 3 & 1\\
1 & 1 & 3 & 3
\end{pmatrix},\: i+j\equiv3\pmod{4},\\
C^{S_2}_{:ij:}=\frac{1}{8}
\begin{pmatrix}
1 & 1 & 3 & 3\\
3 & 1 & 1 & 3\\
3 & 3 & 1 & 1\\
1 & 3 & 3 & 1
\end{pmatrix},\: i+j=0\pmod{4}.
\end{align}
We see that upon combining the two transformation stages, the statistics coincide with those of scenario $\mathsf{S}_1$, except that each cyclic shift of the transformations is repeated four times.
Therefore, we can use the same $R$ and $E$ as in Eq.~\eqref{eq:RandE}
and this is still the most general choice.

Furthermore, upon composing $\mathsf{T}^2$  with measurements, the cyclic shifts of measurement outcomes are repeated four times each.
The $\Gamma^{1i}$ matrices in Eq.~\eqref{eq:gamma1} are therefore the only noncontextual representation of $\mathsf{T}^1$ for $\mathsf{S}_2$.

Finally, applying this argument to the composition of preparations and $\mathsf{T}^1$ transformations,
we also see that the $\Gamma^{1i}$ matrices of Eq.~\eqref{eq:gamma1} are the only noncontextual representation of $\mathsf{T}^2$.
However, if both $\mathsf{T}^1$ and $\mathsf{T}^2$ are represented with the $\Gamma^{1i}$ matrices, they no longer reproduce the statistics.
Note that here, we did not assume a priori that the symmetry of scenario $\mathsf{S}_2$ requires both transformation stages to have identical representation.

\subsection{Restrictions on Procedures}

\label{sec:temporal}

The previous examples highlighted the interplay between the causal structure of an operational theory and the presence of contextuality. Here, we show that modifying the set of atomic transformation procedures at each stage also plays a nontrivial role. To do so, we consider two operational theories similar to the 8-state theory defined in Sec.~\ref{sec:8state}.

The first new operational theory admits a GPT model which uses the same quantum operator representations used by the 8-state model for measurements and preparations, namely the Pauli measurements $\mathcal{M}_{\text{1QS}}=\{X,Y,Z\}$ and their $\pm1$ eigenprojectors $\mathcal{P}_{\text{1QS}}=\{X^\pm,Y^\pm,Z^\pm\}$. The set of atomic transformations in the theory is $\mathcal{T}=\{\TT_{\mathds{1}}, \TT_{X}, \TT_{Y}, \TT_{Z}, \TT_{H}\}$. These admit a GPT model given by the quantum channels $\{\tau_{\mathds{1}},\tau_X,\tau_Y,\tau_Z,\tau_H\}$, where $\tau_G=G(\cdot)G^\dagger$.

We consider the $\pp\TT^1\TT^2\MM$ scenario in which the two transformation stages are given by $\mathcal{T}_1=\mathcal{T}_2=\mathcal{T}$. An ontological model for this scenario can be constructed by extending the model described in Sec.~\ref{sec:8state}, with 8 ontic states labelled according to the convention in Eq.~\eqref{eq:ontic_points_8state}. In particular, outcome determinism allows us to partition the ontic space into the same 8 subspaces defined in Eq.~\eqref{eq:disjoint}, and forces the response functions to have the form in Eq.~\eqref{eq:8-state-RF}. By the same argument given in Sec.~\ref{sec:8state}, the transformation matrices representing $\mathcal{T}$ must be permutation matrices, and subsequently the epistemic states must be given by Eq.~\eqref{eq:8-state-ES}. 

The matrices representing the identity and $Z$ gates are characterized in the first two lines of Eq.~\eqref{eq:8-state-trans}, and the remaining matrices are such that
\begin{equation}
\label{eq:XYH_8state}
    \begin{split}
        \Gamma^{(X)}&:(x,y,z)\mapsto(x,-y,-z) , \\
        \Gamma^{(Y)}&:(x,y,z)\mapsto(-x,y,-z) ,\\
        \Gamma^{(H)}&:(x,y,z)\mapsto(z,-y,x) .
    \end{split}
\end{equation}

By forming the COPE 4-tensor of the $\pp\TT^1\TT^2\MM$ scenario in this operational theory, we find:
\begin{equation}
\begin{split}
    &\rank(C_{[\e]})=\rank(R)= 4, \\
    &\rank(C_{[\pp]})=\rank(E)= 4, \\
    &\rank(C_{[\TT^1]})=\rank(G^1)=5, \\ &\rank(C_{[\TT^2]})=\rank(G^2)= 5.
    \label{eq:ranks_scenario2}
\end{split}
\end{equation}
In other words, our rank criteria certify this model as an NCOM of the statistics. 
Noting that the theory includes five transformations, the last two equalities certify that the model is transformation noncontextual because the flattenings $C_{[\TT^1]}$ and $C_{[\TT^2]}$, and the corresponding matrix factors $G^1$ and $G^2$, are all full rank. This means that there are no nontrivial linear dependencies among the transformations, either at the level of their statistics or at the level of their representations.

Recall that the noncontextuality conditions in Eqs.~\eqref{eq:2stageNCOMconditions} only capture operational equivalences between atomic procedures, or convex combinations thereof, but they do not capture operational equivalences between composite procedures. The operational theory defined above has no nontrivial linear dependencies between atomic procedures, but there are nontrivial dependencies between composite procedures.

In particular, consider the completely depolarizing channel given by $\tau_D(\rho)=\mathds{1}/2 $ for all $\rho$, which can be written as the convex combination
\begin{equation}
    \begin{split}
        \tau_D(\cdot) &= \frac{1}{4}\left[\mathds{1}(\cdot)\mathds{1}+X(\cdot)X^\dagger+Y(\cdot)Y^\dagger+Z(\cdot)Z^\dagger\right], \\
        & = \frac{1}{4}\left[\tau_\mathds{1}(\cdot)+\tau_X(\cdot)+\tau_Y(\cdot)+\tau_Z(\cdot)\right].
    \end{split}
\end{equation}
The ontological representation must be the corresponding convex mixture of Paulis, that is,
\begin{equation}
\Gamma^{(D)}=\frac{1}{4}(\Gamma^{(\mathds{1})}+\Gamma^{(X)}+\Gamma^{(Y)}+\Gamma^{(Z)}).
\end{equation}
Since ontological representations of Paulis preserve parity, so does $\Gamma^{(D)}$.
Alternatively, we also have,
\begin{equation}
\label{eq:depolaris_opidentity}
    \tau_{D'}(\cdot):= \tau_H\circ\tau_D(\cdot) = \tau_D(\cdot)
\end{equation}
which implies
\begin{equation}
    \TT_{D'}(\cdot)\simeq\TT_{D}(\cdot).
    \label{eq:2stagedepolaris}
\end{equation}
However, the corresponding ontological representation of $\TT_{D'}(\cdot)$, $\Gamma^{(D')}$, is parity-reversing because the representation of the Hadamard given in Eq.~\eqref{eq:XYH_8state} is parity-reversing. As a result, $\Gamma^{(D')}\neq  \Gamma^{(D)}$~\cite{Lillystone2019}. 

Guided by this observation, we consider the $\pp\TT^1\TT^2\MM$ scenario in an adjacent operational theory, in which the transformations in each stage are denoted $\mathcal{T}_1'=\mathcal{T}_2'=\set{T_\mathds{1},T_X,T_Y,T_Z,T_H,T_{HX},T_{HY},T_{HZ}}$. The GPT model provided by quantum operators still applies, and the new transformations are simply represented as channel compositions. For instance, $\TT_{HX}$ is represented in the GPT model by the channel $\tau_{HX}(\cdot):=\tau_H(\tau_X(\cdot))$. This theory therefore contains the same procedures as the previous theory, but also includes atomic transformation procedures which the previous theory considered composite.  

Following the same argument we used to construct the ontological model of the previous theory, the transformation matrices are fixed as permutation matrices. The new transformation matrices can be expressed in terms of those in Eqs.~\eqref{eq:8-state-trans} and~\eqref{eq:XYH_8state}, as $\Gamma^{(HX)}=\Gamma^{(H)}\Gamma^{(X)}$, $\Gamma^{(HY)}=\Gamma^{(H)}\Gamma^{(Y)}$ and $\Gamma^{(HZ)}=\Gamma^{(H)}\Gamma^{(Z)}$. Placing the full set of row-flattened transformation matrices into matrices $G'^{1}$ and $G'^{2}$, we find that
\begin{equation}
\label{eq:lasttheory_ranksep}
    \begin{split}
        &\rank(C_{[\e]})=\rank(R)= 4, \\
        &\rank(C_{[\pp]})=\rank(E)= 4, \\
        &\rank(C_{[\TT'^1]})=\rank(C_{[\TT'^2]})=6, \\
        &\rank(G'^1)=\rank(G'^2) =8.
    \end{split}
\end{equation}
The last two equalities imply that, in each stage of transformations, there are operational equivalences which are not respected at the ontological level. One such equivalence is given in Eq.~\eqref{eq:2stagedepolaris}, which is implied by the operational identity in Eq.~\eqref{eq:depolaris_opidentity}. We identify a second identity,
\begin{equation*}
    \frac{1}{2}(HX\rho X^\dagger H^\dagger+HZ\rho Z^\dagger H^\dagger ) = \frac{1}{2}(\mathds{1}\rho \mathds{1} +Y\rho Y^\dagger),
\end{equation*}
    which implies the equivalence
\begin{equation}
    \frac{1}{2}[\TT_{HX}+\TT_{HZ} ]\simeq\frac{1}{2}[\TT_{\mathds{1}}+\TT_Y].
    \label{eq:HXHZisIY}
\end{equation} 
This equivalence is also not respected by the ontological model, as $\Gamma^{(HX)}$ and $\Gamma^{(HZ)}$ are parity-reversing, where $\Gamma^{(I)}$ and $\Gamma^{(Y)}$ are parity-preserving.

The previous example demonstrates that the contextuality/noncontextuality of a given model depends nontrivially on which procedures are assumed to be atomic.
In the first operational theory where $\T_1=\T_2=\{T_{\mathds{1}},T_X,T_Y,T_Z,T_H\}$, transformations such as $T_{HX}$ can only be implemented over two stages, so that they are secondary propositional schema. Hence, the operational equivalences~\eqref{eq:2stagedepolaris} and \eqref{eq:HXHZisIY} are not expressible in terms of atomic procedures alone, and so there is no requirement that an NCOM should respect them. Conversely, in the second theory, the transformations $T_{HX},T_{HY}$ and $T_{HZ}$ are considered atomic and are available in each stage by definition. This introduces nontrivial convex dependencies between the statistics of transformations in each stage, and the failure of the model to represent these equivalences noncontextually manifests in the last two lines of Eq.~\eqref{eq:lasttheory_ranksep}.

\section{Conclusion}

\label{sec:conclusion}

In this work, we have extended the COPE-matrix formalism for PM scenarios of operational theories introduced in Ref.~\cite{shahandeh2025unifiedlinearalgebraicframework}
to a COPE tensor formalism applicable to scenarios involving transformations. The COPE tensor contains the statistics associated with all possible sequences of procedures in a scenario specified in a given operational theory.
The goal was to devise mathematical models for the statistical structure that comply with the causal structure, namely, the scenario, and possibly satisfy further constraints such as unique representations and nonnegativity.
We then discussed three such models, namely, preGPT, GPT, and ontological models.
We discussed how each model arises as a particular factorization of the COPE tensor.
In particular, preGPTs are factorizations that do not follow any restrictions besides the temporal structure of the scenario. We then showed that a GPT is a preGPT which is required to assign a unique representation to each equivalence class of statistically indistinguishable operational procedure. 
We formulated this requirement as a set of rank constraints on the COPE tensor decomposition.
Then, by establishing the equivalence of ontological models with nonnegative preGPTs, we showed that NCOMs correspond to nonnegative factorizations satisfying a set of rank equalities similar to GPTs. Conversely, the impossibility of such a factorization certifies the contextuality of the operational theory given the required causal structure.

In this work, we applied the above modelling to temporal PTM scenarios. For such scenarios, we explicitly constructed the smallest possible GPT.
Here, we considered an arbitrary number of transformations and derived necessary and sufficient equirank criteria for their (non)contextuality within a given operational theory.

Taking a geometric perspective on rank-restricted factorizations led us to an algorithm for deciding whether a noncontextual model exists in single-stage transformation scenarios, and for constructing one when it does. 
The algorithm is polynomial time in the number of procedures, but its complexity is linearly exponential in the minimum GPT dimension required to model the COPE tensor.
A suitable iteration of this algorithm was shown to apply to scenarios with sequential transformations, and the accompanying complexity analysis suggests that the method is almost optimal. Specifically, the complexity of this algorithm for arbitrarily many stages of transformations still has complexity linearly exponential in the GPT dimension, and polynomial in the number of procedures. Therefore, this complexity scaling is similar to the best known complexity scaling for PM scenarios~\cite{Selby_2024,yianni2025complexitycontextuality}.

We demonstrated our equirank criteria on several known toy theories. 
We employed a 2D toy theory as a running example to showcase our approach and model concepts.
In particular, the 2D toy theory has the property that multi-stage scenarios exhibit contextuality, whereas single-stage scenarios are noncontextual.
It thus demonstrates that sequential structure is a genuine source of contextual behavior.
We showed the noncontextuality of Spekkens' toy theory with an arbitrary number of transformation stages and analyzed the transformation contextuality of the 8-state single-qubit stabilizer subtheory.
Then, we examined the effect of restricting the allowed procedures. By modifying the 8-state single-qubit stabilizer subtheory, we devised two operational theories whose only point of difference was that the composite transformations in the first theory were designated as atomic in the second theory: the former scenario admitted an NCOM, but the same model, applied to the latter scenario, failed to provide a noncontextual representation.

This work lays down the groundwork for characterizing the role of multi-stage contextuality as a resource for information processing, since many models of computation are intrinsically sequential. To this end, it is reasonable to study measures of contextuality, their computational complexity, and behaviour in various scenarios using the COPE framework.
Of greater practical significance is the development of robust versions of our criteria that tolerate approximate operational equivalences, allowing one to certify (non)contextuality under experimentally realistic noise and finite statistics.
Finally, an interesting direction would be to characterize the contextuality of multi-stage scenarios with intermediate measurements. This would lead to broader contextuality conditions, which are relevant to models of computation and multi-stage communication protocols.

\acknowledgements
TY acknowledges the support through the Quantum Computing Studentship funded by Royal Holloway, University of London.
FS gratefully acknowledges the financial support from the Engineering and Physical Sciences Research Council (EPSRC) through the Hub in Quantum Computing and Simulation grant (EP/T001062/1). NR is grateful to be supported by the EPSRC Quantum Technologies Doctoral Training Partnership [grant number EP/W524311/1].
\bibliography{reffinal.bib}

@misc{spekkens2019ontologicalidentityempiricalindiscernibles,
      title={The ontological identity of empirical indiscernibles: Leibniz's methodological principle and its significance in the work of Einstein}, 
      author={Robert W. Spekkens},
      year={2019},
      eprint={1909.04628},
      archivePrefix={arXiv},
      primaryClass={physics.hist-ph},
      url={https://arxiv.org/abs/1909.04628}, 
}

@Article{Khachiyan2008,
author={Khachiyan, Leonid
and Boros, Endre
and Borys, Konrad
and Elbassioni, Khaled
and Gurvich, Vladimir},
title={Generating All Vertices of a Polyhedron Is Hard},
journal={Discrete {\&} Computational Geometry},
year={2008},
month={Mar},
day={01},
volume={39},
number={1},
pages={174-190},
issn={1432-0444},
url={https://doi.org/10.1007/s00454-008-9050-5}
}

@article{Spekkens_2005,
   title={Contextuality for preparations, transformations, and unsharp measurements},
   volume={71},
   ISSN={1094-1622},
   url={http://dx.doi.org/10.1103/PhysRevA.71.052108},
   number={5},
   journal={Physical Review A},
   publisher={American Physical Society (APS)},
   author={Spekkens, R. W.},
   year={2005},
   month=may }

@misc{ourpaper,
      title={Characterizing Contextuality via Rank Separation with Applications to Cloning}, 
      author={Farid Shahandeh and Theodoros Yianni and Mina Doosti},
      year={2024},
      eprint={2406.19382},
      archivePrefix={arXiv},
      primaryClass={quant-ph},
      url={https://arxiv.org/abs/2406.19382}, 
}

@article{Selby_2024,
   title={Linear Program for Testing Nonclassicality and an Open-Source Implementation},
   volume={132},
   ISSN={1079-7114},
   url={http://dx.doi.org/10.1103/PhysRevLett.132.050202},
   number={5},
   journal={Physical Review Letters},
   publisher={American Physical Society (APS)},
   author={Selby, John H. and Wolfe, Elie and Schmid, David and Sainz, Ana Belén and Rossi, Vinicius P.},
   year={2024},
   month=jan }

@ARTICLE{Koc,
    author = "Simon Kochen, E. Specker",
     title = "The Problem of Hidden Variables in Quantum Mechanics",
   journal = "Indiana Univ. Math. J.",
  fjournal = "Indiana University Mathematics Journal",
    volume = 17,
      year = 1968,
     issue = 1,
     pages = "59--87",
      issn = "0022-2518",
     coden = "IUMJAB",
   mrclass = "",
}

@article{computation,
  title = {Contextuality as a Resource for Models of Quantum Computation with Qubits},
  author = {Bermejo-Vega, Juan and Delfosse, Nicolas and Browne, Dan E. and Okay, Cihan and Raussendorf, Robert},
  journal = {Phys. Rev. Lett.},
  volume = {119},
  issue = {12},
  pages = {120505},
  numpages = {5},
  year = {2017},
  month = {Sep},
  publisher = {American Physical Society},
  url = {https://link.aps.org/doi/10.1103/PhysRevLett.119.120505}
}

@misc{shahandeh2021advantage,
      title={Quantum computational advantage implies contextuality}, 
      author={Farid Shahandeh},
      year={2021},
      eprint={2112.00024},
      archivePrefix={arXiv},
      primaryClass={quant-ph},
      url={https://arxiv.org/abs/2112.00024}, 
}

@article{Schmid2022Stabilizer,
  title = {Uniqueness of Noncontextual Models for Stabilizer Subtheories},
  author = {Schmid, David and Du, Haoxing and Selby, John H. and Pusey, Matthew F.},
  journal = {Phys. Rev. Lett.},
  volume = {129},
  issue = {12},
  pages = {120403},
  numpages = {6},
  year = {2022},
  month = {Sep},
  publisher = {American Physical Society},
  url = {https://link.aps.org/doi/10.1103/PhysRevLett.129.120403}
}

@article{Gitton_2022,
   title={Solvable Criterion for the Contextuality of any Prepare-and-Measure Scenario},
   volume={6},
   ISSN={2521-327X},
   url={http://dx.doi.org/10.22331/q-2022-06-07-732},
   journal={Quantum},
   publisher={Verein zur Forderung des Open Access Publizierens in den Quantenwissenschaften},
   author={Gitton, Victor and Woods, Mischa P.},
   year={2022},
   month=jun, pages={732} }

@misc{Schmid_2024,
      title={Noncontextuality inequalities for prepare-transform-measure scenarios}, 
      author={David Schmid and Roberto D. Baldijão and John H. Selby and Ana Belén Sainz and Robert W. Spekkens},
      year={2024},
      eprint={2407.09624},
      archivePrefix={arXiv},
      primaryClass={quant-ph},
      url={https://arxiv.org/abs/2407.09624}, 
}

@article{NCOM-OPTs,
   author = {Sina Soltani and Marco Erba and David Schmid and John H. Selby},
   month = {2},
   title = {Noncontextual ontological models of operational probabilistic theories},
   url = {https://arxiv.org/pdf/2502.11842},
   year = {2025}
}

@article{Schmid2024,
   author = {David Schmid and John H. Selby and Matthew F. Pusey and Robert W. Spekkens},
   doi = {10.22331/q-2024-03-14-1283},
   issn = {2521327X},
   journal = {Quantum},
   month = {3},
   pages = {1283},
   publisher = {Verein zur Förderung des Open Access Publizierens in den Quantenwissenschaften},
   title = {A structure theorem for generalized-noncontextual ontological models},
   volume = {8},
   url = {https://quantum-journal.org/papers/q-2024-03-14-1283/},
   year = {2024}
}

@article{GILLIS20122685,
title = {On the geometric interpretation of the nonnegative rank},
journal = {Linear Algebra and its Applications},
volume = {437},
number = {11},
pages = {2685-2712},
year = {2012},
issn = {0024-3795},
doi = {https://doi.org/10.1016/j.laa.2012.06.038},
url = {https://www.sciencedirect.com/science/article/pii/S0024379512005022},
author = {Nicolas Gillis and François Glineur},
}

@article{Lillystone2019,
   author = {Piers Lillystone and Joel J. Wallman and Joseph Emerson},
   doi = {10.1103/PhysRevLett.122.140405},
   issn = {10797114},
   issue = {14},
   journal = {Physical Review Letters},
   month = {4},
   pages = {140405},
   pmid = {31050485},
   publisher = {American Physical Society},
   title = {Contextuality and the Single-Qubit Stabilizer Subtheory},
   volume = {122},
   url = {https://journals.aps.org/prl/abstract/10.1103/PhysRevLett.122.140405},
   year = {2019}
}

@article{SpekkensToy,
   author = {Robert W. Spekkens},
   doi = {10.1103/PhysRevA.75.032110},
   issue = {3},
   journal = {Physical Review A - Atomic, Molecular, and Optical Physics},
   month = {10},
   title = {In defense of the epistemic view of quantum states: a toy theory},
   volume = {75},
   url = {http://arxiv.org/abs/quant-ph/0401052 http://dx.doi.org/10.1103/PhysRevA.75.032110},
   year = {2005}
}

@article{SpekkensToyReview,
   author = {Ladina Hausmann and Nuriya Nurgalieva and Lídia del Rio},
   month = {5},
   title = {A consolidating review of Spekkens' toy theory},
   url = {https://arxiv.org/pdf/2105.03277},
   year = {2021}
}

@article{Bell1964,
  title = {On the Einstein Podolsky Rosen paradox},
  author = {Bell, J. S.},
  journal = {Physics Physique Fizika},
  volume = {1},
  issue = {3},
  pages = {195--200},
  numpages = {6},
  year = {1964},
  month = {Nov},
  publisher = {American Physical Society},
  doi = {10.1103/PhysicsPhysiqueFizika.1.195},
  url = {https://link.aps.org/doi/10.1103/PhysicsPhysiqueFizika.1.195}
}

@misc{shahandeh2025unifiedlinearalgebraicframework,
      title={A Unified Linear Algebraic Framework for Physical Models and Generalized Contextuality}, 
      author={Farid Shahandeh and Theodoros Yianni and Mina Doosti},
      year={2025},
      eprint={2512.10000},
      archivePrefix={arXiv},
      primaryClass={quant-ph},
      url={https://arxiv.org/abs/2512.10000}, 
}

@misc{yianni2025complexitycontextuality,
      title={Complexity of Contextuality}, 
      author={Theodoros Yianni and Farid Shahandeh},
      year={2025},
      eprint={2506.09133},
      archivePrefix={arXiv},
      primaryClass={quant-ph},
      url={https://arxiv.org/abs/2506.09133}, 
}

@article{Howard_2014,
   title={Contextuality supplies the ‘magic’ for quantum computation},
   volume={510},
   ISSN={1476-4687},
   url={http://dx.doi.org/10.1038/nature13460},
   DOI={10.1038/nature13460},
   number={7505},
   journal={Nature},
   publisher={Springer Science and Business Media LLC},
   author={Howard, Mark and Wallman, Joel and Veitch, Victor and Emerson, Joseph},
   year={2014},
   month=jun, pages={351–355} }

@article{PhysRevLett.121.230401,
  title = {Quantum Advantage from Sequential-Transformation Contextuality},
  author = {Mansfield, Shane and Kashefi, Elham},
  journal = {Phys. Rev. Lett.},
  volume = {121},
  issue = {23},
  pages = {230401},
  numpages = {6},
  year = {2018},
  month = {Dec},
  publisher = {American Physical Society},
  doi = {10.1103/PhysRevLett.121.230401},
  url = {https://link.aps.org/doi/10.1103/PhysRevLett.121.230401}
}

@article{Kunjwal2015,
  title = {From the Kochen-Specker Theorem to Noncontextuality Inequalities without Assuming Determinism},
  author = {Kunjwal, Ravi and Spekkens, Robert W.},
  journal = {Phys. Rev. Lett.},
  volume = {115},
  issue = {11},
  pages = {110403},
  numpages = {5},
  year = {2015},
  month = {Sep},
  publisher = {American Physical Society},
  doi = {10.1103/PhysRevLett.115.110403},
  url = {https://link.aps.org/doi/10.1103/PhysRevLett.115.110403}
}

@article{Abramsky_2011,
doi = {10.1088/1367-2630/13/11/113036},
url = {https://doi.org/10.1088/1367-2630/13/11/113036},
year = {2011},
month = {nov},
publisher = {IOP Publishing},
volume = {13},
number = {11},
pages = {113036},
author = {Abramsky, Samson and Brandenburger, Adam},
title = {The sheaf-theoretic structure of non-locality and contextuality},
journal = {New Journal of Physics}
}

@article{DzhafarovKujala2016,
  title   = {Context--content systems of random variables: The Contextuality-by-Default theory},
  author  = {Dzhafarov, Ehtibar N. and Kujala, Janne V.},
  journal = {Journal of Mathematical Psychology},
  volume  = {74},
  pages   = {11--33},
  year    = {2016},
  month   = oct,
  doi     = {10.1016/j.jmp.2016.04.010},
  url     = {https://doi.org/10.1016/j.jmp.2016.04.010}
}

@article{Budroni2022,
  title = {Kochen-Specker contextuality},
  author = {Budroni, Costantino and Cabello, Ad\'an and G\"uhne, Otfried and Kleinmann, Matthias and Larsson, Jan-\AA{}ke},
  journal = {Rev. Mod. Phys.},
  volume = {94},
  issue = {4},
  pages = {045007},
  numpages = {62},
  year = {2022},
  month = {Dec},
  publisher = {American Physical Society},
  doi = {10.1103/RevModPhys.94.045007},
  url = {https://link.aps.org/doi/10.1103/RevModPhys.94.045007}
}

@article{Bell1966,
  title   = {On the Problem of Hidden Variables in Quantum Mechanics},
  author  = {Bell, John S.},
  journal = {Reviews of Modern Physics},
  volume  = {38},
  number  = {3},
  pages   = {447--452},
  year    = {1966},
  month   = jul,
  doi     = {10.1103/RevModPhys.38.447},
  url     = {https://doi.org/10.1103/RevModPhys.38.447}
}

@article{CabelloSeveriniWinter2014,
  title = {Graph-Theoretic Approach to Quantum Correlations},
  author = {Cabello, Ad\'an and Severini, Simone and Winter, Andreas},
  journal = {Phys. Rev. Lett.},
  volume = {112},
  issue = {4},
  pages = {040401},
  numpages = {5},
  year = {2014},
  month = {Jan},
  publisher = {American Physical Society},
  doi = {10.1103/PhysRevLett.112.040401},
  url = {https://link.aps.org/doi/10.1103/PhysRevLett.112.040401}
}

@article{Cabello2008,
  title = {Experimentally Testable State-Independent Quantum Contextuality},
  author = {Cabello, Ad\'an},
  journal = {Phys. Rev. Lett.},
  volume = {101},
  issue = {21},
  pages = {210401},
  numpages = {4},
  year = {2008},
  month = {Nov},
  publisher = {American Physical Society},
  doi = {10.1103/PhysRevLett.101.210401},
  url = {https://link.aps.org/doi/10.1103/PhysRevLett.101.210401}
}

@article{Saha_2019,
  title = {Preparation contextuality as an essential feature underlying quantum communication advantage},
  author = {Saha, Debashis and Chaturvedi, Anubhav},
  journal = {Phys. Rev. A},
  volume = {100},
  issue = {2},
  pages = {022108},
  numpages = {13},
  year = {2019},
  month = {Aug},
  publisher = {American Physical Society},
  doi = {10.1103/PhysRevA.100.022108},
  url = {https://link.aps.org/doi/10.1103/PhysRevA.100.022108}
}

@article{Hameedi2017,
   author = {Alley Hameedi and Armin Tavakoli and Breno Marques and Mohamed Bourennane},
   doi = {10.1103/PhysRevLett.119.220402},
   issue = {22},
   journal = {Physical Review Letters},
   month = {11},
   publisher = {American Physical Society},
   title = {Communication games reveal preparation contextuality},
   volume = {119},
   url = {http://arxiv.org/abs/1704.08223 http://dx.doi.org/10.1103/PhysRevLett.119.220402},
   year = {2017}
}

@article{Roy2024,
   author = {Prabuddha Roy and A. K. Pan},
   doi = {10.1088/1751-8121/ad7108},
   issn = {17518121},
   issue = {37},
   journal = {Journal of Physics A: Mathematical and Theoretical},
   month = {8},
   pages = {375303},
   publisher = {IOP Publishing},
   title = {Generalized parity-oblivious communication games powered by quantum preparation contextuality},
   volume = {57},
   url = {https://iopscience.iop.org/article/10.1088/1751-8121/ad7108 https://iopscience.iop.org/article/10.1088/1751-8121/ad7108/meta},
   year = {2024}
}

@article{Ambainis2016,
   author = {Andris Ambainis and Manik Banik and Anubhav Chaturvedi and Dmitry Kravchenko and Ashutosh Rai},
   month = {7},
   title = {Parity Oblivious d-Level Random Access Codes and Class of Noncontextuality Inequalities},
   url = {https://arxiv.org/pdf/1607.05490},
   year = {2016}
}

@article{HOSVD1,
   author = {M. Alex O. Vasilescu and Demetri Terzopoulos},
   doi = {10.1007/3-540-47969-4_30/SAVE-RESEARCH},
   isbn = {3540437452},
   issn = {16113349},
   journal = {Lecture Notes in Computer Science (including subseries Lecture Notes in Artificial Intelligence and Lecture Notes in Bioinformatics)},
   pages = {447-460},
   publisher = {Springer Verlag},
   title = {Multilinear analysis of image ensembles: Tensorfaces},
   volume = {2350},
   url = {https://link.springer.com/chapter/10.1007/3-540-47969-4_30},
   year = {2002}
}

@article{HOSVD2,
   author = {M. Alex O. Vasilescu and Demetri Terzopoulos},
   doi = {10.1109/CVPR.2005.240},
   isbn = {0769523722},
   journal = {Proceedings - 2005 IEEE Computer Society Conference on Computer Vision and Pattern Recognition, CVPR 2005},
   pages = {547-553},
   publisher = {IEEE Computer Society},
   title = {Multilinear independent components analysis},
   volume = {I},
   url = {https://ieeexplore.ieee.org/document/1467315},
   year = {2005}
}

@article{tensor-train,
   author = {I. V. Oseledets},
   doi = {10.1137/090752286},
   issn = {10648275},
   issue = {5},
   journal = {https://doi.org/10.1137/090752286},
   month = {9},
   pages = {2295-2317},
   publisher = {Society for Industrial and Applied Mathematics},
   title = {Tensor-Train Decomposition},
   volume = {33},
   url = {/doi/pdf/10.1137/090752286?download=true},
   year = {2011}
}
\bibliographystyle{apsrev4-1}

\clearpage
\onecolumngrid
\appendix
\section{Flattenings and Factorizations}
\setlength\parindent{0pt}
\label{sec:appA}
In this Appendix we list some examples of COPE tensor flattenings as well as their factorizations, which can be interpreted as an ontological model. These pertain to the $\pp\TT^1\MM$ scenario. We use $\xi^i,\mu^i$ and $\Gamma^i$ to denote the $i$th response function, epistemic state and transformation matrix respectively. The sets of effects, preparations and transformations are given by $\mathcal{E},\mathcal{P}$ and $\mathcal{T}$.
\begin{flalign*}
    C_{[\e]} &= \begin{pmatrix}
        \xi^1(\lambda_1) & \xi^1(\lambda_2) & \dots & \xi^1(\lambda_s) \\
        \xi^2(\lambda_1) & \xi^2(\lambda_2) & \dots & \xi^2(\lambda_s) \\ \vdots & \vdots & \dots & \vdots \\
        \xi^{|\mathcal{E}|}(\lambda_1) & \xi^{|\mathcal{E}|}(\lambda_2) & \dots & \xi^{|\mathcal{E}|}(\lambda_s)
    \end{pmatrix}\times\\
    & \sum_i
        \begin{pmatrix}
        \Gamma^1(\lambda_1|\lambda_i)\mu^1(\lambda_i) & \Gamma^2(\lambda_1|\lambda_i)\mu^1(\lambda_i) & \cdots & \Gamma^{|\mathcal{T}|}(\lambda_1|\lambda_i)\mu^1(\lambda_i)& \Gamma^1(\lambda_1|\lambda_i)\mu^2(\lambda_i) & \cdots\\ \Gamma^1(\lambda_2|\lambda_i)\mu^1(\lambda_i) & \Gamma^2(\lambda_2|\lambda_i)\mu^1(\lambda_i) & \cdots & \Gamma^{|\mathcal{T}|}(\lambda_2|\lambda_i)\mu^1(\lambda_i)& \Gamma^1(\lambda_2|\lambda_i)\mu^2(\lambda_i) & \cdots
        \ \\
        \vdots  & \vdots & \cdots & \vdots & \vdots \\  \Gamma^1(\lambda_{s-1}|\lambda_i)\mu^1(\lambda_i) & \Gamma^2(\lambda_{s-1}|\lambda_i)\mu^1(\lambda_i) & \cdots & \Gamma^{|\mathcal{T}|}(\lambda_{s-1}|\lambda_i)\mu^1(\lambda_i)& \Gamma^1(\lambda_{s-1}|\lambda_i)\mu^2(\lambda_i) & \cdots\\ \Gamma^1(\lambda_s|\lambda_i)\mu^1(\lambda_i) & \Gamma^2(\lambda_s|\lambda_i)\mu^1(\lambda_i) & \cdots & \Gamma^{|\mathcal{T}|}(\lambda_s|\lambda_i)\mu^1(\lambda_i)& \Gamma^1(\lambda_s|\lambda_i)\mu^2(\lambda_i) & \cdots
    \end{pmatrix}
\end{flalign*}
That is,
\begin{equation*}
    C_{[\e]}=RE'
\end{equation*}
where $R$ contains the response functions, and $E'$ contains epistemic states describing the augmented set of preparations, produced by applying every transformation in $\mathcal{T}$ to every preparation in the original set $\mathcal{P}$. The entries of $C_{[\e]}$ are simply the probabilities of the events described by $R$ occurring in each of the preparations in the augmented set. Next, we have

\begin{equation*}
    C_{[\pp]} = 
     \sum_i
        \begin{pmatrix}
        \xi^1(\lambda_i)\Gamma^1(\lambda_i|\lambda_1) & \xi^1(\lambda_i)\Gamma^1(\lambda_i|\lambda_2) & \cdots & \xi^1(\lambda_i)\Gamma^1(\lambda_i|\lambda_s)\\ \xi^1(\lambda_i)\Gamma^2(\lambda_i|\lambda_1) & \xi^1(\lambda_i)\Gamma^2(\lambda_i|\lambda_2) & \cdots  &\xi^1(\lambda_i)\Gamma^2(\lambda_i|\lambda_s)
        \ \\
        \vdots  & \vdots & \cdots & \vdots & \vdots \\  \xi^1(\lambda_i)\Gamma^{|\mathcal{T}|}(\lambda_i|\lambda_1) & \xi^1(\lambda_i)\Gamma^{|\mathcal{T}|}(\lambda_i|\lambda_2) & \cdots & \xi^1(\lambda_i)\Gamma^{|\mathcal{T}|}(\lambda_i|\lambda_s)\\ \xi^2(\lambda_i)\Gamma^1(\lambda_i|\lambda_1) & \xi^2(\lambda_i)\Gamma^1(\lambda_i|\lambda_2) & \cdots & \xi^2(\lambda_i)\Gamma^1(\lambda_i|\lambda_s) \\
        \vdots & \vdots & \cdots & \vdots
    \end{pmatrix}
\end{equation*}
\begin{equation*}
     \hspace{2cm}\times\begin{pmatrix}
        \mu^1(\lambda_1) & \mu^2(\lambda_1) & \dots & \mu^{|\mathcal{P}|}(\lambda_1) \\
        \mu^1(\lambda_2) & \mu^2(\lambda_2) & \dots & \mu^{|\mathcal{P}|}(\lambda_2) \\ \vdots & \vdots & \dots & \vdots \\
        \mu^1(\lambda_s) & \mu^2(\lambda_s) & \dots & \mu^{|\mathcal{P}|}(\lambda_s)
    \end{pmatrix}
\end{equation*}
That is,
\begin{equation*}
    C_{[\pp]}=R'E
\end{equation*}
where $R'$ describes the augmented set of effects, calculated by applying every transformation \textit{before} every effect in the original set $\mathcal{E}$. The entries of $C_{[\pp]}$ are simply the probabilities of the events described in this augmented set occurring in any of the original preparations in $\mathcal P$. Finally, 

\begin{flalign*}
&C_{[\TT^1]}=
    \begin{pmatrix}
        \Gamma^1(\lambda_1|\lambda_1) & \cdots & \Gamma^1(\lambda_1|\lambda_s) &  \Gamma^1(\lambda_2|\lambda_1) &\cdots \\
        \Gamma^2(\lambda_1|\lambda_1) & \cdots & \Gamma^2(\lambda_1|\lambda_s) &  \Gamma^2(\lambda_2|\lambda_1) &\cdots \\
        \vdots & & \vdots & \vdots
    \end{pmatrix} \label{eq:T-matrix}
    \\
    &\times \begin{pmatrix}
        \xi^1(\lambda_1)\mu^1(\lambda_1) & \xi^1(\lambda_1)\mu^2(\lambda_1) & \cdots & \xi^1(\lambda_1)\mu^{|\mathcal{P}|}(\lambda_1)& \xi^2(\lambda_1)\mu^1(\lambda_1) & \cdots\\
        \vdots & \vdots & \cdots & \vdots & \vdots \\
        \xi^1(\lambda_1)\mu^1(\lambda_s) & \xi^1(\lambda_1)\mu^2(\lambda_s) & \cdots & \xi^1(\lambda_1)\mu^{|\mathcal{P}|}(\lambda_s)& \xi^2(\lambda_1)\mu^1(\lambda_s) & \cdots \\
        \xi^1(\lambda_2)\mu^1(\lambda_1)  & \xi^1(\lambda_2)\mu^2(\lambda_1) & \cdots & \xi^1(\lambda_2)\mu^{|\mathcal{P}|}(\lambda_1)& \xi^2(\lambda_2)\mu^1(\lambda_1) & \cdots  \\
        \vdots & \vdots & \cdots & \vdots & \vdots
    \end{pmatrix}
\end{flalign*}
That is,
\begin{equation*}
    C_{[\TT^1]}=G^1(R^\top\otimes E)
\end{equation*}
where $G^1$ is the matrix of row-flattened transformation matrices. Given a preparation and an event, an entry of $C_{[\TT^1]}$ can be interpreted as the probability of the measurement outcome conditioned on a given transformation.

\end{document}